\def\Prod{\Pi}
\def\U{\mathcal{U}}
\def\R{\mathcal{R}}
\def\Rmon{\mathcal{R}^{\mbox{mon}}}
\def\Na{\mathbf{N}}
\def\W{\Omega}
\def\Li{\text{Li}}
\def\Ls{\text{Ls}}
\def\ep{\varepsilon}
\def\os{\varnothing}
\def\Re{\mathbf{R}}
\def\Qe{\mathbf{Q}}
\def\Na{\mathbf{N}}
\def\Rbmon{\overline{\mathcal{R}}^{\mbox{mon}}}
\def\Rbls{\overline{\mathcal{R}}^{\mbox{ls}}}
\def\calv{\mathcal{V}}
\newcommand{\df}[1]{\textit{#1}}
\newcommand{\norm}[1]{\| #1 \|}
\newcommand{\abs}[1]{ \left | #1 \right | }
\newdimen\slantmathcorr
\def\oversl#1{%assuming that mathslant=0.25
\setbox0=\hbox{$#1$}
\slantmathcorr=\wd0
\hskip 0.2\slantmathcorr \overline{\hbox to 0.8\wd0{%
\vphantom{\hbox{$#1$}}}}
\hskip-\wd0\hbox{$#1$}
}
\def\undersl#1{%assuming that mathslant=0.25
\setbox0=\hbox{$#1$}
\slantmathcorr=\wd0
\underline{\hbox to 0.8\wd0{%
\vphantom{\hbox{$#1$}}}}
\hskip-0.8\wd0\hbox{$#1$}
}
\def\ul{\underline}
\theoremstyle{plain}
\newtheorem{theorem}{Theorem}%[section]
\newtheorem{proposition}[theorem]{Proposition}
\newtheorem{lemma}[theorem]{Lemma}
\theoremstyle{definition}
\newtheorem{definition}[theorem]{Definition}
\newtheorem{example}[theorem]{Example}
\theoremstyle{remark}
\newtheorem*{claim*}{Claim}
\newtheoremstyle{named}{}{}{\itshape}{}{\bfseries}{.}{.5em}{#1\thmnote{
    #3}}
\theoremstyle{named}
\newtheoremstyle{named2}{}{}{\itshape}{}{\bfseries}{:}{.5em}{#1\thmnote{
    #3}}
\theoremstyle{named2}
\begin{document}
\title[Preference Identification]{Preference Identification}% \\ \textit{\small\insertprelim{}}}

\author[Chambers]{Christopher P. Chambers}
\author[Echenique]{Federico Echenique}
\author[Lambert]{Nicolas S. Lambert}

\address[Chambers]{Department of Economics, Georgetown University}
\address[Echenique]{Division of the Humanities and Social Sciences,
  California Institute of Technology}
\address[Lambert]{Graduate School of Business, Stanford University}

%\author[Chambers, Echenique and Lambert]{Christopher P. Chambers \and Federico Echenique \and Nicolas S. Lambert}

\thanks{We thank seminar audiences at Boston College, Boston University, Brown, Essex, LSE, Maryland, Johns Hopkins, George Mason, Ohio State, UCL, and audiences of conferences and workshops at Paris School of Economics, UPenn, Warwick and University of York. Echenique thanks the National Science Foundation for its support through the grants SES 1558757 and  CNS 1518941.
Lambert gratefully acknowledges the financial support and hospitality of Microsoft Research New York and the Yale University Cowles Foundation.}
\date{First version: April 10, 2017; This version: January 12, 2018.}
\begin{abstract}
An experimenter seeks to learn a subject's preference relation.  The experimenter produces pairs of alternatives. For each pair, the subject is asked to choose. We argue that, in general, large but finite data do not give close approximations of the subject's preference, even when the limiting (countably infinite) data are enough to infer the preference perfectly. We provide sufficient conditions on the set of alternatives, preferences, and sequences of pairs so that the observation of finitely many choices allows the experimenter to learn the subject's preference with arbitrary precision. While preferences can be identified under our sufficient conditions, we show that it is harder to identify utility functions. We illustrate our results with several examples, including consumer choice, expected utility, and preferences in the  Anscombe-Aumann  model.
\end{abstract}
\maketitle

\section{Introduction}

Consider a subject who forms a preference over the objects, or alternatives, in some set $X$. The subject participates in an experiment in which he is presented with a sequence of pairs of alternatives. For each pair, the subject is asked to choose one of the two alternatives offered.
What can an experimenter learn about the subject's preference from observing these binary comparisons?
Suppose that, after every observation, the experimenter computes an estimate of the subject's preference consistent with the data observed up that point: the experimenter chooses a preference rationalizing the choices made by the subject. Is the estimate a good approximation of the subject's underlying preference, for a large but finite experiment?

In this paper, we investigate the asymptotic behavior of preference estimates from finite experiments. We ask if one can fully identify the preference of a subject at the limit with finite data. It is a question of preference identification, in the classical sense of the term.\footnote{
Standard decision-theoretic language reserves the term \emph{identified} for a relation between preference and utility. In that context, a model is identified if every preference relation is represented by a unique (up to some class of transformations) set of parameters.  Thus, identification in this sense requires the knowledge of an entire preference relation. In this paper, we do not assume knowledge of the entire preference relation. Instead, we ask if one can learn the entire preference relation with a possibly large, but nonetheless finite data set. We discuss the identification of utility functions in Section~\ref {sec:utilityv1}, and the relation to decision theory in Section~\ref{sec:utilitydiscussion}.}

To illustrate the key issues, consider the following example.
Let $X \subseteq \Re^n$ represent a set of consumption bundles. The subject has a preference, denoted by $\succeq^*$, over the elements of $X$.  Over time, the subject is asked to choose an alternative from sets $B_k=\{x_k,y_k\}$, where $k$ is the time index. Together, the sets $B_1, B_2, \dots, B_k$ form a \emph{finite experiment.}
The experimenter observes the subject's choice of bundle for every pair. Assume the choice is consistent with the subject's preference, so that if $x$ is chosen over $y$, then $x \succeq^* y$.
Note that we can only, at best, infer the preference of the subject on the set $B \equiv \cup_{k=1}^\infty B_k$. Thus, if the subject's preference behaves very differently outside of the set $B$, there is no hope to obtain a fine approximation of the subject's preference over the entire set $X$. Two natural conditions emerge. First, we require that $\succeq^*$ be continuous, so one can hope to approximate the preference from finitely many samples. Second, we require that the set $B$ is dense in $X$, so that the observations are sufficiently spread out. And indeed, we show that, under these conditions, if one can observe the preference of the subject over the whole set $B$, then one can infer precisely $\succeq^*$ on $X$.

The two conditions, continuity of $\succeq^*$ and denseness of $B$, are, however, {\em not} enough to provide good approximations of $\succeq^*$ from {\em finitely} many observations. Knowledge of the preference over the infinite set $B$ allows the experimenter to exploit the continuity assumption on the subject's preference. With finite data, continuity does not have enough bite.
To illustrate, take $X=[0,1]$, and suppose that the preference of the subject over $X$ is captured by the binary relation $\geq$ (greater numbers are always preferred). Consider the countable set of objects $B=\Qe \cap (0,1)$, and $B_1, B_2, \dots$ an enumeration of pairs of objects of $B$.
Then, any continuous preference that agrees with $\geq$ on $\Qe$ has 1 weakly preferred to 0. However, for any $n$, one can find a preference $\succeq_n$ that rationalizes the choices of the subject over $B_1,\dots,B_n$, and yet that ranks 0 strictly above 1. 

In fact, one can come up with an even more startling example: we show that no matter the subject's preference, the experimenter may end up inferring that the subject is indifferent among all alternatives (see Section~\ref{sec:motivation}). And yet, as in the example just described, she would be able to infer the subject's preference perfectly, had she access to the subject's preference over the infinite set $B$ all at once. The example exhibits a kind of discontinuity. With infinite data in the form of $B$, we must conclude that $x\succ y$, but any finite data cannot rule out that $y\succeq x$.

These examples illustrate the dangers of data-driven estimation. Non-parametric  estimation with finite data can behave very differently from estimation with infinite (even countable) data. To derive meaningful estimates, one must construct a theory that disciplines the preferences, and lays down the proper conditions for convergence of preference estimates.

This paper includes three sets of results.

Our first and foremost results concern non-parametric estimation. We offer fairly general conditions so that observing sufficiently many binary choices allows one to approximate the subject's preference arbitrarily closely with \emph{any preference that rationalizes the finite data}.

We provide two notions of rationalization, a weak and a strong one.
Under strong rationalization, a rationalizing preference must reflect choices perfectly. So if one alternative is chosen over another, the preference must rank the first strictly above the second. Under weak rationalization, the first alternative must only be ranked at least as good as the second. Weak alternatives reflect the phenomenon of {\em partial observability} \citep{CES} whereby one cannot infer anything from a choice that was not made.

Under both notions of rationalization, it is necessary to add structure on the environment and on the rationalizing preferences so as to avoid the negative results of the example above. Importantly, we need a notion of {\em objective} rationality expressed by the monotonicity of preferences.
We postulate an exogenous partial ordering of the set of alternatives---for example, standard vector dominance when the set of alternatives represents consumption bundles, or stochastic dominance when it is the set of lotteries over monetary amounts---and we require that the subject's preference is monotonic with respect to that exogenous order.

With this structure, finite-experiment rationalizable preferences converge to the subject's underlying preference, under conditions that are consistent with many applications in decision theory, and with experimental implementations of decision theoretic models. Stronger conditions are needed to obtain the result for weak rationalization---conditions that hold for preferences over Euclidean spaces, but rule out some common applications in decision theory---yet it is remarkable that convergence is at all attained under weak rationalization. By remaining agnostic about choices that were not made, we are inferring a lot less about the subject's preferences under the assumption of weak rationalization than under strong.

Our results on preference identification are relevant to a wide range of contexts. For concreteness, we illustrate their application to the special case of preferences over lotteries, dated rewards, consumption bundles, and Anscombe-Aumann acts \citep{anscombe}. In all these cases, there is a natural objective  partial order and monotonicity is a sensible assumption. There are other environments in which one cannot reasonably impose any kind of monotonicity. For instance, in the literature on discrete allocation (such as in \citet{hylland1979efficient} or in the recent literature on school choice, such as in \citet{sethuraman2011}) in which agents are assumed to choose among lotteries over finitely many heterogeneous objects. Monotonicity would require that all agents agree on a ranking of the discrete objects that are being allocated, an unreasonable requirement. 
%Even when monotonicity is reasonable, technical considerations limit the scope of application. For instance, we can deal with preferences on $\Re^n$ and with preferences on lotteries over $\Re$, but not with preferences over lotteries over $\Re^n$.

Our second set of results concerns the identification of utility functions. Given a utility representation for the agent's preference, we show that it is possible to carefully select finite-data utility rationalizations so as to approximate the subject's utility arbitrarily closely. This result also rests on monotonicity assumptions. However, there is a clear difference between estimating preferences and utilities. While any preference estimate converges to the true underlying preference, for utilities we only know that a certain selection converges. This observation is especially relevant when estimating utilities of a particular functional form. There is no guarantee that such utility estimates have the correct asymptotic behavior; one can only say that the preferences that these utilities represent do.

Our third and final results concern the identification of preferences with infinite but countable data. We show that, when the experimenter has access to the preference of the subject over all alternatives of a countable set, then it is possible to recover perfectly the subject's preference over the entire set of alternatives $X$ under much weaker conditions than above. We further demonstrate that, under such conditions, the experimenter can, in theory, obtain the subject's preference directly from the observation of a single choice of the subject when the subject is asked to select an object among a large, infinite set.

The remainder of the paper proceeds as follows. After reviewing the literature, we describe the model in Section~\ref{sec:model}. We present our main results for important cases of collections of alternatives in Section~\ref{sec:results} and discuss these results in Section~\ref{sec:discussion}. In Section~\ref{sec:mainresults}, we present our main results for general collections of alternatives. In Section~\ref{sec:utility}, we study the relation between preference and utility, and provide conditions under which the identification of a preference makes it possible to identify a utility, and conversely. In Section~\ref{sec:expectedutility}, we further show that when the set of possible utilities is compact, one can obtain a strong form of identification, which dispenses with the postulate of existence of a data-generating preference. We discuss the question of preference identification with infinite but countable data in Section~\ref{sec:infinitedata}. Finally, in Section~\ref{sec:discusspreference}, we offer interpretations on the meaning of a data-generating preference. We relegate the proofs and more technical results to the appendices (some of these results may be of independent interest).

\subsection*{Literature Review}
Experimentalists and decision theorists have an obvious interest in preference  estimation, but we are not aware of any study of the behavior of preference estimates from finite experiments. The long tradition of revealed preference theory from finite data that goes back to \cite{afriat} is focused on testing, not estimation.
The closest work to ours is \cite{mascolell78}, who works with finitely many observations from a demand function over a finite number of goods. Mas-Colell assumes  a rational demand function that satisfies a boundary condition and is ``income Lipschitzian.'' He assumes
a sufficiently rich sequence of observations, taken from an increasing sequence of budgets. He then shows that the sequence of rationalizing preferences, each rationalizing a finite (but increasing) set of observations, converges to the unique preference that rationalizes the demand function.

There are many differences between Mas-Colell's exercise and ours, even if one restricts attention to choice over bundles of finitely-many, divisible, consumption goods. In particular, the difference in model primitives---demand instead of binary comparisons---is crucial. One cannot generally use choice from linear budgets to recreate any given binary comparison. Moreover, there is no property analogous to the boundary and Lipschitz continuity of demand in our framework. Indeed, as shown in \cite{mascolell77}, by means of an example due to Lloyd Shapley, without these properties, preferences are not identified from demand.\footnote{Shapley's example also appears in \cite{raderbook}. The example poses no problem for identification in our framework of binary comparisons. It generates non-identification of demand because two preferences have the same marginal rate of substitution at the sampled points. With binary comparisons, the differences between two such preferences are detected.} In Mas-Colell's paper, weak and strong rationalizability coincide, as he works with demand functions. In this paper, we are particularly interested in partial observability.

Also working with demand functions, the recent stream of literature by \cite{reny2015characterization}, \cite{kubler2015identification} and  \cite{polemarchakis2016identification}  provide results on the limiting behavior of finite-data utility rationalizations. These papers focus on the convergence of certain utility constructions that rationalize finite demand data. In contrast, our main results are about the convergence of (any) rationalizing preferences. There are also important differences between the primitives assumed in our paper and the demand functions assumed in these papers.

The recent paper by \cite{gorno} also looks at the identification of preferences from abstract choice behavior. A clear difference between Gorno's exercise and ours is that we consider the limiting behavior of large, but finite, experiments. His paper deals with preference identification from a given choice behavior on a fixed choice set. While the two papers are concerned with related questions, the exercises are quite different and the results are not related.

Finally, on the technical level, we use the topology on preferences introduced by \cite{HILDENBRAND1970161} and \cite{kannai1970continuity}, building on the work of \cite{debreu}. In our study of the mapping from utility to preference, we borrow ideas from \cite{mas1974continuous} and \cite{border1994dynamic}. In particular, the proof of the continuity of the ``certainty equivalent'' representation is analogous to Mas-Colell's, and we take the notion of local strictness from Border and Segal, as well as their continuity result.

\section{Model}\label{sec:model}

In this section, we introduce the definitions and conventions used throughout the paper, and present our main model.  Our focus in this section is on two classical environments; namely, consumption space and Anscombe-Aumann acts.

\subsection{Basic definitions and notational conventions}

Let $X_i$ be a set partially ordered by $\geq_i$, for $i=1,\ldots,n$.\footnote{A partial order is reflexive, transitive, and anti-symmetric, while a strict partial order is irreflexive, transitive, and asymmetric.}
If $x,y\in \Prod_{i=1}^n X_i$, then $x\geq y$ means that $x_i\geq_i y_i$ for $i=1,\ldots, n$; and $x> y$ that $x\geq y$ and $x\neq y$. We write $x \gg y$ when $x_i>_i y_i$ for $i=1,\ldots, n$. The order $\geq$ on $\Prod_{i=1}^n X_i$ is called the \df{product order}.\footnote{Then $\geq$ is a partial order, and each of $>$ and $\gg$ are strict partial orders.}
The \df{interval} $[a,b]$ in $\Prod_{i=1}^n X_i$ denotes the set $\{z\in\Prod_{i=1}^n X_i:b\geq z\geq a\}$. An \df{open interval} $(a,b)$ denotes the set $\{z\in \Prod_{i=1}^n X_i:b\gg z\gg a\}$.
When $X_i=\Re$, and $\geq_i$ is the usual order on the real numbers, the above definitions constitute the familiar ordering on $\Re^n$, as well as the usual notions of intervals and open intervals.

If $A\subseteq \Re$ is a Borel set, we write $\Delta(A)$ for the set of all Borel probability measures on $A$. We endow $\Delta(A)$ with the weak* topology.
For $x,y\in\Delta(A)$, we write $x\geq_{FOSD} y$ when $x$ is larger than on $y$ in the sense of \df{first order stochastic dominance} (meaning that $\int_A fdx\geq \int_A fdy$ for all monotone increasing, continuous and bounded functions $f$ on $A$).  When $\Omega$ is a finite set,
we shall use the above definitions to order $\Delta(A)^\Omega$ by the product order defined from ordering $\Delta(A)$ by first-order stochastic dominance.

For an integer $n$, $[n]$ denotes the set $\{1,\ldots,n \}$. So $\Delta([n]) = \{x\in\Re^n_+:\sum_{i=1}^nx_i=1\}$ is the \df{simplex} in $\Re^n$.

\subsubsection{Preference relations}
Let $X$ be a set.  Given a binary relation $R\subseteq X\times X$, we write $x \mathrel{R} y$ when $(x,y)\in R$. And we say that a function $u:X\rightarrow\Re$ \df{represents} $R$ if $x \mathrel{R} y$ iff $u(x)\geq u(y)$.
A \df{preference}, or \df{preference relation}, is a weak order; \emph{i.e.} a binary relation over $X$ which is complete and transitive..

For a partial order $\geq$ on $X$, a preference $\succeq$ on $X$ is \df{weakly monotone (with respect to $\geq$)} if $x \geq y$ implies that $x \succeq y$. For a strict partial order $>$ on $X$, a preference $\succeq$ on $X$ is \df{strictly monotone (with respect to $>$)} if $x > y$ implies that $x \succ y$.\footnote{The strict part of a partial order is a strict partial order, but sometimes we are interested in other partial orders.  For example, $\gg$ is not the strict part of $\geq$, but we use it later.}

A binary relation $R$ is \df{continuous} if $R\subseteq X\times X$ is closed (see, for example, \citealp{bergstrom1976}).

The set of continuous binary relations over $X$, when $X$ is a topological space, is endowed with the \df{topology of closed convergence}, we provide a definition in Section~\ref{sec:convpref}. It is the natural topology for our purposes because it is the weakest topology for which optimal behavior is continuous.

Under the assumptions of our paper, the topology of closed convergence is the smallest topology for which the sets
\[
	\{ (x,y,\succeq) : x\succ y\}
\]
are open (see \citealp{kannai1970continuity},  Theorem 3.1). So it is the weakest topology with the following property:
Suppose that a subject with preferences $\succeq$ chooses $x$ over $y$ ($x\succ y$), and that another subject with sufficiently close preferences $\succeq'$ face a choice between $x'$ and $y'$, where $(x',y')$ is sufficiently close to $(x,y)$, then the second subject must choose $x'$ over $y'$ ($x'\succ' y'$).  In other words, the optimal behavior according to $\succeq'$ approximates the optimal behavior according to $\succeq$. In this sense, the topology that we impose on preferences is natural in any investigation of optimal choice.

\subsubsection{Choice functions}

A pair  $(\Sigma,c)$ is a \df{choice function} if  $\Sigma\subseteq 2^X\setminus\{\os\}$ is a collection of nonempty subsets of $X$, and $c:\Sigma\to 2^X$ with $\os\neq c(A)\subseteq A$ for all $A\in\Sigma$. When $\Sigma$, the domain of $c$, is implied, we refer to $c$ as a choice function.

A choice function $(\Sigma,c)$ is \df{generated} by a preference relation $\succeq$ if \begin{equation*}
	c(A)= \{x\in A : x\succeq y \text{ for all } y\in B \},
\end{equation*}
for all $A \in \Sigma$.

The notation $(\Sigma,c_{\succeq})$ means that the choice function $(\Sigma,c_{\succeq})$ is generated by the preference relation $\succeq$ on $X$.

\subsection{The model.}

There is an experimenter (a female) and a subject (a male). The subject chooses among alternatives in a set $X$ of possible alternatives. The subjects' choices are guided by a preference $\succeq^*$ over $X$, which we refer to as data-generating preference. The experimenter seeks to infer $\succeq^*$ from the subject's choices in a finite experiment.

In a finite experiment, the subject is presented with finitely many unordered pairs of alternatives $B_k=\{x_k,y_k\}$  in $X$. For every pair $B_k$, the subject is asked to choose one of the two alternatives: $x_k$ or $y_k$.

A \df{sequence of experiments} is a collection  $\Sigma_{\infty} = \{B_i\}_{i \in \Na}$ of pairs of possible choices presented to the subject. Let $\Sigma_k=\{B_1, \dots, B_k\}$, and let $B = \cup_{k=1}^\infty B_k$ be the set of all alternatives that are used over all the experiments.

We make two assumptions on $\Sigma_\infty$. The first is that $B$ is dense in $X$. The second is that, for any $x,y\in B$ there is $k$ for which $B_k=\{x,y\}$. The first assumption is obviously needed to obtain any general identification result (see Section~\ref{sec:denseness}). The second assumption means that the experimenter is able to elicit the subject's choices over all pairs used in her experiment.\footnote{If there is a countable dense $A\subseteq X$, then one can always construct such a sequence of experiments via a standard diagonalization argument.}

\subsubsection{The data and its rationalizations.}

For each $k$, the subject's preference
$\succeq^*$ generates a choice function $(\Sigma_k,c_{\succeq^*})$. Thus the choice behavior observed by the experimenter is always consistent with $(\Sigma_k,c_{\succeq^*})$. We term $(\Sigma_k,c_{\succeq^*})$ the \df{choice function of order $k$ generated by $\succeq^*$}, and we term the choice function $(\Sigma_\infty,c_{\succeq^*})$ the \df{choice sequence of order $k$ generated by $\succeq^*$}.

Sometimes, we may not be able to observe the subject's entire choice function.  In the spirit of \citet{afriat}, we want to allow for the possibility that the subject may in principle be willing to choose $x$, but does not actually choose it.  In the language of \citet{CES}, we want to study the concept of \df{partial observability}.  To this end, a general choice function $(\Sigma_{\infty},c)$ is termed a \df{choice sequence} and this induces, for every $k$, a choice function on $\Sigma_k$.

For a choice function $c$ and a preference $\succeq^*$, we use the notation $c\sqsubseteq c_{\succeq^*}$ to mean that for each budget $B_k$, $c(B_k) \subseteq c_{\succeq^*}(B_k)$; \emph{i.e.}, the observed choices from $B_k$ are optimal for $\succeq^*$.   

In the context of partial observability, the notion of rationalization needs to accommodate the fact that some preference maximal alternatives may not have actually been chosen.  The next concept captures such an accommodation; and is again in the spirit of \citet{afriat}.

A preference $\succeq_k$ \df{weakly rationalizes} $(\Sigma_k,c)$  if, for all $B_i\in \Sigma_k$,  $c(B_i)\subseteq c_{\succeq_k}(B_i)$. A preference $\succeq_k$ weakly rationalizes a choice sequence $(\Sigma_\infty,c)$ if it rationalizes the choice function of order $k$ $(\Sigma_k,c)$, for all $k \ge 1$.

The following concept is analogous to the notion of rationalization discussed in \citet{richter}, and is the appropriate notion when it is known that all potentially chosen alternatives are actually chosen.

A preference $\succeq_k$ \df{strongly rationalizes} $(\Sigma_k,c)$  if, for all $B_i\in \Sigma_k$,  $c(B_i)= c_{\succeq_k}(B_i)$. A preference $\succeq_k$ strongly rationalizes a choice sequence $(\Sigma_\infty,c)$ if it rationalizes the choice function of order $k$ $(\Sigma_k,c)$, for all $k \ge 1$.

\nopagebreak[1000000]
\section{Results}\label{sec:results}
\nopagebreak[1000000]
\subsection{Motivation}\label{sec:motivation}
\nopagebreak[1000000]
Many results on identification in economics presume access to rich information. In decision theory, the presumption is that one can observe enough of the subject's choices so as to effectively know  the subject's preference $\succeq^*$. In this section, we point to some problems with this assumption.

Let $\succeq^I=X\times X$ denote the degenerate preference relation that regards any two alternatives as indifferent.

\begin{proposition}\label{prop:indiff} Let $X=[a,b]\subseteq\Re^n$, where $a\ll b$. Let the subject's preference $\succeq^*$ be continuous. There exists a continuous preference $\succeq_k$ that strongly rationalizes the choice function of order $k$ generated by $\succeq^*$, and such that $\succeq_k \rightarrow \succeq^I$.
\end{proposition}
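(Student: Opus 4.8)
The plan is to construct, for each $k$, a continuous utility function $u_k : X \to [0,1]$ that (i) agrees ordinally with $\succeq^*$ on the finite sample set $P_k := B_1 \cup \cdots \cup B_k$, yet (ii) oscillates on an ever-finer spatial scale, so that both its ``near-top'' level set $\{u_k \ge 2/3\}$ and its ``near-bottom'' level set $\{u_k \le 1/3\}$ become dense in $X$ as $k \to \infty$. I will then take $\succeq_k$ to be the preference represented by $u_k$, which is automatically a continuous weak order. Property (i) is exactly what makes $\succeq_k$ strongly rationalize the order-$k$ data: for each tested pair $B_i = \{x_i,y_i\} \subseteq P_k$ with $i \le k$, the $\succeq_k$-maximal and the $\succeq^*$-maximal element of $\{x_i,y_i\}$ coincide, since $u_k$ and $\succeq^*$ induce the same order on this pair. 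Property (ii) will drive $\succeq_k \to \succeq^I = X\times X$.

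For the convergence, note that since $X = [a,b]$ is compact, $\succeq_k \to \succeq^I$ in the topology of closed convergence reduces to $X\times X \subseteq \Li \succeq_k$, because $\Ls \succeq_k \subseteq X \times X$ trivially. So I must show that for every $(x,y) \in X\times X$ and every $\rho > 0$, the relation $\succeq_k$ eventually meets the $\rho$-ball about $(x,y)$. This follows from (ii) in the quantified form: there is a sequence $\ep_k \to 0$ such that $\{u_k \ge 2/3\}$ and $\{u_k \le 1/3\}$ are each $\ep_k$-dense in $X$. Indeed, for $k$ with $\ep_k < \rho$, pick $\tilde x$ within $\ep_k$ of $x$ with $u_k(\tilde x) \ge 2/3$ and $\tilde y$ within $\ep_k$ of $y$ with $u_k(\tilde y) \le 1/3$; then $u_k(\tilde x) > u_k(\tilde y)$, so $\tilde x \succeq_k \tilde y$, and $(\tilde x,\tilde y)$ lies in the $\rho$-ball about $(x,y)$.

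The construction of $u_k$ is the crux. Fix $k$ and let $v_k : P_k \to (0,1)$ represent the finite weak order $\succeq^*|_{P_k}$. Partition $[a,b]$ into a cubical grid of mesh $m_k$, with $m_k \to 0$ and $m_k$ small enough relative to $\min\{\norm{p-q}_\infty : p \ne q \in P_k\}$ (and in general position) that every $p \in P_k$ lies in the interior of its own cell, with all other sample points several cells away. On a cell $C$ with $C \cap P_k = \os$, let $u_k$ equal a ``tent'' that vanishes on $\partial C$ and attains every value in $[0,1]$ --- for instance $\min_i g(t_i)$, where $t_i \in [0,1]$ is the relative coordinate of the point in $C$ and $g(t) = 1 - \abs{2t-1}$. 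On the cell $C_p$ carrying a sample point $p$, let $u_k = v_k(p)\,\lambda_p$ with $\lambda_p$ continuous, $\lambda_p(p) = 1$ and $\lambda_p \equiv 0$ on $\partial C_p$; then $u_k(p) = v_k(p)$. Since every cell formula vanishes on the shared faces, these pieces paste to a continuous $u_k : X \to [0,1]$ with $u_k|_{P_k} = v_k$, which is (i). For (ii): the grid is fine enough that the at most $|P_k|$ sample-carrying cells are pairwise non-adjacent, so every point of $X$ has a sample-free cell within distance $O(m_k)$; on such a cell $u_k$ attains both $0$ and $1$, so $\{u_k \ge 2/3\}$ and $\{u_k \le 1/3\}$ are $O(m_k)$-dense, and we may take $\ep_k = O(m_k) \to 0$.

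I expect the main obstacle to be the design of $u_k$ itself --- reconciling the rigid requirement that $u_k$ reproduce $\succeq^*$ on the sampled points with the demand that $u_k$ be wild enough for its extreme level sets to fill $X$. The grid-with-bumps construction resolves the tension by making $u_k$ essentially arbitrary away from the finitely many cells pinned down by $P_k$; what remains is bookkeeping of a geometric kind: placing the grid in general position so that no sampled point lands on a cell boundary (where the pasting would force $u_k = 0 \ne v_k(p)$), and verifying that the sample-carrying cells stay sparse enough that the near-extreme level sets are $\ep_k$-dense with $\ep_k \to 0$.
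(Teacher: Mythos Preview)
Your proposal is correct and follows essentially the same strategy as the paper's proof: cover $X$ by a grid of cells shrinking with $k$, place each sample point alone in the interior of its own cell, define a continuous utility that matches a representation of $\succeq^*|_{P_k}$ at the sample points and vanishes on cell boundaries, and make the utility oscillate enough on the remaining cells that its near-extreme level sets become dense. The only cosmetic difference is that the paper lets the utility oscillate (with $\sup=2$, $\inf=-2$) even on sample-carrying cells, whereas you keep those cells tame and rely instead on the non-adjacency of sample cells to find a nearby oscillating cell; both devices yield the required $\ep_k$-density and hence $\succeq_k\to\succeq^I$.
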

The proof of Proposition~\ref{prop:indiff} is relegated to Appendix \ref{app:proof:proofofpropindiff}.

Proposition~\ref{prop:indiff} means that, absent further conditions, the sequence of rationalizations can be very different from the preference $\succeq^*$ generating the subject's choices. It is possible to choose rationalizations that converge to full indifference among all alternatives, regardless of which $\succeq^*$ really generated the subject's choices. The objective of our paper is to show how such problems can be avoided.

Proposition~\ref{prop:indiff} suggests another distinction. There is ``infinite data'' in the form of the data-generating preference $\succeq^*$, such data is commonly assumed in decision theory; there is
finite data, in the form of $(\Sigma_k,c_\succeq^*)$; and,  ``limiting data,'' which would be $\succeq^*|_B$, {\em i.e.,} the preference $\succeq^*$ restricted to domain $B$. With limiting data one {\em would} be able to identify $\succeq^*$. Indeed, we show in Section~\ref{sec:limitingdata} that if $\succeq|_B =\succeq^*|_B$ then $\succeq=\succeq^*$.

Therefore, Proposition~\ref{prop:indiff} illustrates a sort of discontinuity. If one only had access to limiting data, there would be no problem. However, with arbitrarily large, but finite, data, preference rationalizations can be completely wrong.

\subsection{Weak rationalizations} \label{sec:weakrat-v1}
We now present a series of simple sufficient conditions ensuring convergence of preference rationalizations to the subject's preference. The results are discussed in Section~\ref{sec:discussion}.

Let $X=\Re^n_+$. Recall that the strict partial order $\gg$ on $\Re^n_+$ refers to the relation $x\gg y$ if for each $i$, $x_i > y_i$ (\emph{i.e.} the product of $>$); strict monotonicity refers to this relation.

\begin{theorem}\label{thm:weakrat-v1} Let the subject's preference $\succeq^*$ be continuous and strictly  monotone.  Suppose that $c \sqsubseteq c_{\succeq^*}$.  For each $k\in\Na$,
let $\succeq_k$ be a continuous and strictly monotone preference that weakly rationalizes $(\Sigma_k,c)$. Then, $\succeq_k \rightarrow \succeq^*$.
\end{theorem}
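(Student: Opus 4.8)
The plan is to argue directly in the topology of closed convergence, using the standard characterization (\citealp{kannai1970continuity}): $\succeq_k\to\succeq^*$ holds if and only if $\Li(\succeq_k)=\Ls(\succeq_k)=\succeq^*$, where $\Li$ and $\Ls$ are the lower and upper Kuratowski limits of the relations viewed as closed subsets of $\Re^n_+\times\Re^n_+$. Since $\Li(\succeq_k)\subseteq\Ls(\succeq_k)$ always, it suffices to prove the two inclusions $\succeq^*\subseteq\Li(\succeq_k)$ and $\Ls(\succeq_k)\subseteq\succeq^*$. I would first record an elementary fact used throughout: a continuous preference on $\Re^n_+$ that is strictly monotone for $\gg$ is automatically weakly monotone for $\geq$, since $x\geq y$ implies $x+\tfrac1m\one\gg y$, hence $x+\tfrac1m\one\succ y$, and letting $m\to\infty$ and using closedness of the relation gives $x\succeq y$. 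This applies to $\succeq^*$ and to every $\succeq_k$. I also use that, since $\succeq^*$ is complete and closed, its strict part $\succ^*$ is open.

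For $\succeq^*\subseteq\Li(\succeq_k)$, the key step is that whenever $b_1,b_2\in B$ and $b_1\succ^* b_2$, one has $(b_1,b_2)\in\Li(\succeq_k)$: indeed $c(\{b_1,b_2\})=\{b_1\}$ because $c\sqsubseteq c_{\succeq^*}$ forces $c(\{b_1,b_2\})\subseteq c_{\succeq^*}(\{b_1,b_2\})=\{b_1\}$ and $c$ is nonempty-valued, so for every $k$ large enough that $\{b_1,b_2\}\in\Sigma_k$ weak rationalization yields $b_1\in c_{\succeq_k}(\{b_1,b_2\})$, i.e. $(b_1,b_2)\in\succeq_k$; the eventually-constant sequence $(b_1,b_2)$ then witnesses membership in $\Li(\succeq_k)$. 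Since $\Li(\succeq_k)$ is closed, it remains to show $\{(b_1,b_2):b_1,b_2\in B,\ b_1\succ^* b_2\}$ is dense in $\succeq^*$. Given $x\succeq^* y$ and $\ep>0$, monotonicity gives $x+\ep\one\gg x\succeq^* y$, hence $x+\ep\one\succ^* y$; openness of $\succ^*$ and density of $B$ yield $b_1\in B$ close to $x+\ep\one$ with $b_1\succ^* y$, and then $b_2\in B$ close to $y$ with $b_1\succ^* b_2$; letting $\ep\downarrow 0$ produces such pairs converging to $(x,y)$.

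The heart of the proof is the reverse inclusion $\Ls(\succeq_k)\subseteq\succeq^*$, and here strict monotonicity of the rationalizing preferences is essential (it is exactly what rules out the degenerate behavior of Proposition~\ref{prop:indiff}). Suppose $(x,y)\in\Ls(\succeq_k)$ but $y\succ^* x$; I derive a contradiction. By definition of $\Ls$ there are a subsequence $(k_j)$ and points $(x_j,y_j)\in\succeq_{k_j}$ with $(x_j,y_j)\to(x,y)$, so $x_j\succeq_{k_j}y_j$. Using openness of $\succ^*$ and density of $B$, pick $\ep>0$ small and $b',b\in B$ with $x\ll b'\ll x+\ep\one$, $y-\ep\one\ll b\ll y$, and $b\succ^* b'$ (for small $\ep$ the open interval $(y-\ep\one,y)$ is nonempty and $b\succ^* y-\ep\one\succ^* x+\ep\one\succ^* b'$). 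Then for $j$ large: $b'\gg x_j$, so strict monotonicity of $\succeq_{k_j}$ gives $b'\succ_{k_j}x_j$; $y_j\gg b$, so $y_j\succ_{k_j}b$; and once $\{b,b'\}\in\Sigma_{k_j}$, weak rationalization gives $b\succeq_{k_j}b'$. Chaining, $y_j\succ_{k_j}b\succeq_{k_j}b'\succ_{k_j}x_j$, contradicting $x_j\succeq_{k_j}y_j$.

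The step that requires extra care is the choice of $b\in B$ with $b\ll y$, which is possible only when $y$ lies in the interior of $\Re^n_+$; the main obstacle is the boundary case, where $y$ has a zero coordinate and no point of $B$ lies below it. I would handle this by first using weak monotonicity of $\succeq_{k_j}$ to replace $y_j$ by its projection onto the face of $\Re^n_+$ containing $y$, and then running the squeeze through interior points approximating $y$ from above; the substantive point is that strict monotonicity of the $\succeq_{k_j}$, combined with the density of $B$, bounds how sharply their indifference surfaces can bend near the boundary, so that the forced comparisons still propagate to $y$ in the limit. This boundary analysis is the delicate part; the rest is the bookkeeping above.
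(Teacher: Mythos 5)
Your overall route is genuinely different from the paper's. The paper obtains Theorem~\ref{thm:weakrat-v1} as a special case of Theorem~\ref{thm:weakrat}, whose proof is global rather than pointwise: the complete, continuous, strictly monotone relations form a compact set in the closed convergence topology (Lemma~\ref{lem:monotoneiscompact}); the sets of relations containing the observed comparisons up to time $k$ are nested and compact; and any limit point of the $\succeq_k$ lies in their intersection, which is pinned down to $\{\succeq^*\}$ by the local-strictness identification lemma (Lemma~\ref{lem:partialobservability}). No squeeze of a point by elements of $B$ from below is ever performed. The parts of your direct verification that you actually complete are correct and somewhat more elementary: the derivation of weak $\geq$-monotonicity from strict $\gg$-monotonicity plus continuity, the forcing of $b_1\succeq_k b_2$ for eventually offered pairs with $b_1\succ^* b_2$, the density of such pairs in $\succeq^*$ (the perturbation there is upward, so it works at the boundary too), and the interior case of $\Ls(\succeq_k)\subseteq\succeq^*$.

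The boundary case, however, is a genuine gap and not bookkeeping. If $y$ has a zero coordinate, denseness of $B$ in $\Re^n_+$ gives no point $b\in B$ with $b\leq y$ (for instance $B$ may lie entirely in the open orthant), so your squeeze has nothing to anchor below $y$ or below $y_j$; and neither of your suggested repairs works. Projecting $y_j$ down onto the face containing $y$ yields $\tilde y_j\leq y_j$, and monotonicity then gives $y_j\succeq_{k_j}\tilde y_j$, which points in the wrong direction for your chain: you need a data point that $y_j$ dominates, not one it merely dominates trivially and that carries no data. And the claim that strict monotonicity together with denseness of $B$ ``bounds how sharply indifference surfaces can bend near the boundary'' is false: for any finite data set contained in the open orthant one can construct a continuous, strictly monotone utility that coincides with a representation of $\succeq^*$ on a neighborhood of every data point offered so far, yet assigns to points in an arbitrarily thin strip along a boundary face values far out of line with their $\succeq^*$-ranking; no finite amount of interior data constrains the rationalization on such a strip, so no local argument of your type can propagate forced comparisons to a boundary point $y$. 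Closing this case requires a different kind of input: either the global compactness-plus-Lemma~\ref{lem:partialobservability} argument of Theorem~\ref{thm:weakrat}, in which the disagreement is transferred to a pair in $B\times B$ via openness of the two strict parts and local strictness, or an explicit richness condition on $B$ at the boundary in the spirit of the countable order property used for Theorem~\ref{thm:weaklymon}, which guarantees points of $B$ below every $x$ arbitrarily nearby and which mere denseness of $B$ in $\Re^n_+$ does not deliver on the boundary. As it stands, your proof establishes the theorem only on the interior of $\Re^n_+$.
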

Theorem~\ref{thm:weakrat-v1} can be generalized, as $\succeq^*$ and $\succeq_k$ do not need to be transitive. They only need to be continuous, strictly monotone, and complete.

\subsection{Strong rationalization}\label{sec:SMv1}
Suppose that $X$ is either
\begin{enumerate}
\item $\Re^n_+$,
\item or $\Delta([a,b])^\W$ for a finite set $\W$ and $[a,b]\subseteq\Re$.
\end{enumerate}

Recall that we topologize $\Delta([a,b])$ with its weak* topology, and $\Delta([a,b])^\W$ with the product topology.  In the case of $\Re^n_+$, the relation $\geq$ refers to the product of $\geq$ on each of the coordinates, and on $\Delta([a,b])^\W$, it is the product of $\geq_{FOSD}$.

\begin{theorem}\label{thm:weaklymon-v1} Let the subject's preference $\succeq^*$ be weakly monotone. For each $k\in\Na$, let
 $\succeq_k$ be a continuous and weakly monotone preference that strongly rationalizes the choice function of order $k$ generated by $\succeq^*$. Then, $\succeq_k \rightarrow \succeq^*$.
\end{theorem}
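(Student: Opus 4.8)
The plan is to prove $\succeq_k\to\succeq^*$ by verifying the two Kuratowski inclusions $\Ls \succeq_k\subseteq\succeq^*$ and $\succeq^*\subseteq\Li \succeq_k$ for the graphs inside $X\times X$; since $X$ (either $\Re^n_+$ or $\Delta([a,b])^{\W}$) is locally compact, second countable and Hausdorff, this is exactly convergence in the topology of closed convergence. The structural fact I would extract first is what strong rationalization gives: $\succeq_k$ strongly rationalizes $(\Sigma_k,c_{\succeq^*})$ precisely when, for every $i\le k$, the restriction of $\succeq_k$ to the pair $B_i$ agrees with the restriction of $\succeq^*$ to $B_i$. Since $\Sigma_\infty$ enumerates \emph{every} pair of elements of $B$, this yields: for each fixed $p,q\in B$ there is $N$ with $\succeq_k$ and $\succeq^*$ agreeing on $\{p,q\}$ for all $k\ge N$.

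For $\Ls \succeq_k\subseteq\succeq^*$, suppose $(a_j,b_j)\to(x,y)$ with $a_j\succeq_{k_j} b_j$ and $k_j\to\infty$, and suppose toward a contradiction that $y\succ^* x$. By continuity of $\succeq^*$ the set $\{(u,v):v\succ^* u\}$ is open and contains $(x,y)$, so by density of $B$ I fix \emph{one} pair $p,q\in B$ with $p$ just above $x$ and $q$ just below $y$ in the product order, close enough that $q\succ^* p$. For all large $j$ we have $a_j\le p$ and $q\le b_j$, so weak monotonicity of $\succeq_{k_j}$ forces $p\succeq_{k_j} a_j\succeq_{k_j} b_j\succeq_{k_j} q$, hence $p\succeq_{k_j} q$; but for $j$ large $\succeq_{k_j}$ agrees with $\succeq^*$ on the fixed pair $\{p,q\}$, giving $q\succ_{k_j} p$, a contradiction. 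At the boundary of $\Re^n_+$ (or where a coordinate of $y$ is the bottom point mass, or of $x$ the top one, in the Anscombe--Aumann case), ``just below $y$'' must mean perturbing only in coordinates with room, and one leans again on continuity of $\succeq^*$.

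For $\succeq^*\subseteq\Li \succeq_k$, take $(x,y)$ with $x\succeq^* y$ and build $(a_k,b_k)\to(x,y)$ with $a_k\succeq_k b_k$ for all large $k$. The clean case is $x\succ^* y$, or $x,y$ both interior: for each $m$ pick $p_m,q_m\in B$ within $1/m$ of $x$ and $y$ with $p_m\succeq^* q_m$ --- in the interior case by taking $p_m$ slightly above $x$ and $q_m$ slightly below $y$ and chaining $p_m\succeq^* x\succeq^* y\succeq^* q_m$ through weak monotonicity and transitivity of $\succeq^*$, and in the strict case directly from density plus openness of $\{(u,v):u\succ^* v\}$. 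Since $\{p_m,q_m\}$ is a fixed pair of $B$, there is $N_m$ with $p_m\succeq_k q_m$ for all $k\ge N_m$, and diagonalizing (use $(p_m,q_m)$ on a block of indices past $\max(N_1,\dots,N_m)$ that grows with $m$) gives the sequence.

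The main obstacle is the remaining case: $x\sim^* y$ with $x$ or $y$ on the boundary (the lower boundary of $\Re^n_+$, or a coordinate at an extreme point of $\Delta([a,b])$). Here $B$ need not contain $\succeq^*$-consistent pairs arbitrarily close to $(x,y)$: a flat indifference class of $\succeq^*$ running through the boundary can be disjoint from a dense $B$ that avoids the boundary, so the $B$-pair construction above breaks down. My approach would be to argue directly about $\succeq_k$ rather than about $B$-pairs: weak monotonicity of $\succeq_k$ lets one sandwich $(x,y)$ between $B$-points on which $\succeq_k$ already equals $\succeq^*$, and, as $k\to\infty$, the set of rationalization constraints becomes dense, so continuity of $\succeq_k$ drives its behaviour in a neighbourhood of $(x,y)$ toward that of $\succeq^*$; in the limit this manufactures pairs in the graph of $\succeq_k$ converging to $(x,y)$. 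Pinning this down --- quantifying how a finite (but growing) set of constraints, together with continuity and monotonicity, controls $\succeq_k$ near a boundary indifference --- is where the real work lies, and I expect it to require the concrete geometry of $\Re^n_+$ and of the simplex powers rather than an abstract argument.
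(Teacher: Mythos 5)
Your overall strategy --- reduce strong rationalization to ``$\succeq_k$ agrees with $\succeq^*$ on every pair of $B$ that has appeared by stage $k$,'' then verify the two Kuratowski inclusions by sandwiching via weak monotonicity --- is the same engine as the paper's proof, and your interior/strict-preference cases are fine. But the boundary case you flag as ``where the real work lies'' is a genuine gap, and the patch you sketch (continuity of $\succeq_k$ plus the growing, dense set of constraints should control behaviour near a boundary indifference) cannot work: continuity of each $\succeq_k$ has no uniform bite. Concretely, take $X=\Re^2_+$, $B=\Qe^2_{++}$ (dense, but avoiding the boundary), $\succeq^*$ represented by $u^*(z)=z_1+z_2$, and let $\da_k>0$ be smaller than every second coordinate appearing in the first $k$ pairs. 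Then $u_k(z)=z_1\min\{1,z_2/\da_k\}+z_2$ is continuous and weakly monotone, coincides with $u^*$ on all points of the first $k$ pairs (so $\succeq_k$ strongly rationalizes the order-$k$ choice function), yet every $\succeq_k$ is completely indifferent along the line $z_2=0$; hence $((0,0),(1,0))\in\succeq_k$ for all $k$, so $\Ls\succeq_k\not\subseteq\succeq^*$ and $\succeq_k\not\to\succeq^*$. So with only ``$B$ dense'' the boundary case is not just hard --- the statement fails, and no argument of the kind you propose can close it.

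What is missing is precisely the extra hypothesis the paper imposes on the experimental design: the \emph{countable order property} (for every $x\in X$ and every neighborhood of $x$ there are points of $B$ in that neighborhood lying weakly below and weakly above $x$), together with the \emph{squeezing property} used to handle arbitrary approximating sequences in the $\Ls$ step (your fixed pair $p\gg x$, $q\ll y$ device also breaks when $y$ is on the boundary of $\Re^n_+$, or when a coordinate of $x$ is $\delta_b$ in the Anscombe--Aumann case, for the same reason). The paper proves the general Theorem~\ref{thm:weaklymon} under these two properties --- its step ``$x\succeq^* y\Rightarrow x\succeq y$'' is exactly your chaining argument, but with $B$-points weakly above $x$ and weakly below $y$ supplied by the countable order property even at the boundary, and its strict-preference step replaces your fixed $(p,q)$ by monotone sequences sandwiching the given $(a_j,b_j)$ --- and then Proposition~\ref{prop:AAandRn} shows that for $\Re^n_+$ and $\Delta([a,b])^\W$ one can \emph{choose} $\Sigma_\infty$ (e.g.\ $B=\Qe^n_+$ including rational boundary points, or finitely supported rational lotteries) so that these properties hold; Theorem~\ref{thm:weaklymon-v1} is to be read with such a $\Sigma_\infty$. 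So to complete your proof you must add this assumption on $B$ (or restrict to such experimental designs), at which point your boundary case reduces to the same sandwich argument as the interior one; without it, the counterexample above stands.
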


\subsection{Utility functions}\label{sec:utilityv1}
Let $X$ be either of the sets in Section~\ref{sec:SMv1}.  In the case of $\Delta([a,b])^\W$, $>$ references the strict part of $\geq$ as defined previously.

Denote by $\mathcal{R}^{\mbox{mon}}$ the set of preferences that are strictly monotone and continuous, and by $\mathcal{U}$ the set of strictly increasing and continuous utility functions on $X$. The set $\mathcal{U}$ is endowed with the topology of uniform convergence on compacta.

Let $\Phi$ be the function that carries each utility function in $\mathcal{U}$ into the preference relation that it represents. So  $\Phi:\mathcal{U}\rightarrow\R^{\mbox{mon}}$ is such that $x \mathrel{\Phi(u)} y$ if and only if $u(x) \geq u(y)$.

We regard two utility functions as equivalent if they represent the same preference: if they are ordinally equivalent.
Define an equivalence relation $\simeq$ on $\mathcal{U}$ by $u\simeq v$ if there exists $\varphi:\Re\rightarrow\Re$ strictly increasing for which $u=\varphi \circ v$.  Then let $\mathcal{U}/\simeq$ denote the set of equivalence classes of $\mathcal{U}$ under $\simeq$ endowed with the quotient topology.
The function $\hat{\Phi}:\mathcal{U}/\simeq\rightarrow \mathcal{R}^{\mbox{mon}}$ maps an equivalence class into $\Phi(u)$, for any $u$ member of the equivalence class.

\begin{theorem}\label{thm:homeomorphism-v1} $\hat{\Phi}$ is a homeomorphism.
\end{theorem}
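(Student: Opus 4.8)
The plan is to show that $\hat\Phi$ is a continuous bijection with continuous inverse. Bijectivity is essentially built into the setup: $\Phi$ is surjective onto $\Rmon$ because every strictly monotone continuous preference on one of our spaces admits a strictly increasing continuous utility representation (Debreu's theorem, using separability of $X$), and $\hat\Phi$ is injective by construction, since the fibers of $\Phi$ are exactly the $\simeq$-equivalence classes. So the content is entirely topological: $\hat\Phi$ and $\hat\Phi^{-1}$ are continuous.

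For continuity of $\hat\Phi$, by the universal property of the quotient topology it suffices to show that $\Phi:\U\to\Rmon$ is continuous, i.e.\ that if $u_m\to u$ uniformly on compacta then $\Phi(u_m)\to\Phi(u)$ in the topology of closed convergence. I would argue directly at the level of closed convergence: a graph $R=\Phi(u)$ is the limit of the $R_m=\Phi(u_m)$ iff $R\subseteq\Li R_m$ and $\Ls R_m\subseteq R$. The inclusion $R\subseteq\Li R_m$ is the delicate half and is where strict monotonicity does the work: if $u(x)\ge u(y)$, pick $y_j\to y$ with $y_j$ strictly below $y$ in the order (possible because every point of $\Re^n_+$ or $\Delta([a,b])^\W$ is approached from strictly below in the relevant order away from minimal points, and minimal points can be handled separately by perturbing $x$ instead); then $u(x)>u(y_j)$, and uniform-on-compacta convergence gives $u_m(x)>u_m(y_j)$ for large $m$, hence $(x,y_j)\in R_m$, and letting $j$ and $m$ go together yields $(x,y)\in\Li R_m$. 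For $\Ls R_m\subseteq R$: if $(x_m,y_m)\to(x,y)$ along a subsequence with $u_m(x_m)\ge u_m(y_m)$, then since the $(x_m,y_m)$ lie in a compact set, uniform convergence of $u_m$ to $u$ there plus continuity of $u$ gives $u(x)\ge u(y)$. This is the argument Mas-Colell uses and which the paper cites; I expect it to go through with only routine care about the "from below'' approximation at boundary points.

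For continuity of $\hat\Phi^{-1}:\Rmon\to\U/\simeq$, I would first fix a canonical representative in each equivalence class so as to have a concrete map $\Psi:\Rmon\to\U$ with $\Phi\circ\Psi=\mathrm{id}$, and show $\Psi$ is continuous; then $\hat\Phi^{-1}$, being the composition of $\Psi$ with the quotient projection, is continuous. The natural canonical choice is a "certainty-equivalent'' style utility: on $\Re^n_+$ send $x$ to the scalar $t$ such that $x\sim t\mathbf 1$ (using strict monotonicity along the diagonal to get a unique such $t$), and on $\Delta([a,b])^\W$ send an act to the constant act indifferent to it, identified with its value in $[a,b]$. One checks $\Psi(\succeq)$ is a strictly increasing continuous representation, and that $\succeq_m\to\succeq$ in closed convergence implies $\Psi(\succeq_m)\to\Psi(\succeq)$ uniformly on compacta: pointwise convergence follows because $\succeq_m\to\succeq$ forces the indifference-with-$t\mathbf 1$ index to converge (if it did not, one extracts a subsequence converging to some $t'\ne t$ and the closed-convergence limit of the graphs would then assert $x\sim t'\mathbf 1$, contradicting strict monotonicity), and one upgrades pointwise to uniform-on-compacta convergence using equicontinuity, which holds because all these representations are $1$-Lipschitz in the appropriate metric (moving $x$ by $\varepsilon$ in every coordinate moves its certainty equivalent by at most $\varepsilon$, by strict monotonicity sandwiching).

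The main obstacle I anticipate is the continuity of $\Psi$, specifically the upgrade from pointwise to uniform-on-compacta convergence of the canonical representations, together with the verification that the certainty-equivalent construction really does yield a well-defined continuous strictly increasing utility on $\Delta([a,b])^\W$ (where "constant act'' and "value of a constant act'' must be handled through the $\geq_{FOSD}$ order and the weak\textsuperscript{*} topology). This is exactly the point at which the paper says it borrows the argument structure from \citet{mas1974continuous} and the local-strictness/continuity machinery from \citet{border1994dynamic}; I would lean on a uniform Lipschitz bound for the family $\{\Psi(\succeq_m)\}$ to make the Arzel\`a--Ascoli-type passage clean. Everything else---surjectivity of $\Phi$, injectivity of $\hat\Phi$, and the two closed-convergence inclusions for continuity of $\Phi$---I expect to be routine given the results already available in the paper.
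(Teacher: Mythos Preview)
Your overall architecture matches the paper's: bijectivity from Debreu and the definition of $\simeq$; continuity of $\Phi$ from local strictness (this is the paper's Theorem~\ref{thm:border}, taken from \citet{border1994dynamic}); and continuity of $\hat\Phi^{-1}$ via a certainty-equivalent construction along a totally ordered ``diagonal.'' The paper packages the last step as ``$\Phi$ is an open map'' (Theorem~\ref{thm:openmap}) rather than ``a continuous section $\Psi$ exists,'' but the construction in Lemma~\ref{lem:openmap2} is your $\Psi$, and your section is recovered by taking $u^*$ in that proof to be the identity on the diagonal.

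The gap is exactly where you flag the main obstacle. The $1$-Lipschitz claim is false. Strict-monotonicity sandwiching gives only $t(x-\varepsilon\mathbf 1)\le t(x')\le t(x+\varepsilon\mathbf 1)$ whenever $\|x'-x\|_\infty\le\varepsilon$; nothing forces $t(x+\varepsilon\mathbf 1)\le t(x)+\varepsilon$, which would require a translation invariance the preference need not have. For Cobb--Douglas $u(x_1,x_2)=x_1x_2$ on $\Re^2_+$ the certainty equivalent is $t(x)=\sqrt{x_1x_2}$; at $x=(10^4,10^{-4})$ one has $t(x)=1$, but $t(x+10^{-2}\mathbf 1)\approx\sqrt{10^4\cdot 10^{-2}}=10$. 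So a single strictly monotone preference already fails to have its certainty equivalent Lipschitz, there is no uniform modulus across $\Rmon$, and the Arzel\`a--Ascoli route is blocked.

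The paper bypasses equicontinuity entirely. It uses the characterization (\citet{Dugundji}, XII Theorem~7.5) that $u_n\to u^*$ uniformly on compacta iff $u_n(x_n)\to u^*(x^*)$ for every convergent sequence $x_n\to x^*$, and then shows $m_n(x_n)\to m^*(x^*)=\hat m$ by an order-trapping argument: given a neighborhood $U$ of $\hat m$, pick $\underline m<\hat m<\overline m$ in $M$ with $[\underline m,\overline m]\subseteq U$; if $m_n(x_n)\ge\overline m$ along a subsequence then $x_n\succeq_n\overline m$, whence by closed convergence $x^*\succeq^*\overline m>\hat m\sim^* x^*$, contradicting strict monotonicity. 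This uses only the interplay of the order on $M$ with closed convergence, not metric estimates. Your pointwise-convergence argument is correct and is the single-point version of this; replace the Lipschitz step with the sequential Dugundji criterion plus this interval-trapping argument and the proof goes through.
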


Theorem~\ref{thm:homeomorphism-v1} implies that a utility representation may be chosen from a finite-experiment rationalization so as to approximate a given utility representation for the preference generating the choices.

The following Proposition adds some structure to Theorem~\ref{thm:homeomorphism-v1}.  It claims a lower hemicontinuity result for $\Phi^{-1}$, in the sense that for \emph{any} utility representation of a strictly monotone preference, a convergent sequence of preferences possesses a convergent sequence of utility representations.

\begin{proposition}\label{prop:lowersemicontinuity}
Let the subject's preference $\succeq^*$ be strictly monotone and continuous. Let $\succeq_k$ be a continuous and strictly monotone preference that strongly rationalizes the choice function of order $k$ generated by $\succeq^*$. Then,
for any utility representation $u^*$ of $\succeq^*$, there exist utility representations $u_k$ of $\succeq_k$ such that $u_k\rightarrow u^*$.
 \end{proposition}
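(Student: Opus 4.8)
The idea is to combine Theorem~\ref{thm:weaklymon-v1} (which gives $\succeq_k \rightarrow \succeq^*$ in the topology of closed convergence) with the homeomorphism $\hat\Phi$ from Theorem~\ref{thm:homeomorphism-v1}, upgrading the conclusion from "some sequence of utilities converges" to "every target utility is the limit of a suitable sequence." First I would observe that, since $\succeq^*$ is strictly monotone and continuous, Theorem~\ref{thm:weaklymon-v1} applies and $\succeq_k \rightarrow \succeq^*$ in $\Rmon$. The point of the proposition is that the preference topology carries no information about which utility representative we pick, so we must do extra work to track a specific representative $u^*$.

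\medskip
\textbf{Step 1: reduce to a canonical representation.} I would use the homeomorphism $\hat\Phi : \mathcal{U}/\!\simeq\; \rightarrow \Rmon$ of Theorem~\ref{thm:homeomorphism-v1}. Applying $\hat\Phi^{-1}$ to the convergent sequence $\succeq_k \rightarrow \succeq^*$ yields convergence of the equivalence classes $[v_k] \rightarrow [v^*]$ in $\mathcal{U}/\!\simeq$, where $[v^*] = \hat\Phi^{-1}(\succeq^*)$, $[v_k] = \hat\Phi^{-1}(\succeq_k)$. By definition of the quotient topology, this means: for the chosen representative $v^*$ of $[v^*]$, there exist representatives $v_k \in [v_k]$ (equivalently, utility representations $v_k$ of $\succeq_k$) with $v_k \rightarrow v^*$ uniformly on compacta. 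The remaining issue is that $v^*$ is some fixed representative we extract from $\hat\Phi^{-1}$, not the given $u^*$.

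\medskip
\textbf{Step 2: transport to the arbitrary target $u^*$.} Since $v^*$ and $u^*$ both represent $\succeq^*$, there is a strictly increasing $\varphi : \mathbf{R} \rightarrow \mathbf{R}$ with $u^* = \varphi \circ v^*$ on $v^*(X)$. We then want to set $u_k = \varphi \circ v_k$; each $u_k$ is a strictly increasing continuous transformation of $v_k$, hence still a utility representation of $\succeq_k$, provided $\varphi$ can be taken \emph{continuous} on a neighborhood of the relevant range. Because $v^*$ is continuous and $X = \mathbf{R}^n_+$ or $\Delta([a,b])^\Omega$ is $\sigma$-compact with connected values, $v^*(X)$ is an interval $I$, and $u^* = \varphi$ on that interval after identifying $v^*$; so $\varphi|_I$ is automatically continuous because it equals $u^* \circ (v^*)^{-1}$ composed appropriately — more carefully, for any compact $K \subseteq X$, $v^*(K) \subseteq I$ is compact and $\varphi$ restricted there is continuous since $u^*$ and $v^*$ are. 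Then $u_k = \varphi \circ v_k \rightarrow \varphi \circ v^* = u^*$ uniformly on each compact $K$: this uses that $v_k \rightarrow v^*$ uniformly on $K$, that eventually all $v_k(K)$ lie in a fixed compact neighborhood $J \supseteq v^*(K)$ inside $I$ (by uniform convergence), and that $\varphi$ is uniformly continuous on $J$.

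\medskip
\textbf{The main obstacle.} The delicate point is Step 2: ensuring $\varphi$ is well-behaved (continuous, and with domain large enough that $\varphi \circ v_k$ is defined for all large $k$). The range $v_k(X)$ need not equal $v^*(X)$, so I need to check that $v^*(X)$ is an open or at least "large" interval and that $\varphi$ extends continuously past its endpoints, or else argue compactly: on every compact $K$, eventually $v_k(K)$ sits inside a compact subinterval of the interior of $v^*(X)$, on which $\varphi$ is uniformly continuous. The strict monotonicity and continuity of $v^*$, together with the structure of $X$ (unbounded above in the $\mathbf{R}^n_+$ case, or $v^*$ of a product of simplices), should guarantee that $v^*(X)$ is an interval with no isolated endpoint issues on compacta, so uniform convergence on compacta is preserved under post-composition with $\varphi$. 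I would double-check the edge case where $v^*(X)$ has a finite infimum or supremum, but since we only need convergence on compacta and $v_k \rightarrow v^*$ uniformly there, the images stay bounded away from any problematic boundary for $k$ large, which suffices.
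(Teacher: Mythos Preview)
Your Step~1 contains a genuine gap. Convergence $[v_k]\to [v^*]$ in the quotient $\mathcal{U}/\!\simeq$ does \emph{not}, ``by definition of the quotient topology,'' lift to representatives $v_k\to v^*$ in $\mathcal{U}$. Lifting convergent sequences through a quotient map requires the quotient map $\pi:\mathcal{U}\to\mathcal{U}/\!\simeq$ to be \emph{open}; general quotient maps do not have this property. Here $\pi$ is open precisely because $\Phi=\hat\Phi\circ\pi$ is open (Theorem~\ref{thm:openmap}) and $\hat\Phi$ is a homeomorphism---but you never invoke this. Once you do, the lifting works for \emph{every} representative of $[v^*]$, in particular for $u^*$ itself; so your Step~2 is entirely superfluous. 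The boundary issues you worry about in Step~2 (whether $v_k(K)\subseteq v^*(X)$, extending $\varphi$) are real---for instance if $0\in K\subseteq\Re^n_+$ then $v_k(0)$ may fall strictly below $\inf v^*(X)$---and while they can be patched by extending $\varphi$ monotonically past the endpoints of $v^*(X)$, the whole detour is unnecessary.

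The paper's route is different and more direct: it proves Proposition~\ref{prop:lowersemicontinuity} simultaneously with Theorem~\ref{thm:openmap} via the certainty-equivalent construction (Lemma~\ref{lem:openmap2}). Given the target $u^*$, one sets $u_k(x)=u^*(m_k(x))$, where $m_k(x)$ is the unique element of the totally ordered ``diagonal'' $M$ with $x\sim_k m_k(x)$; then one shows $m_k(x_n)\to m^*(x^*)$ whenever $x_n\to x^*$, which gives $u_k\to u^*$ uniformly on compacta. This construction is exactly what makes $\Phi$ open (via Lemma~\ref{lem:openmap1}). So your argument, once repaired, amounts to invoking Theorem~\ref{thm:openmap}---whose proof \emph{is} the paper's proof of the proposition---and then taking an unnecessary second step.
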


\subsection{Limiting data}
\label{sec:limitingdata}
Let $X$ be as in Section~\ref{sec:SMv1}, and let $B$ be the dense set of alternatives used over all experiments.

\begin{theorem}\label{thm:connected-v1}
Suppose that $\succeq$ and $\succeq^*$ are two continuous preference relations.
If $\succeq|_{B\times B} = \succeq^*|_{B\times B}$, then $\succeq = \succeq^*$.
\end{theorem}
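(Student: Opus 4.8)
The plan is to show that two continuous preferences that agree on $B \times B$ must agree everywhere, using density of $B$ in $X$ together with closedness of the preference relations. Fix arbitrary $x, y \in X$. Since $B$ is dense, pick sequences $x_m \to x$ and $y_m \to y$ with $x_m, y_m \in B$. The hypothesis gives us, for each $m$, that $x_m \succeq y_m$ if and only if $x_m \succeq^* y_m$. I want to pass to the limit and conclude $x \succeq y \Leftrightarrow x \succeq^* y$.

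The first substantive step is to establish that for a continuous (hence closed, and complete/transitive) preference $\succeq$, whenever $x_m \to x$, $y_m \to y$ and $x \succ y$ (strict), one can choose the approximating sequences so that $x_m \succ y_m$ for all large $m$. This is essentially the statement that the strict relation $\{(a,b): a \succ b\}$ is open, which is equivalent to closedness of $\succeq$ for a weak order on a (metrizable, or at least suitably nice) space — and it is exactly the property invoked in the discussion of the topology of closed convergence earlier in the excerpt (the Kannai characterization). So the first step is: record that closedness of $\succeq$ implies openness of $\succ$ in $X \times X$.

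With that in hand, the argument is a short trichotomy. Suppose toward a contradiction that $\succeq \neq \succeq^*$; then, by completeness of both, there exist $x, y$ with (without loss of generality) $x \succ y$ but $y \succeq^* x$. By openness of $\succ$, there is a neighborhood $U \times V$ of $(x,y)$ with $a \succ b$ for all $(a,b) \in U \times V$; by density pick $x' \in U \cap B$ and $y' \in V \cap B$, so $x' \succ y'$, hence $x' \succ^* y'$ by the agreement hypothesis on $B$. On the other hand, from $y \succeq^* x$ and continuity of $\succeq^*$: the set $\{(a,b): b \succeq^* a\}$ is closed and contains $(x,y)$; I need instead to get an open set around $(x,y)$ on which $\succeq^*$ reverses the strict ranking — but that is not guaranteed when $\succeq^*$ only weakly ranks $y$ above $x$. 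So the clean way is to split into two cases: either $y \succ^* x$, in which case openness of $\succ^*$ gives a neighborhood on which $y' \succ^* x'$ for $B$-points, contradicting $x' \succ^* y'$; or $y \sim^* x$, which I handle separately.

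The $y \sim^* x$ case is the one I expect to be the main obstacle, since indifference is not an open condition. Here I would instead use the strictness $x \succ y$ directly: by openness of $\succ$ choose $B$-points $x' \succ y'$ near $x, y$; then $x' \succ^* y'$. Now perturb: since $x \sim^* x'$ is false in general, I cannot immediately compare. Better approach: use a third sequence. Because $\succeq^*$ is continuous and complete, and $x' \to x$, $y' \to y$ with $x' \succ^* y'$ along the subsequence, closedness of $\succeq^*$ forces $x \succeq^* y$ in the limit. Combined with the assumed $x \sim^* y$ this is consistent, so no contradiction yet — I must instead exploit the other direction symmetrically. The cleanest fix: prove the contrapositive-free statement that $\succeq|_{B\times B} = \succeq^*|_{B \times B}$ forces $\succ$ and $\succ^*$ to agree on all of $X$ (via openness of both strict relations and density: $x \succ y$ iff some $B$-neighborhood pair has $x' \succ y'$ iff the same for $\succ^*$), and then recover $\sim$ and $\succeq$ as the complements/boundaries. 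Concretely: $x \succ y \iff \exists$ open $U\times V \ni (x,y)$ with $a \succ b$ on $U\times V$, because $\succ$ is open; restricting to $B$, this holds iff the analogous statement for $\succ^*$, i.e. iff $x \succ^* y$. Hence $\succ = \succ^*$ on $X \times X$, and since each preference is the union of its strict part with the diagonal-completion, and completeness pins down $\succeq$ from $\succ$ (namely $x \succeq y \iff \neg(y \succ x)$), we get $\succeq = \succeq^*$. The only technical care needed is the "$\iff$" in the openness characterization: one direction (open neighborhood $\Rightarrow$ $x \succ y$) is immediate; the other ($x \succ y \Rightarrow$ open neighborhood) is precisely openness of $\succ$, which follows from closedness of $\succeq$ on the spaces at hand — these are the separable metrizable spaces $\Re^n_+$ or $\Delta([a,b])^\Omega$, where this equivalence is standard.
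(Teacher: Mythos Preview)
Your proposal has a genuine gap, precisely at the point you yourself flagged as the main obstacle: the case $x \succ y$ and $x \sim^* y$.

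Your ``cleanest fix'' asserts the equivalence
\[
x \succ y \iff \exists\, U\times V \ni (x,y) \text{ open with } a \succ b \text{ for all } (a,b)\in U\times V,
\]
and then says that, by restricting to $B$ and using the hypothesis, this is equivalent to the analogous statement for $\succ^*$, hence to $x\succ^* y$. The problem is that ``restricting to $B$'' is not an equivalence. From $a\succ b$ on all of $U\times V$ you certainly get $a\succ b$ on $(U\times V)\cap(B\times B)$, and via the hypothesis $a\succ^* b$ on those $B$-points. But to close the loop you need the reverse: from $a\succ^* b$ on $B$-points of $U\times V$ to $x\succ^* y$. Closedness of $\succeq^*$ only yields $x\succeq^* y$ in the limit, not strict preference; so you are right back at the $x\sim^* y$ case you were trying to eliminate. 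In effect your argument proves $\succ\subseteq\succeq^*$ and $\succ^*\subseteq\succeq$, which is exactly the \emph{first} paragraph of the paper's proof, and no more.

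That this cannot be patched by continuity and density alone is shown by the paper's own counterexample: on $X=[0,1]\cup[2,3]$ with $B$ the rationals in $(0,1)\cup(2,3)$, two continuous preferences can agree on $B$ yet differ on whether $2\succ 1$ or $2\sim 1$. Your argument uses only openness of $\succ$, closedness of $\succeq$, and density of $B$, all of which hold in that example, so the argument must fail somewhere---and it fails exactly at the backward implication above. The paper's proof uses connectedness of $X$ in an essential way: from $x\succ y$ it produces an intermediate point, and then $w,z\in B$ with $x\succ w\succ z\succ y$; the already-established inclusion $\succ\subseteq\succeq^*$ together with $x\sim^* y$ and transitivity forces $w\sim^* z$, contradicting $w\succ z$ on $B$. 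You need either this connectedness argument or an additional hypothesis such as local strictness (as in the paper's Theorem~\ref{cor:lsequal}) to rule out the indifference case.
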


As we discussed in Section~\ref{sec:motivation}, the case of limiting data serves to illustrate the difference between a sequence of finite experiment and the limit of a countably infinite dataset. Theorem~\ref{thm:connected-v1} means that one can obtain identification of $\succeq^*$ solely from the continuity assumption (we refer the reader to Section~\ref{sec:infinitedata} for more details).

\section{Discussion}\label{sec:discussion}
\subsection{Positive results and assumptions on $X$.}

The previous section assumes that $X$ is either $\Re_+^n$ or $\Delta([a,b])^\W$, for some finite set $\W$. The point of Theorems~\ref{thm:weakrat-v1} and~\ref{thm:weaklymon-v1} is to provide a positive result in response to the concerns raised in Section~\ref{sec:motivation}. The results cover some of the most widely used choice spaces in economics: $\Re^n_+$ is consumption space in demand theory, and $\Delta([a,b])^\W$ is a space of Anscombe-Aumann acts over monetary lotteries.

For the Anscombe-Aumann interpretation,
let $\Omega$ be a finite nonempty set of \emph{states of the world}, and interpret $[a,b]$ as a set of monetary payoffs. The elements of $\Delta([a,b])$ are lotteries of monetary payoffs. An \df{Anscombe-Aumann act} is a state-contingent monetary lottery; it maps elements from $\Omega$ to $\Delta([a,b])$. The set of alternatives $\Delta([a,b])^{\Omega}$ is then the set all Anscombe-Aumann acts.

The spaces  $\Re_+^n$ and $\Delta([a,b])^\W$ have in common that there is an objective notion of monotonicity that preferences can be made to conform to. Other spaces share this property. Section~\ref{sec:mainresults} includes our most general results.

\subsection{Identification of utility functions}\label{sec:utilitydiscussion}
Many results on identification in decision theory can be phrased in the following terms. There are subsets $\U'\subseteq\U$ and $\R'\subseteq\Rmon$, and an equivalence relation $\simeq'$ on $\U'$ such that $\Phi$ is a bijection from $\U'/\simeq'$ onto $\R'$. The idea is that, with data in the form of $\succeq^*$, one can uniquely ``back out'' an equivalence class from $U'$.

Our results suggest that this is not enough when data is finite. First, one needs to ensure that rationalizations obtained from finite data converge to the underlying $\succeq^*$. Second, the space of preferences and utilities have to be homeomorphic in order to be able to obtain a limiting utility function from a large, but finite, dataset on choices.

\subsection{Partial observability}

The distinction between weak and strong rationalizability is important. In fact, it is rather surprising that one can obtain a result such as Theorem~\ref{thm:weakrat-v1} for weak rationalizations.

A choice sequence generated by the subject's preference reflects both strict comparisons as well as indifferences. In practice, however, the experimenter may not be able to properly infer the indifference of the subject regarding two alternatives. The difficulty arises, for example, when the experimenter offers the subject his preferred alternative. In this case, the experimenter would typically require that the subject selects only one of the two alternatives presented to him. Such situations, in which the experimenter cannot commit to being able to see all potentially chosen elements, are referred to \emph{partial observability} \citep{CES}, in contrast to full observability in which the experimenter is able to elicit the subject's indifference between alternatives.  

Weak rationalizability expresses the idea that the experimenter is not willing to commit to interpreting observed choices as the \emph{only} potential choices made by the subject. For example, if the experimenter observes that the subject chooses $x$ when presented the pair $\{x,y\}$, she may not be willing to infer that $x \succ^* y$, as it may be that $x \sim^* y$ but the subject simply did not choose $y$. This notion of weak rationalization is used, for example, by \citet{afriat} in the context of consumer theory (for more details on this notion, see, for example, \citet{rptheory}).\footnote{Analogously, the hypothesis of full observability, related to what we call strong rationalization, is the notion employed by \citet{richter}.  A recent work showing how to obtain both types of conditions as a special case is \citet{nishimura}.}

Weak rationalizability is partially agnostic with respect to the status of unchosen alternatives, so it is surprising that one can ensure convergence of preference rationalizations to the preference that generated the choices.

\subsection{Monotone Preferences}

The problem exemplified by Proposition~\ref{prop:indiff} is that one cannot hope to obtain convergence to $\succeq^*$ if there is no discipline placed on the rationalizing preferences. In a sense, we need to constrain, or structure, the theory from which rationalizations are drawn. Our results show that a notion of objective monotonicity is enough to ensure that rationalizing preferences in the limit approach the subject's preferences.

As discussed in the introduction, and exemplified by Proposition~\ref{prop:indiff},
the continuity assumption on the subject's preference, and the assumption that the alternatives offered are in the limit dense, do not generally ensure convergence to the subject's preference. Proposition~\ref{prop:indiff} shows that the failure of convergence can be rather dramatic. We must impose structure on the subject's preference, and on the finite-experiment rationalizations.

Observe that the preferences $\succeq_k$ constructed in Proposition~\ref{prop:indiff} cannot be monotone. Suppose that $\succeq^*$ is a continuous preference relation, and suppose that $x\succ^* y$. In the construction in Proposition~\ref{prop:indiff} we obtain a sequence of rationalizations $\succeq_k$ such that in the limit $y$ is at least as good as $x$. This cannot happen if each rationalizing preference is weakly monotone: $x\succ^* y$ implies that $x'\succ^* y'$ for $(x',y')$ close enough to $(x,y)$. Thanks to the  interaction of the order and the topology on $\Re^n$
we can find a $k$ large enough such that there are $\{x'',y''\}\in \Sigma_k$ (meaning alternatives offered in the $k$th finite experiment) with $x'\geq x''$ and $y''\geq y'$, and where $(x'',y'')$ is also close to $(x,y)$. If $\succeq_k$ is monotone then we have $x'\succeq_k x''$ and $y''\succeq y'$. But if $\succeq_k$ strongly rationalizes the choices made at the $k$th experiment, then $x''\succ_k y''$. So we have to have  $x'\succ_k y'$ for any $(x',y')$ close enough to $(x,y)$.

\subsection{On the denseness of $B$.}
\label{sec:denseness}
We assume that $B$, the set of all alternatives used in a sequence of experiments, is dense. Our paper deals with fully nonparametric identification, so it seems impossible to obtain a general result without assuming denseness of $B$: imagine that the experimenter leaves an open set of alternatives outside of her experimental design. Then the subject's preferences over alternatives in that set would be very hard to gauge.

In practice, one can imagine restricting attention to smaller class of families for which one does not need to elicit choices over a set that is dense in $X$. For example for expected utility preferences over lotteries, or homothetic  preferences in $\Re^n$, one is only trying to infer a single indifference curve. So a smaller set of choices is enough: but even in that case one would need the set of alternatives in the limit to be dense in the smaller set of choices.

\subsection{Preference identification and utility identification.}
The theorems say, roughly speaking, that, if we assume data generated by a (well behaved) preference $\succeq^*$, then {\em any} ``finite sample rationalization'' $\succeq^k$ is guaranteed to converge to the generating preference. So estimates have the correct ``large sample'' properties. In particular, one may be interested in a specific theory of choice, such as max-min or  Choquet expected utility. If the subject's $\succeq^*$ is max-min, or Choquet, one  can choose rationalizing preferences to conform to the theory, and the limit will uniquely identify the subject's max-min, or Choquet, preference. But if one incorrectly uses rationalizing preferences outside of the theory, the asymptotic behavior will still correct the problem, and uniquely identify $\succeq^*$ in the limit.  The theorems also say that there are certain utility representations $u^k$ that will be correct asymptotically.

Note, however, what the theorems do not say. First, the estimates $\succeq^k$ are guaranteed to converge to the generating preferences $\succeq^*$, \emph{when the generating preference is known to exist.} If one simply estimates the preferences $\succeq^k$, these may fail to converge to a well-behaved preference. We present two examples to this effect in Section~\ref{sec:discusspreference}. That said, under certain conditions (that unfortunately are not satisfied in the Anscombe-Aumann setting), the ``size'' of the set of rationalizing preferences shrinks as $k$ growth; see Theorem~\ref{thm:diameter}.

Second,  our results do not say that one can choose $u^k$ arbitrarily. Any estimated rationalizing preference will converge to the preferences rationalizing the utility, but basing the estimation on utilities is more complicated because it is not clear that any utility representation of $\succeq^*$ will have the right limit, or even converge at all.

\section{General Results}\label{sec:mainresults}

In Section~\ref{sec:results}, we have presented our main results for some important special cases of the collection of alternatives $X$. In this section, we present our main results for the general case. We now assume that $X$ is a Polish and locally compact space, and provide conditions under which our convergence results continue to hold. The conditions we provide are the weakest we know. The section concludes with applications to this general case. 
Note that this section focuses on preference identification, general results for utility identification are given in Section~\ref{sec:utility}.

\subsection{Convergence of Preferences}\label{sec:convpref}

To speak about the approximation of the subject's preference, one must introduce a notion of convergence on the space of preferences.
We use \emph{closed convergence}, and endow the space of preference relations with the associated topology.
The use of closed convergence for preference relations was initiated by the work of \cite{kannai1970continuity} and \cite{HILDENBRAND1970161}, and has become standard since then.

One primary reason to adopt closed convergence is to capture the property that agents with similar preferences should have similar choice behavior---a property that is necessary to be able to learn the preference from finite data. Specifically, under the assumptions we use for most of our results, the topology of closed convergence is the smallest topology for which the sets
\[
	\{ (x,y,\succeq) : x\succ y\}
\]
are open (see \cite{kannai1970continuity}  Theorem 3.1). The desired continuity of choice behavior is expressed by the fact that sets of the form $\{ (x,y,\succeq) : x\succ y\}$ are open. The topology of closed convergence being the smallest topology with this property is a natural reason for adopting it.

The  following characterization of closed convergence for the context of preference relations will be used throughout the paper:
\begin{lemma}\label{lem:tcclim}
Let $\succeq_n$ be a sequence of preference relations, and let $\succeq$ be a preference relation. Then $\succeq_n \rightarrow \succeq$ in the topology of closed convergence if and only if, for all $x,y\in X$,
\begin{enumerate}
	\item $x\succeq y$ implies that for any neighborhood $V$ of $(x,y)$ in $X\times X$ there is $N$ such that for all $n\geq N$, $\succeq_n\cap V\neq \os$;
	\item if, for any neighborhood $V$ of $(x,y)$ in $X\times X$, and any $N$ there is $n\geq N$  with $\succeq_n\cap V\neq \os$, then $x\succeq y$.
\end{enumerate}
\end{lemma}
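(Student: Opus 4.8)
The plan is to recall the classical Kuratowski--Painlev\'e description of the topology of closed convergence and then translate it into the notation of the lemma. Since $X$ is Polish and locally compact, the product $X\times X$ is again Polish and locally compact, hence second countable; consequently the space of closed subsets of $X\times X$ endowed with the topology of closed convergence is compact and metrizable, and on it a sequence $(F_n)$ of closed sets converges to a closed set $F$ if and only if $F=\Li F_n=\Ls F_n$. Here $\Li F_n$ (the lower closed limit) is the set of $z\in X\times X$ such that every neighborhood of $z$ meets $F_n$ for all sufficiently large $n$, and $\Ls F_n$ (the upper closed limit) is the set of $z$ such that every neighborhood of $z$ meets $F_n$ for infinitely many $n$ (see \citealp{HILDENBRAND1970161} and \citealp{kannai1970continuity}). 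Both $\Li F_n$ and $\Ls F_n$ are always closed, and $\Li F_n\subseteq\Ls F_n$ for every sequence. Each $\succeq_n$, being a continuous preference, is a closed subset of $X\times X$, so this characterization applies; note also that if the two conditions of the lemma hold then $\succeq$ coincides with the closed set $\Ls\succeq_n$, so its continuity need not be invoked as a separate hypothesis.

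The second step is to match the two numbered conditions with the inclusions $\succeq\subseteq\Li\succeq_n$ and $\Ls\succeq_n\subseteq\succeq$. For a point $(x,y)\in X\times X$, the assertion ``for any neighborhood $V$ of $(x,y)$ there is $N$ such that for all $n\geq N$, $\succeq_n\cap V\neq\os$'' is, word for word, the statement $(x,y)\in\Li\succeq_n$. Hence condition~(1) says precisely that $x\succeq y$ implies $(x,y)\in\Li\succeq_n$, i.e.\ $\succeq\subseteq\Li\succeq_n$. Likewise, the hypothesis appearing in condition~(2)---``for any neighborhood $V$ of $(x,y)$, and any $N$, there is $n\geq N$ with $\succeq_n\cap V\neq\os$''---is exactly the statement $(x,y)\in\Ls\succeq_n$, so condition~(2) says that $(x,y)\in\Ls\succeq_n$ implies $x\succeq y$, i.e.\ $\Ls\succeq_n\subseteq\succeq$.

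Finally I would combine these. Using $\Li\succeq_n\subseteq\Ls\succeq_n$, conditions~(1) and~(2) together are equivalent to the chain of inclusions $\succeq\subseteq\Li\succeq_n\subseteq\Ls\succeq_n\subseteq\succeq$, hence to the equalities $\succeq=\Li\succeq_n=\Ls\succeq_n$, which by the first step is exactly the assertion that $\succeq_n\rightarrow\succeq$ in the topology of closed convergence. Both directions of the ``if and only if'' are obtained simultaneously in this way.

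I do not foresee a genuine obstacle: the lemma is a restatement, in the paper's notation, of an established fact about the hyperspace topology. The only two points demanding care are (i) ensuring the hyperspace-level equivalence ``convergence $\iff\ \Li=\Ls=$ limit'' is invoked in a setting where it is valid, which is precisely why local compactness and second countability of $X$ (guaranteed by the Polish-plus-locally-compact hypothesis) are needed; and (ii) checking that the nested quantifiers in conditions~(1) and~(2) line up exactly with the definitions of $\Li$ and $\Ls$, in particular that ``for every $N$ there is $n\geq N$'' in~(2) is the ``infinitely often'' quantifier defining $\Ls$, rather than the ``eventually'' quantifier defining $\Li$.
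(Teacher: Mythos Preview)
Your proposal is correct and matches the paper's treatment: the paper does not give a separate proof of this lemma but simply records in Appendix~\ref{sec:closedconvergence} the definition of closed convergence as $\Li(\mathcal{F})=F=\Ls(\mathcal{F})$, from which the lemma is an immediate restatement. Your explicit matching of condition~(1) with $\succeq\subseteq\Li\succeq_n$ and condition~(2) with $\Ls\succeq_n\subseteq\succeq$, together with the general inclusion $\Li\subseteq\Ls$, is exactly the unpacking the paper leaves to the reader.
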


The following lemma plays an important role in the approximation results.
\begin{lemma}\label{lem:closedconvergencecompact}
	The set of all continuous binary relations on $X$, endowed with the topology of closed convergence, is a compact metrizable space.
%closed subsets of $X \times X$, endowed with the topology of closed convergence, is a compact metrizable space.
\end{lemma}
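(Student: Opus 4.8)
The plan is to identify the space in the lemma with the hyperspace $\mathcal{F}(X\times X)$ of all closed subsets of $X\times X$, equipped with the topology of closed convergence (the Fell topology), and to reduce the assertion to the classical fact that such a hyperspace is compact and metrizable when the ambient space is locally compact, second countable and Hausdorff; I would prove that fact by embedding $\mathcal{F}(X\times X)$ into the hyperspace of the one-point compactification of $X\times X$.

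First I would record the reduction. Since $X$ is Polish and locally compact, the product $Y := X\times X$ is again Polish and locally compact, hence locally compact, second countable and Hausdorff. A binary relation on $X$ is a subset of $Y$, and it is continuous precisely when it is closed, so the set of all continuous binary relations on $X$ is exactly the hyperspace $\mathcal{F}(Y)$ of closed subsets of $Y$ (including $\os$). Its topology of closed convergence is generated by the subbasic ``hit'' sets $\{F\in\mathcal{F}(Y):F\cap U\neq\os\}$, $U\subseteq Y$ open, together with the subbasic ``miss'' sets $\{F\in\mathcal{F}(Y):F\cap K=\os\}$, $K\subseteq Y$ compact; once metrizability is established, this topology is determined by its convergent sequences, which are exactly the Kuratowski-convergent ones described in Lemma~\ref{lem:tcclim}. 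So it suffices to prove that $\mathcal{F}(Y)$ is compact and metrizable.

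Next I would pass to the one-point compactification. Write $Y^+=Y\cup\{\infty\}$; because $Y$ is second countable, locally compact and Hausdorff, $Y^+$ is compact and metrizable. Fix a compatible metric on $Y^+$, let $\mathcal{K}(Y^+)$ be the hyperspace of nonempty closed (equivalently, compact) subsets of $Y^+$ with the Hausdorff metric — a compact metric space, whose topology is generated by the Vietoris subbasic sets $\{C:C\cap V\neq\os\}$ and $\{C:C\subseteq V\}$ for $V\subseteq Y^+$ open — and let $\mathcal{H}=\{C\in\mathcal{K}(Y^+):\infty\in C\}$, which is closed in $\mathcal{K}(Y^+)$ (its complement $\{C:C\subseteq Y\}$ is open) and hence itself compact and metrizable. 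Define $\iota:\mathcal{F}(Y)\to\mathcal{H}$ by $\iota(A)=A\cup\{\infty\}$. Using that the open neighbourhoods of $\infty$ in $Y^+$ are exactly the sets $Y^+\setminus K$ with $K\subseteq Y$ compact, one checks that $A\cup\{\infty\}$ is closed in $Y^+$ for every closed $A\subseteq Y$ and that $\iota$ is a bijection with inverse $C\mapsto C\cap Y$. The crux is that $\iota$ is a homeomorphism, i.e.\ that it matches the Fell subbasis on $\mathcal{F}(Y)$ with the trace of the Vietoris subbasis on $\mathcal{H}$: for open $U\subseteq Y$ one has $\iota(A)\cap U\neq\os\iff A\cap U\neq\os$, while for open $V\subseteq Y^+$ containing $\infty$ the set $\{C\in\mathcal{H}:C\cap V\neq\os\}$ is all of $\mathcal{H}$; and for open $W\subseteq Y^+$ with $\infty\in W$, the set $K:=Y\setminus W$ is a compact subset of $Y$ (with every compact $K\subseteq Y$ arising in this way, from $W=Y^+\setminus K$) and $\iota(A)\subseteq W\iff A\cap K=\os$, while $\{C\in\mathcal{H}:C\subseteq W\}=\os$ when $\infty\notin W$. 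Hence $\iota$ is a homeomorphism onto the compact metrizable space $\mathcal{H}$, so $\mathcal{F}(Y)$ is compact and metrizable, which gives the lemma.

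I expect the only genuine obstacle to be the bookkeeping in this last step — pinning down the neighbourhood filter of $\infty$ as $\{Y^+\setminus K:K\subseteq Y\text{ compact}\}$, verifying that $A\mapsto A\cup\{\infty\}$ lands in $\mathcal{H}$ and is bijective, and carefully matching the hit/miss subbases on the two sides. The supporting facts I would quote without proof — that finite products of Polish (resp.\ locally compact) spaces remain Polish (resp.\ locally compact), that the one-point compactification of a second-countable locally compact Hausdorff space is compact metrizable, and that the Hausdorff-metric hyperspace of a compact metric space is compact — are all standard. Alternatively, after the reduction in the second paragraph one may simply invoke the classical theorem (essentially Fell's) that the Fell hyperspace of a locally compact second countable Hausdorff space is compact and metrizable, bypassing the explicit embedding.
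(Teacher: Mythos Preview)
Your proposal is correct. The paper does not actually prove this lemma: it simply cites Theorem~2 (Chapter~B) of \cite{hildenbrand2015core} and Corollary~3.95 of \cite{aliprantis2006infinite}. Your argument via the one-point compactification is precisely the standard proof underlying those references, and it matches exactly the description the paper gives immediately after the lemma of how the metric $\delta_C$ is constructed (as the Hausdorff metric on the one-point compactification). So you have written out in full what the paper is content to quote; there is no substantive difference in approach, only in level of detail.
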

\begin{proof} See Theorem 2 (Chapter B) of \cite{hildenbrand2015core}, or Corollary 3.95 of  \cite{aliprantis2006infinite}.\end{proof}

In particular, we shall denote the metric which generates the closed convergence topology by $\delta_C$.  Recall that $X$ is metrizable. Let $d$ be an associated metric.  When $X$ is compact, one can choose $\delta_C$ to be the Hausdorff metric on subsets of $X\times X$ induced by $d$. On the other hand, if $X$ is only locally compact, then $\delta_C$ may be chosen to coincide with the Hausdorff metric on subsets of $X_{\infty}\times X_{\infty}$, where $X_{\infty}$ is the one-point compactification of $X$ together with some metric generating $X_{\infty}$.  See \citet{aliprantis2006infinite} for details.

\subsection{Weak rationalizations}\label{sec:weakrat}
We now present our results on the asymptotic behavior of preference estimates based on finite data. The results generalize those stated in Section~\ref{sec:results}.

For our first result, we must define two notions.
We say that a preference relation $\succeq$ is \df{locally strict} if for every $x,y\in X$ with $x\succeq y$, and every neighborhood $V$ of $(x,y)$ in $X\times X$ there is $(x',y')\in V$ with $x'\succ y'$. %We say that the  order $>$ on $X$ has \df{open intervals} if $\{(x,y): x > y\}$ is an open subset of $X \times X$.

The first main result gives conditions of convergence of preferences that weakly rationalize the experimental observations. Note that Theorem~\ref{thm:weakrat}
generalizes Theorem~\ref{thm:weakrat-v1}.

\begin{theorem}\label{thm:weakrat}
	Suppose that
	\begin{enumerate}
		\item the subject's preference $\succeq^*$ is continuous and strictly monotone,
		\item the strict partial order $<$ is an open set,
		\item every continuous and strictly monotone preference relation is locally strict.
	\end{enumerate}
	Let $c \sqsubseteq c_{\succeq^*}$ be a choice sequence, and let $\succeq_k$ be a continuous and strictly monotone preference that weakly rationalizes $c^k$. Then, $\succeq_k \rightarrow \succeq^*$ in the closed convergence topology.
\end{theorem}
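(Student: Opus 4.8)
The plan is to use the compactness of the space of continuous binary relations (Lemma~\ref{lem:closedconvergencecompact}) to reduce convergence $\succeq_k\to\succeq^*$ to the statement that $\succeq^*$ is the \emph{only} accumulation point of the sequence $(\succeq_k)$. So suppose for contradiction that some subsequence $\succeq_{k_j}\to\,\succeq'$, where $\succeq'$ is a continuous binary relation (complete and transitive, being a limit of preferences under closed convergence) and $\succeq'\neq\,\succeq^*$. I would then show $\succeq'=\,\succeq^*$ by proving the two inclusions $\succeq^*\subseteq\,\succeq'$ and $\succeq'\subseteq\,\succeq^*$, which together with completeness of both relations force equality. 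The characterization of closed convergence in Lemma~\ref{lem:tcclim} is the workhorse: part (1) lets me push $\succeq^*$-comparisons into the limit, and part (2) lets me pull limiting comparisons back out of the sequence.

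The first inclusion, $\succeq^*\subseteq\,\succeq'$, is where monotonicity and denseness do the work. Take $x\succeq^* y$; I want to show $x\succeq' y$, i.e. (by Lemma~\ref{lem:tcclim}(1) applied to $\succeq'=\lim\succeq_{k_j}$, together with (2)) that every neighborhood of $(x,y)$ meets $\succeq_{k_j}$ for large $j$. Fix a neighborhood $V$ of $(x,y)$. I would first use continuity and strict monotonicity of $\succeq^*$: perturb to a point $(\tilde x,\tilde y)$ with $\tilde x \succ^* \tilde y$ and $(\tilde x,\tilde y)\in V$ (if $x\succ^* y$ this is immediate by continuity; if $x\sim^* y$, local strictness of $\succeq^*$—which holds by hypothesis~(3)—gives such a point). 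Next, using denseness of $B$ together with the fact that $<$ is an open set (hypothesis~(2)) and strict monotonicity, I want to find a pair $\{x'',y''\}\in\Sigma_k$ for all large $k$ with $x''$ close to $\tilde x$, $y''$ close to $\tilde y$, $(x'',y'')\in V$, and crucially $x'' \succ^* y''$ while simultaneously $x'' \ge \tilde x$ is not quite what I want—rather I want the monotone sandwich: choose $x''$ slightly above $\tilde x$ and $y''$ slightly below $\tilde y$ in the exogenous order so that $x''> \tilde x> \tilde y> y''$ in the relevant strict partial orders, keeping everything inside $V$. Since $\succeq^*$ is strictly monotone and $\tilde x\succ^*\tilde y$, we get $x''\succ^* y''$, hence the subject's observed choice from $\{x'',y''\}$ is $x''$ (as $c\sqsubseteq c_{\succeq^*}$), so $c(\{x'',y''\})=\{x''\}\not\ni y''$. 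Because $\succeq_k$ weakly rationalizes $c^k$, we cannot have $y''\succeq_k x''$ with $x''$ the only maximal element; more carefully, weak rationalization gives $x''\in c_{\succeq_k}(\{x'',y''\})$, i.e. $x''\succeq_k y''$—but I actually need a \emph{strict} relation near $(x,y)$, so I instead invoke monotonicity of $\succeq_k$: since $x''> \tilde x'$ and $\tilde y'> y''$ for intermediate points, and weak rationalization plus monotonicity chain together, one obtains $(\hat x,\hat y)\in\,\succeq_k\cap V$. Then Lemma~\ref{lem:tcclim}(2) yields $x\succeq' y$.

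The second inclusion, $\succeq'\subseteq\,\succeq^*$, runs the other direction and uses hypothesis~(2)–(3) for the $\succeq_k$'s. Suppose $x\succeq' y$ but, for contradiction, $\neg(x\succeq^* y)$, so by completeness $y\succ^* x$. By continuity of $\succeq^*$ and the openness of $<$, there is a neighborhood $V$ of $(x,y)$ on which the flipped strict comparison persists in a robust way; in particular I can find a monotone sandwich $\{x'',y''\}\in\Sigma_k$ for all large $k$ with $y''\succ^* x''$ forcing the subject to choose $y''$, so $x''\notin c(\{x'',y''\})$. Since $\succeq_k$ weakly rationalizes the data, $x''$ need not be $\succeq_k$-maximal—but I need more: using local strictness of $\succeq_k$ (hypothesis~(3)) and monotonicity I want to conclude $y''\succ_k x''$ on a neighborhood, hence $\neg(x''\succeq_k y'')$ robustly near $(x,y)$, which by Lemma~\ref{lem:tcclim}(1) contradicts $x\succeq' y$. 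Finally, having shown $\succeq'=\,\succeq^*$, compactness closes the argument: every subsequence has a further subsequence converging to $\succeq^*$, so $\succeq_k\to\succeq^*$.

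The main obstacle I anticipate is the asymmetry created by \emph{weak} rationalization: from $x''\notin c(\{x'',y''\})$ one learns essentially nothing directly about how $\succeq_k$ ranks $x''$ versus $y''$ (it only says $x''$ was not \emph{chosen}, not that it is not maximal). The whole argument therefore has to route every inference through the \emph{strict monotonicity} of $\succeq_k$ and the \emph{local strictness} hypothesis—taking a point where $\succeq^*$ is strict, transferring strictness to a nearby sampled pair via the exogenous order, and then using monotonicity of $\succeq_k$ to force a strict $\succeq_k$-comparison at a point close to $(x,y)$. Making the "monotone sandwich" construction precise—choosing the perturbed sampled points so that they simultaneously lie in $V$, bracket $(x,y)$ in the strict partial order on both the $\succeq^*$ side and the $\succeq_k$ side, and exist in $\Sigma_k$ for all large $k$ by denseness of $B$—is the delicate technical core, and is exactly where hypotheses (2) (openness of $<$) and (3) (local strictness) are needed beyond what Theorem~\ref{thm:weakrat-v1} assumed.
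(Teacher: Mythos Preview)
Your compactness-plus-unique-accumulation-point strategy matches the paper's, and your first inclusion ($\succeq^*\subseteq\,\succeq'$) is essentially correct---indeed the sandwich is unnecessary there: from $\tilde x\succ^*\tilde y$, continuity of $\succeq^*$ and denseness of $B$ alone give $x'',y''\in B$ with $x''\succ^* y''$, whence $c(\{x'',y''\})=\{x''\}$ and weak rationalization yields $x''\succeq_k y''$, so $(x'',y'')\in\,\succeq_k\cap V$.

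The genuine gap is in your second inclusion. Your monotone sandwich requires $x''\in B$ with $x<x''$ arbitrarily close to $x$ (and dually $y''<y$ close to $y$). That is essentially the \emph{countable order property}, a hypothesis of Theorem~\ref{thm:weaklymon} but \emph{not} of Theorem~\ref{thm:weakrat}. Hypothesis~(2) says $\{z:x<z\}$ is open; it does not say $x$ lies in its closure. For instance, on $X=[0,1]\cup[2,3]$ with the usual order, hypotheses (1)--(3) hold, yet at $x=1$ there is no nearby $x''>1$, so your bracketing is impossible. Invoking local strictness of $\succeq_k$ does not rescue this: it yields a \emph{nearby} strict $\succeq_k$-comparison, not strictness at your particular sampled pair, and cannot substitute for the order-theoretic bracketing you need. (Separately, your parenthetical that the limit is transitive ``being a limit of preferences'' is false in general; see Section~\ref{sec:notclosed}.)

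The paper avoids the sandwich by shifting attention from the sequence $\succeq_k$ to the limit $\succeq'$ itself. By Lemma~\ref{lem:monotoneiscompact} the class $\Rbmon$ of complete, continuous, strictly monotone relations is closed, so $\succeq'\in\Rbmon$ and hence, by hypothesis~(3), $\succeq'$ is locally strict. Moreover the sets $P_k=\{\succeq\in\Rbmon:\succeq\text{ weakly rationalizes }c^k\}$ are closed (Lemma~\ref{lem:inclusion}) and nested, so $\succeq'\in\bigcap_k P_k$: the limit itself weakly rationalizes all the data. Now $\succeq^*$ and $\succeq'$ sit in a perfectly symmetric position---both continuous, locally strict, both weakly rationalizing every $c^k$---and Lemma~\ref{lem:partialobservability} gives $\succeq'=\succeq^*$ directly: if they differed, one could find $a,b\in B$ with $a\succ^* b$ and $b\succ' a$, and then whichever of $a,b$ was chosen from $\{a,b\}$ contradicts one of the two weak rationalizations. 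Using local strictness of the \emph{limit}, rather than of the approximants, is the missing idea.
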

Note that the assumption that $\succeq^*$ and $\succeq_k$ are transitive is not needed.  Instead, each of these only needs to be continuous, strictly monotone, and complete.

Note that Theorem~\ref{thm:weakrat} requires the existence of the data-generating preference $\succeq^*$.  However, even if existence of this object is not supposed, we can still ``bound'' the set of rationalizations to an arbitrary degree of precision.  This is the point of the next result.

For a choice sequence $c$, let %\[\mathcal{P}^k(c)=\{\succeq:\succeq\mbox{ is a continuous and strongly monotone preference weakly rationalizing }c^k\}.\]
$\mathcal{P}^k(c)$ be the set of continuous and strictly monotone preferences that weakly rationalize $c^k$. For a set of binary relations $S$, define $\mbox{diam}(S)=\sup_{(\succeq,\succeq')\in S^2}\delta_C(\succeq,\succeq')$ to be the diameter of $S$ according to the metric $\delta_C$ which generates the topology on preferences.

\begin{theorem}\label{thm:diameter}Suppose that $<$ has open intervals.  Let $c$ be a choice sequence, and suppose that each strictly monotone continuous preference is also locally strict.  Then one of the following holds:
\begin{enumerate}
\item There is $k$ such that $\mathcal{P}^k(c)=\varnothing$.
\item $\lim_{k\rightarrow \infty}\mbox{diam}(\mathcal{P}^k(c))\rightarrow 0$.
\end{enumerate}
\end{theorem}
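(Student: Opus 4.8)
The plan is to reduce the statement to a uniqueness assertion about limit points of finite‑experiment rationalizations, and then to establish that assertion by ``thickening'' any strict comparison that one limit exhibits into an open block of strict comparisons which survives passage to the closed‑convergence limit defining the other.

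First I would record the reduction. Since weakly rationalizing $c^{k+1}$ implies weakly rationalizing $c^{k}$, the sets $\mathcal{P}^k(c)$ are nested, $\mathcal{P}^k(c)\supseteq\mathcal{P}^{k+1}(c)$; if some $\mathcal{P}^k(c)=\varnothing$ we are in alternative (1), so assume $\mathcal{P}^k(c)\ne\varnothing$ for all $k$. As $\mathrm{diam}(\mathcal{P}^k(c))$ is a supremum over pairs, it suffices to show $\delta_C(\succeq^k_1,\succeq^k_2)\to 0$ for arbitrary selections $\succeq^k_i\in\mathcal{P}^k(c)$. If this fails, then for some $\varepsilon>0$ and some subsequence $\delta_C(\succeq^{k_m}_1,\succeq^{k_m}_2)\ge\varepsilon$, and by Lemma~\ref{lem:closedconvergencecompact} I may pass to a further subsequence along which $\succeq^{k_m}_1\to\succeq_1$ and $\succeq^{k_m}_2\to\succeq_2$ for continuous preferences with $\succeq_1\ne\succeq_2$. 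I would then record three properties of each limit $\succeq_i$: it is weakly monotone, being a closed‑convergence limit of strictly monotone relations (Lemma~\ref{lem:tcclim}(2)); it weakly rationalizes the whole choice sequence $c$, since for each budget $B_j$ the set $\{\,\succeq:c(B_j)\subseteq c_\succeq(B_j)\,\}$ is closed (again by Lemma~\ref{lem:tcclim}(2)) and contains $\overline{\mathcal{P}^k(c)}$ for $k\ge j$; and the sets $\{(x,y,\succeq):x\succ y\}$ are open along the relations at play, so strict comparisons are stable under small perturbations (this is where the hypothesis that strictly monotone continuous preferences are locally strict enters, through the Kannai characterization quoted in Section~\ref{sec:convpref}).

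The heart of the argument would be the following. Because $\succeq_1\ne\succeq_2$ and both are complete, there is an ordered pair $(a,b)$ with $a\succ b$ in one of them --- after possibly renaming, say in $\succeq_1$ --- and with $(b,a)\in\,\succeq_2$. From $a\succ b$ in $\succeq_1$ and continuity, $a'\succ b'$ in $\succeq_1$ for $(a',b')$ near $(a,b)$; using that $<$ has open intervals (so the order cones $\{z:z_0<z\}$ and $\{z:z<z_0\}$ are open) together with denseness of $B$, I would pick $a^-,b^+\in B$ with $a^-$ close to and $<$‑below $a$, with $b^+$ close to and $<$‑above $b$, and with $a^-\succ b^+$ in $\succeq_1$. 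Stability of strict comparisons then gives $a^-\succ^{k_m}_1 b^+$ for large $m$; as $\succeq^{k_m}_1$ weakly rationalizes $c$ and $\{a^-,b^+\}$ is a budget, this forces $c(\{a^-,b^+\})=\{a^-\}$. Since $\succeq^{k_m}_2$ also weakly rationalizes $c$, we get $a^-\succeq^{k_m}_2 b^+$; combining this with the strict monotonicity of $\succeq^{k_m}_2$ --- which gives $z\succ^{k_m}_2 a^-$ for $z$ in the open cone $U:=\{z:a^-<z\}\ni a$ and $b^+\succ^{k_m}_2 z'$ for $z'$ in the open cone $W:=\{z':z'<b^+\}\ni b$ --- and transitivity, one obtains a \emph{fixed} open neighborhood $U\times W$ of $(a,b)$ with $U\times W\subseteq\,\succ^{k_m}_2$ for all large $m$. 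Finally, since $(b,a)\in\,\succeq_2$, Lemma~\ref{lem:tcclim}(1) applied to the neighborhood $W\times U$ of $(b,a)$ yields, for large $m$, a point $(b^\ast,a^\ast)\in W\times U$ with $b^\ast\succeq^{k_m}_2 a^\ast$; but $(a^\ast,b^\ast)\in U\times W\subseteq\,\succ^{k_m}_2$ gives $a^\ast\succ^{k_m}_2 b^\ast$, a contradiction. Hence $\succeq_1=\succeq_2$, contrary to assumption, and therefore $\mathrm{diam}(\mathcal{P}^k(c))\to 0$.

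The main obstacle is conceptual and lives in the third paragraph: under weak rationalization the data never \emph{directly} certifies a strict comparison --- it only certifies ``$x$ at least as good as $y$'' --- so a priori a limit of rationalizations could collapse a strict ranking into an indifference, and two distinct limits could coexist. The structural hypotheses are exactly what rule this out: strict monotonicity promotes a single weak comparison read off from the data into strict comparisons on an open cone, ``$<$ has open intervals'' makes those cones topologically robust, and the Kannai/local‑strictness structure makes the resulting open block of strict comparisons persist under closed convergence. The most delicate technical point --- which I would handle as in the proof of Theorem~\ref{thm:weakrat} --- is the behavior at extremal points of $X$, e.g.\ points on the boundary of $\Re^n_+$, where the order cone below or above a point may be empty; there one argues that the offending strict comparison cannot occur (by weak monotonicity) or reduces to the interior by a separate density argument.
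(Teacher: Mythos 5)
Your overall architecture (reduce to uniqueness of subsequential limits, use compactness of the space of continuous relations, derive a contradiction at a disagreement pair $(a,b)$ with $a\succ_1 b$ and $b\succeq_2 a$) is sound and parallels the paper's nested-compact-sets proof. The gap is in the load-bearing step where you pick $a^-\in B$ with $a^-<a$ and $b^+\in B$ with $b<b^+$, both close to $(a,b)$. Denseness of $B$ is purely topological and does not supply order-comparable sample points near an arbitrary alternative, and the theorem's hypotheses (``$<$ has open intervals'' plus local strictness of strictly monotone continuous preferences) do not either. Concretely, take $X=\Re^2_+$ and $B=\Qe^2_{++}$, which is dense: at $a=(1,0)$ the set $\{z:z<a\}$ is a segment with empty interior that is disjoint from $B$, so no admissible $a^-$ exists; and a disagreement pair with such an $a$ is not excluded by weak monotonicity (that only handles the global minimum), so neither of your flagged fixes applies. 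What your construction really needs is order-dense sampling of the kind formalized as the countable order property, which the paper assumes only for the strong-rationalization result (Theorem~\ref{thm:weaklymon}) and deliberately avoids in the weak-rationalization theorems. Note also that the hypothesis designed to fill exactly this hole---local strictness---is not doing real work where you invoke it: the ``stability of strict comparisons'' you attribute to it already follows from the $\Ls$-part of closed convergence together with completeness of the $\succeq^{k_m}_i$ (if $a^-\succ_1 b^+$ then $(b^+,a^-)\notin\Ls$, so eventually $(b^+,a^-)\notin\succeq^{k_m}_1$, hence $a^-\succ^{k_m}_1 b^+$).

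The repair is to let local strictness replace the cone-thickening, which is how the paper argues (its Lemma~\ref{lem:partialobservability}, applied to the singleton intersection of the nested compact sets $P_k\subseteq\Rbmon$ of complete, continuous, strictly monotone weak rationalizations). In your framework: from $a\succ_1 b$, continuity of $\succeq_1$ gives a product neighborhood $V_a\times V_b$ on which $\succ_1$ holds; since the limit $\succeq_2$ lies in $\Rbmon$ (Lemma~\ref{lem:monotoneiscompact}) it is locally strict by hypothesis, so $b\succeq_2 a$ yields $(b',a')\in V_b\times V_a$ with $b'\succ_2 a'$; then continuity of both relations and topological denseness of $B$ give $(a'',b'')\in B\times B$ with $a''\succ_1 b''$ and $b''\succ_2 a''$. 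Your own stability-plus-rationalization step, run once for each side on the budget $\{a'',b''\}$, then forces $c(\{a'',b''\})=\{a''\}$ and $c(\{a'',b''\})=\{b''\}$ simultaneously, the desired contradiction, with no order-comparable points of $B$ and no boundary casework needed. With that substitution your reduction to subsequential limits and your final diameter argument go through, and the proof becomes essentially the paper's, organized along subsequences instead of along $\bigcap_k P_k$.
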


That is, either a choice sequence is eventually not weakly rationalizable by a strictly monotone preference, \emph{or}, the set of rationalizations becomes arbitrarily small.

Note that, as for Theorem~\ref{thm:weakrat}, Theorem~\ref{thm:diameter} can dispense with the notion of transitivity.  In this case, we would define $\mathcal{P}^k(c)$ to be the set of (potentially nontransitive) complete, continuous, and strongly monotone relations weakly rationalizing $c^k$.

\subsection{Strong rationalizations}\label{sec:strongrat}

Say that the set $X$, together with the collection of finite experiments $\Sigma_\infty$, has the \df{countable order property} if for each $x\in X$ and each neighborhood $V$ of $x$ in $X$ there is $x',x''\in B \cap V$ with $x'\leq x\leq x''$.
We say that $X$ has the \df{squeezing property} if for any convergent sequence $\{x_n\}_n$ in $X$, if $x_n\rightarrow x^*$ then there is an increasing sequence $\{x'_n\}_n$, and an a decreasing sequence $\{x''_n\}_n$, such that $x'_n\leq x_n\leq x''_n$, and  $\lim_{n\rightarrow \infty} x'_n = x^* = \lim_{n\rightarrow \infty} x''_n$.

\begin{theorem}\label{thm:weaklymon}
	Suppose that
	\begin{enumerate}
		\item the subject's preference $\succeq^*$ is weakly monotone,
		\item $(X,\Sigma_\infty)$ has the countable order property, and $X$ the squeezing property.
	\end{enumerate}
	Let $\succeq_k$ be a continuous and weakly monotone preference that strongly rationalizes the choice function of order $k$ generated by $\succeq^*$. Then, $\succeq_k \rightarrow \succeq^*$ in the closed convergence topology.
\end{theorem}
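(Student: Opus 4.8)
The plan is to use compactness of the preference space to reduce to identifying the limit of an arbitrary convergent subsequence, show that limit agrees with $\succeq^*$ on the dense set $B$, and then bootstrap agreement on $B$ to agreement everywhere. The space of continuous preferences is a closed, hence compact, subspace of the compact metrizable space of Lemma~\ref{lem:closedconvergencecompact}; since a sequence in a compact metric space converges as soon as all its convergent subsequences share a limit, it suffices to fix a convergent subsequence, relabel it $\succeq_k\to\succeq$, and show $\succeq=\succeq^*$. First I would note that $\succeq$ is a continuous, weakly monotone preference: if $x\geq y$ then $(x,y)\in\succeq_k$ for every $k$, so Lemma~\ref{lem:tcclim}(2) gives $x\succeq y$. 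Throughout I also use that $\succeq^*$ is continuous (i.e.\ closed).

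\textbf{Agreement on $B$.} Fix $p,q\in B$; there is $M$ with $B_M=\{p,q\}$, so for $k\geq M$ strong rationalization forces $c_{\succeq_k}(\{p,q\})=c_{\succeq^*}(\{p,q\})$, i.e.\ $\succeq_k$ agrees with $\succeq^*$ on $(p,q)$ and $(q,p)$. If $p\succeq^* q$, then $(p,q)\in\succeq_k$ for all $k\geq M$, and Lemma~\ref{lem:tcclim}(2) gives $p\succeq q$; this direction needs nothing more. The content is the converse: assuming $q\succ^* p$, I must show $\neg(p\succeq q)$, equivalently (Lemma~\ref{lem:tcclim}(1)) that some neighborhood of $(p,q)$ is eventually disjoint from $\succeq_k$.

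\textbf{The crux.} Suppose for contradiction $p\succeq q$. By Lemma~\ref{lem:tcclim}(1) and a diagonal choice along shrinking neighborhoods of $(p,q)$, extract witnesses $(a_k,b_k)\in\succeq_k$ with $a_k\to p$, $b_k\to q$. Apply the squeezing property to $a_k\to p$ to obtain a decreasing sequence $a_k''\geq a_k$ with $a_k''\to p$, and to $b_k\to q$ to obtain an increasing $b_k'\leq b_k$ with $b_k'\to q$; weak monotonicity of $\succeq_k$ gives $a_k''\succeq_k a_k\succeq_k b_k\succeq_k b_k'$, hence $a_k''\succeq_k b_k'$ for all $k$. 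Now fix one large index $K$: since $(a_k'')$ decreases and $(b_k')$ increases, $a_k''\leq a_K''$ and $b_k'\geq b_K'$ for all $k\geq K$, so weak monotonicity upgrades this to $a_K''\succeq_k b_K'$ for all $k\geq K$. Use the countable order property to pick $\bar a\in B$ with $\bar a\geq a_K''$ and $\bar b\in B$ with $\bar b\leq b_K'$, each as close as desired to $a_K''$, respectively $b_K'$; then $\bar a\succeq_k a_K''\succeq_k b_K'\succeq_k\bar b$, so $\bar a\succeq_k\bar b$ for all $k\geq K$. Taking $K$ large and the approximations tight makes $(\bar b,\bar a)$ arbitrarily close to $(q,p)$, which lies in the open set $\{(y,x):y\succ^* x\}$ — openness is exactly where continuity of $\succeq^*$ enters, since completeness makes $\succ^*$ the complement of the image of the closed set $\succeq^*$ under the coordinate swap — so $\bar b\succ^*\bar a$. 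But $\bar a,\bar b\in B$, so $\{\bar a,\bar b\}=B_{M'}$ for a fixed $M'$, and strong rationalization forces $\bar b\succ_k\bar a$ for all $k\geq M'$. For $k\geq\max(K,M')$ this contradicts $\bar a\succeq_k\bar b$. Hence $\neg(p\succeq q)$, and $\succeq$ and $\succeq^*$ agree on $B\times B$.

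\textbf{From $B$ to $X$, and the obstacle.} To conclude $\succeq=\succeq^*$ — the general-$X$ analogue of Theorem~\ref{thm:connected-v1} — take $x\succeq^* y$, use the countable order property to find $x_n\in B$ with $x_n\geq x$, $x_n\to x$ and $y_n\in B$ with $y_n\leq y$, $y_n\to y$; weak monotonicity and transitivity of $\succeq^*$ give $x_n\succeq^* y_n$, agreement on $B$ gives $x_n\succeq y_n$, and closedness of $\succeq$ with $(x_n,y_n)\to(x,y)$ gives $x\succeq y$. The symmetric argument, using closedness of $\succeq^*$, gives the reverse, so $\succeq=\succeq^*$; as this holds for every convergent subsequence, $\succeq_k\to\succeq^*$. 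I expect the real difficulty to be the crux step: a naive bracketing of $(a_k,b_k)$ yields $k$-dependent $B$-pairs whose position in the enumeration $\Sigma_\infty$ is uncontrolled, so strong rationalization cannot be invoked on them; the squeezing property is exactly what fixes this, letting one fixed point $a_K''$ dominate the whole tail of witnesses at once and reducing matters to a single fixed $B$-pair $\{\bar a,\bar b\}$, whose fixed enumeration index is then beaten by taking $k$ large.
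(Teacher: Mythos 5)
Your overall strategy---compactness plus identification of subsequential limits, squeezing the closed-convergence witnesses, bracketing by $B$-points whose enumeration index is fixed, and then invoking strong rationalization at that fixed pair---uses the same essential ingredients as the paper's proof. But there is one genuine flaw: you assert at the outset that the subsequential limit $\succeq$ is a \emph{preference} (and that the continuous preferences form a closed, hence compact, subspace). Transitivity is not preserved under closed convergence; the paper makes exactly this point in Section~\ref{sec:notclosed} with Grodal's example, and Lemma~\ref{lem:closedconvergencecompact} gives compactness only for the space of continuous binary \emph{relations}. The false closedness claim is harmless (ambient compactness already yields convergent subsequences, and completeness and weak monotonicity do pass to the limit), but transitivity of $\succeq$ is used essentially in your final ``symmetric argument'': to get $x_n\succeq y_n$ from $x_n\geq x$, $x\succeq y$, $y\geq y_n$ you must chain three $\succeq$-comparisons, and that is precisely what you may not assume about the limit. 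As written, the step from agreement on $B$ to the inclusion $\succeq\subseteq\succeq^*$ on all of $X$ does not go through.

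The repair is available inside your own argument, and it is essentially how the paper organizes its proof. Your ``crux'' never uses that $p,q\in B$: for arbitrary $x,y\in X$ with $y\succ^* x$ and the contradiction hypothesis $x\succeq y$, the same witnesses--squeeze--bracket construction produces a single pair $\{\bar a,\bar b\}\subseteq B$ with $\bar a\succeq_k\bar b$ for all large $k$, while openness of $\succ^*$ gives $\bar b\succ^*\bar a$ and strong rationalization at $B_{M'}=\{\bar a,\bar b\}$ forces $\bar b\succ_k\bar a$---a contradiction. This yields $y\succ^* x\Rightarrow\neg(x\succeq y)$ on all of $X$, which, combined with your forward inclusion $x\succeq^* y\Rightarrow x\succeq y$ (that direction uses only transitivity and monotonicity of $\succeq^*$) and completeness of $\succeq^*$, gives $\succeq=\succeq^*$ with no appeal to transitivity or monotonicity of the limit. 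The paper's proof proceeds exactly in this way, establishing $\succ^*\subseteq\succ$ and $\succeq^*\subseteq\succeq$ directly on $X$ (using only completeness of the limit). Note finally that, like you, the paper's proof uses continuity of $\succeq^*$ (openness of $\succ^*$) even though the theorem statement lists only weak monotonicity, so that reliance is not held against you.
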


The countable order and squeezing properties are technical but not vacuous. Importantly, as stated below in Proposition~\ref{prop:AAandRn}, they are satisfied for two common cases of interest discussed in Section~\ref{sec:SMv1}. Therefore, Theorem~\ref{thm:weaklymon} generalizes Theorem~\ref{thm:weaklymon-v1}.

\begin{proposition}\label{prop:AAandRn}
	If either
	\begin{enumerate}
		\item the set of alternatives $X$ is $\Re^n$ endowed with the order of weak vector dominance, or
		\item the set of alternatives $X$ is $\Delta([a,b])$ endowed with the order of weak first-order stochastic dominance,
	\end{enumerate}
 	then $X$ has the squeezing property, and there is $\Sigma_\infty$ such that $(X,\Sigma_\infty)$ has the countable order property.
\end{proposition}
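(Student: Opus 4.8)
The plan is to treat the two cases $X=\Re^n$ and $X=\Delta([a,b])$ separately, and in each case to verify first the squeezing property and then the countable order property; the latter amounts to exhibiting a countable dense set $B$ (from which a suitable $\Sigma_\infty$ is obtained by the diagonalization noted in the footnote to the definition of a sequence of experiments) with the stated sandwiching feature.

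\emph{Euclidean case.} Given $x_m\to x^*$ in $\Re^n$, I would set $\delta_n=\sup_{m\ge n}\norm{x_m-x^*}$ (finite since $x_m\to x^*$, non-increasing in $n$, and $\delta_n\to 0$) and take $x'_n=x^*-\delta_n\one$, $x''_n=x^*+\delta_n\one$. Then $\{x'_n\}$ is increasing and $\{x''_n\}$ decreasing in the product order because $\delta_n$ is non-increasing, both converge to $x^*$, and $\norm{x_n-x^*}\le\delta_n$ forces $x'_n\le x_n\le x''_n$ coordinatewise. For the countable order property I would use $B=\Qe^n$: given $x$ and an $\ep$-ball around it, density of $\Qe$ lets me choose coordinatewise rationals $x'_i\in(x_i-\ep/\sqrt n,\,x_i]$ and $x''_i\in[x_i,\,x_i+\ep/\sqrt n)$, so that $x'\le x\le x''$ with $x',x''$ within $\ep$ of $x$.

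\emph{The case $X=\Delta([a,b])$.} I would work through cumulative distribution functions, using $\mu\le_{FOSD}\nu\iff F_\mu\ge F_\nu$ pointwise and $\mu_m\to\mu$ weakly $\iff$ $F_{\mu_m}\to F_\mu$ at every continuity point of $F_\mu$. For squeezing, from $\mu_m\to\mu^*$ form $G_n=\sup_{m\ge n}F_{\mu_m}$ and $H_n=\inf_{m\ge n}F_{\mu_m}$; $H_n$ is already a CDF, while $G_n$ need not be right-continuous, so replace it by its right-continuous modification $\tilde G_n$, and let $\nu_n,\rho_n\in\Delta([a,b])$ be the measures with CDFs $\tilde G_n,H_n$. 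Then $\tilde G_n$ is non-increasing and $H_n$ non-decreasing in $n$, so $\{\nu_n\}$ is increasing and $\{\rho_n\}$ decreasing for FOSD; $\tilde G_n\ge G_n\ge F_{\mu_n}\ge H_n$ gives $\nu_n\le_{FOSD}\mu_n\le_{FOSD}\rho_n$; and at a continuity point $t$ of $F_{\mu^*}$, $H_n(t)\uparrow\liminf_m F_{\mu_m}(t)=F_{\mu^*}(t)$ while $\tilde G_n(t)$ is squeezed between $G_n(t)\downarrow\limsup_m F_{\mu_m}(t)=F_{\mu^*}(t)$ and $\inf_{s>t}G_n(s)$, which tends to $F_{\mu^*}(t)$ upon letting $s\downarrow t$ through continuity points; hence $\nu_n,\rho_n\to\mu^*$ weakly. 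For the countable order property I would fix the countable weak*-dense $B$ of finitely supported measures $\sum_{j=1}^M(c_j-c_{j-1})\delta_{q_j}$ with $0=c_0<\dots<c_M=1$ rational and $q_j\in(\Qe\cup\{a,b\})\cap[a,b]$. Given $\mu$ and a weak* neighborhood $V$, pass to a ball $\{W_1(\cdot,\mu)<\ep\}\subseteq V$ (the Wasserstein-$1$ metric metrizes the weak* topology on $\Delta([a,b])$, with $W_1(\mu,\nu)=\int_0^1|Q_\mu-Q_\nu|$ for quantile functions and $\mu\le_{FOSD}\nu\iff Q_\mu\le Q_\nu$). Since $Q_\mu$ is bounded and monotone, hence Riemann integrable, choose a fine rational partition $0=c_0<\dots<c_M=1$ whose lower and upper Darboux sums for $Q_\mu$ lie within $\ep/2$ of $\int_0^1 Q_\mu$, and on each $(c_{j-1},c_j]$ pick $q_j,q'_j\in(\Qe\cup\{a,b\})\cap[a,b]$ with $q_j$ just below $\inf_{(c_{j-1},c_j]}Q_\mu$ and $q'_j$ just above $\sup_{(c_{j-1},c_j]}Q_\mu$. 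The step functions $\sum_j q_j\one_{(c_{j-1},c_j]}\le Q_\mu\le\sum_j q'_j\one_{(c_{j-1},c_j]}$ are the quantile functions of $\nu'=\sum_j(c_j-c_{j-1})\delta_{q_j}\in B$ and $\nu''=\sum_j(c_j-c_{j-1})\delta_{q'_j}\in B$, so $\nu'\le_{FOSD}\mu\le_{FOSD}\nu''$, and $W_1(\nu',\mu),W_1(\nu'',\mu)<\ep$ by the Darboux bound plus the pointwise slack, giving $\nu',\nu''\in B\cap V$.

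I expect the main obstacle to be exactly this last construction: getting a measure that is \emph{simultaneously} FOSD-comparable to $\mu$, weak*-close to $\mu$, and a member of one fixed countable family. The quantile/Riemann-sum device reconciles the three demands — taking rational partition points makes the weights automatically rational, and rounding the support levels only downward (resp.\ upward) preserves the order comparison exactly. The remaining points (right-continuity of $\tilde G_n$ and that $\tilde G_n,H_n$ are CDFs of measures on $[a,b]$, that $W_1$ metrizes the weak* topology, and the elementary facts about Darboux sums of monotone functions) are routine.
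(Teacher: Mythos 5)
Your proof is correct, and on the squeezing property it is essentially the paper's own argument: for $\Re^n$ the paper takes the coordinatewise tail envelopes $x'_n=\inf\{x_m:m\geq n\}$ and $x''_n=\sup\{x_m:m\geq n\}$ (your $x^*\pm\delta_n\one$ is a trivial variant), and for $\Delta([a,b])$ the paper invokes the complete-lattice structure of first-order stochastic dominance and then identifies the tail inf/sup exactly as you do, via the pointwise infimum of the cdfs and the right-continuous modification of their supremum, with the same continuity-point argument for weak* convergence. Where you genuinely depart from the paper is the countable order property for $\Delta([a,b])$: the paper takes $B$ to be the finitely supported rational-weight measures on rational support points and dismisses the sandwiching claim as a direct implication of Theorem 15.11 of Aliprantis--Border, whereas you give a self-contained construction via quantile functions, the identity $W_1(\mu,\nu)=\int_0^1\abs{Q_\mu-Q_\nu}$, and Darboux sums of the monotone function $Q_\mu$. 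This buys an explicit, checkable argument, and your inclusion of the endpoints $a,b$ as admissible support points actually repairs a small imprecision in the paper's choice of $B$ (a measure supported only on $\Qe\cap[a,b]$ cannot dominate $\delta_b$ in the FOSD order when $b\notin\Qe$). The one loose end is that the step functions $\sum_j q_j\one_{(c_{j-1},c_j]}$ are the quantile functions of $\nu',\nu''$ only if the $q_j$ are chosen non-decreasing; you should either arrange this explicitly (possible, since the interval infima and suprema of $Q_\mu$ are themselves non-decreasing across intervals) or bypass it by noting that monotone rearrangement is order-preserving and an $L^1$-contraction, so both the FOSD comparison and the $W_1$ bound survive regardless. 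This is a routine fix, not a gap in the approach.
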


One key element behind the above two results is a natural order on the sets of possible alternatives. Via monotonicity, the order adds structure to the families of preferences under consideration. Crucially, the order also relates to the topology on the set $X$.

\subsection{Applications}

We have already highlighted the application of our results to Euclidean consumption spaces and Anscombe-Aumann acts over monetary lotteries (see Section~\ref{sec:results}). Here we discuss two other domains of application of our results.

\subsubsection{Lotteries over a finite prize space.} Let $\Pi$  be  a finite \df{prize space}. The objects of  choice are the elements of $X=\Delta(\Pi)$. Fix a strict ranking of the elements of $\Pi$, and enumerate the elements of $\Pi$ so that $\pi_1$ is ranked above $\pi_2$, which is ranked above $\pi_3$, and so on. Then the elements of $X$ can be ordered with respect to first-order stochastic dominance: $x$ is larger than $y$ in this order if the probability of each set $\{\pi_1,\ldots,\pi_k\}$ is at least as large under $x$ than under $y$, for all $k=1,\ldots,\abs{\Pi}$. A preference over $X$ is monotone if it always prefer larger lotteries over smaller ones.\footnote{The objective order on $\Pi$ is not really needed in this case; see Example~\ref{ex:euexample}. The point of the example is to illustrate~Theorem~\ref{thm:weakrat}.}

Suppose that choices are generated by an expected utility preference $\succeq^*$. The fact that $\succeq^*$ belongs to the expected utility family implies that there are rationalizing expected utility preference $\succeq_k$, for each finite experiment $k$. Then, the above results ensure that these converge to $\succeq^*$. Of course the same would be true of any (monotone and continuous) rationalizing preference: any mode mis-specification would be corrected in the limit. In other words, any arbitrary sequence of rationalization has the data-generating preference $\succeq^*$ as its limit.

\subsubsection{Dated rewards}\label{sec:datedrewards}
We can apply our theory to intertemporal choice. Specifically to the choice of \df{dated rewards} (\cite{fishburndr}).  The set of elements of choice is $\Re^2_+$.  A point $(x,t)\in\Re^2_+$ is interpreted as a  monetary payment of $x$ delivered on date $t$.  Endow $\Re^2_+$ with the order $\leq^i$ whereby $(x,t)\leq^i (x',t')$ if $x\leq x'$ and $t'\leq t$.  Monotonicity of preferences means that more money earlier is  preferred to less money later.

Now one can postulate a preference $\succeq^*$ such that $(x',t')\succeq^*(x,t)$ iff $\delta^t v(x)\leq \delta^{t'} v(x')$, for some $\delta\in (0,1)$ and a strictly increasing function $v:\Re_+\rightarrow \Re$. This means that $\succeq^*$ follows the exponential discounting model. Again, any finite experiment would be rationalizable by exponential preference, and these would converge to the limiting $\succeq^*$.

\section{Identification of Utility Functions}\label{sec:utility}

In this section, we investigate the relation between preferences and utility.  Preferences remain topologized with the closed convergence topology.  We study continuous utility representations, and ask when the identification of a preference allows the identification of a utility (or conversely).  We show that if we endow the set of continuous utility functions with the topology of uniform convergence on compacta, then convergence in one sense is equivalent to convergence in the other.  Formally, we establish that there is a homeomorphism between the two spaces (when we identify two utility functions representing the same preference relation).

Throughout this section, the space of possible alternatives $X$ is connected (and remains a locally compact Polish space, as described in our model).
Connectedness is imposed so that every continuous preference admits a continuous representation, as in \citet{debreu}.

We denote by $\mathcal{U}$ the set of strictly increasing and continuous utility functions on $X$.  Similarly, $\mathcal{R}^{\mbox{mon}}$ denotes the set of preferences which are strictly monotone and continuous.

Suppose the existence of a set $M\subseteq X$, satisfying the following conditions:
\begin{itemize}
	\item $M$ has at least two distinct elements;  $M$ is connected and totally ordered by $<$. In other words $x,y \in M$ and $x\neq y$ implies $x<y$ or $y<x$.
	\item For any $m\in M$ and any neighborhood $U$ of $m$ in $X$ there is
	  $\ul m,\overline m\in M$, with \[m\in [\ul m,\overline m]\subseteq U.\]
	  Moreover if $m$ is not the largest element of $M$ we can choose
	  $\overline m$ such that $m <\overline m$, and if $m$ is not the smallest element
	  we can choose $\ul m$ such that $\ul m< m$.
	\item Any bounded sequence in $X$ is bounded by elements of $M$. That
	  is, for any bounded sequence $\{x_n\}$ there are $\underline{m}$ and
	  $\overline{m}$ and $k$ large so that $\underline{m}\leq x_n \leq
	  \overline{m}$.
\end{itemize}

Let $\Phi:\mathcal{U}\rightarrow\R^{\mbox{mon}}$ such that $\Phi(u)$ is the preference represented by $u\in\mathcal{U}$.\footnote{That is, $x \mathrel{\Phi(u)} y$ if and only if $u(x) \geq u(y)$.}

We provide two examples below that demonstrate the property just mentioned for the case of alternatives of the form $X=\Delta([a,b])$ and $X=\Delta([a,b])^n$.

\begin{example}\label{example:weaktopology2}
	Let $X=\Delta([a,b])$ be the set of Borel probability distributions on a real compact interval $S= [a,b] \subseteq \Re$. Endow $X$ with the weak* topology and let $\leq$ be first-order stochastic dominance.  Observe that $X$ is compact, metrizable, and separable (Theorems 15.11 and 15.12 of \citet{aliprantis2006infinite}).  Observe also that $X$ has the countable order property (see Lemma~\ref{example:weaktopology} in Appendix~\ref{app:proof:prop:AAandRn}).

	Let $<$ be the strict part of $\leq$. Identify $S$ with degenerate probability distributions, so that $s\in S$ denotes the element of $X$ that assigns probability 1 to $\{s\}$, say $\delta_s$.  Let $M = S$. The relative topology on $S$ coincides with the usual topology, so $S$ is connected.  Note that $a\leq x \leq b$ for any $x\in X$.

	Let $m\in M$ and $U$ be a neighborhood of $m$ in $X$. For each $x\in X$, let $F^x$ be the cdf associated to $x$. Choose $\ep$ such that the ball $B_\ep (m)$ (in the Levy metric) with center $m$ and radius $\ep$ is contained in $U$. Let $\ep'<\ep$. Then if $y\in [m-\ep',m+\ep']$ we have that
	\begin{align*}
		F^y(s-\ep)-\ep \leq F^{m-\ep'} (s-\ep) - \ep & < 1 = F^m (s) \text{ if } s-\ep\geq m-\ep' \\
		F^y(s-\ep)-\ep \leq F^{m-\ep'} (s-\ep) - \ep & = -\ep < F^m (s) \text{ if } s-\ep < m-\ep'
	\end{align*}
	Similarly,
	\begin{align*}
		F^m(s) = 0 < F^{m+\ep} (s+\ep) + \ep & \leq F^y(s + \ep) + \ep \text{ if } s + \ep \leq m+\ep' \\
		F^m(s) < 1+\ep = F^{m+\ep'}(s+\ep)+\ep & \leq F^y(s+\ep)+\ep \text{ if } s + \ep > m +\ep'.
	\end{align*}
	These inequalities mean that $y\in B_\ep(m)$. Thus $[m-\ep',m+\ep']\subseteq U$, as $y$ was arbitrary.
\end{example}

\begin{example}
	Let $\Omega$ be a nonempty set such that $|\Omega|<+\infty$.  Suppose $\Omega$ represents a set of \emph{states of the world}.  Then $\Delta([a,b])^{\Omega}$, endowed with the product weak* topology, and ordered by the product order, of $\Omega$ copies of first order stochastic dominance, represents the set of Anscombe-Aumann acts, \citet{anscombe}.  Let $S=\{(\delta_s,\ldots,\delta_s):s\in[a,b]\}$; the constant acts whose outcomes are degenerate lotteries.  Let $M=S$, as in the previous example; and all topological properties satisfied there are also satisfied here.
\end{example}

The following results generalize those derived originally by \citet{mas1974continuous}, who worked with $\Re_+^n$.

\begin{theorem}\label{thm:openmap}
	$\Phi$ is an open map.
\end{theorem}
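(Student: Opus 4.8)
The plan is to establish openness through its sequential reformulation in metric spaces. Both $\U$ (topologized by uniform convergence on compacta, which is metrizable since a locally compact Polish space is $\sigma$-compact) and $\Rmon$ (a subspace of the compact metrizable space of Lemma~\ref{lem:closedconvergencecompact}) are metrizable, so $\Phi$ is open if and only if: for every $u\in\U$ and every sequence $\succeq_n\to\Phi(u)$ in $\Rmon$ there exist $u_n\in\U$ with $\Phi(u_n)=\succeq_n$ and $u_n\to u$. The ``only if'' direction is routine (extract $u_n$ from the images of shrinking balls around $u$); for the ``if'' direction, if $\Phi$ were not open there would be an open $\mathcal{V}\ni u$ with $\Phi(\mathcal{V})$ not a neighbourhood of $\Phi(u)$, hence a sequence $\succeq_n\to\Phi(u)$ with $\succeq_n\notin\Phi(\mathcal{V})$; but the promised $u_n\to u$ eventually lie in $\mathcal{V}$, a contradiction. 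So it suffices to build the $u_n$.

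The construction uses an ``$M$-certainty equivalent.'' For any continuous, strictly monotone $\succeq$ I would define $\mu_\succeq\colon X\to M$ by letting $\mu_\succeq(x)$ be the unique $m\in M$ with $x\sim m$. Existence: by the third bullet of the $M$-hypothesis $x$ is order-bounded, $\ul m\leq x\leq\overline m$ with $\ul m,\overline m\in M$, so $\overline m\succeq x\succeq\ul m$; the sets $\{m\in M:m\succeq x\}$ and $\{m\in M:x\succeq m\}$ are closed (preimages of $\succeq$, which is closed), nonempty, and cover the connected set $M$, hence meet. Uniqueness follows because $M$ is totally ordered by $<$ and $\succeq$ is strictly monotone. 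I would then record the key structural lemma: for every continuous strictly increasing $w$ on $X$, the restriction $w|_M\colon M\to w(M)$ is a homeomorphism onto a real interval. Injectivity and surjectivity onto an interval are immediate from strict monotonicity and connectedness (and order-boundedness of each point by $M$); openness uses the second bullet of the $M$-hypothesis: at an interior point $m$ of $M$, any neighbourhood contains an order interval $[\ul m,\overline m]_M$ with $\ul m<m<\overline m$, and $w([\ul m,\overline m]_M)\supseteq(w(\ul m),w(\overline m))\ni w(m)$. Applying this with $w$ a continuous representation of $\succeq$ (which exists by \citet{debreu}) and writing $\mu_\succeq=(w|_M)^{-1}\circ w$ shows $\mu_\succeq$ is continuous.

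Now fix $u$, put $\succeq:=\Phi(u)$, take $\succeq_n\to\succeq$, and set $u_n:=u\circ\mu_{\succeq_n}$. Since $\mu_{\succeq_n}$ is continuous and $\succeq_n$ restricted to $M$ agrees with the order of $M$, one checks directly that $u_n$ is continuous, strictly increasing, and represents $\succeq_n$: indeed $u_n(x)\geq u_n(y)\iff\mu_{\succeq_n}(x)\geq\mu_{\succeq_n}(y)$ in $M\iff\mu_{\succeq_n}(x)\succeq_n\mu_{\succeq_n}(y)\iff x\succeq_n y$. Hence $u_n\in\U$ and $\Phi(u_n)=\succeq_n$. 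The heart of the argument is that $u_n\to u$ uniformly on every compact $K\subseteq X$. Suppose not: along a subsequence there are $x_n\in K$ with $\abs{u_n(x_n)-u(x_n)}\geq\ep$, and, refining, $x_n\to x^*\in K$ so $u(x_n)\to u(x^*)$. The sequence $(x_n)$ is bounded, hence eventually order-bounded by some $\ul m,\overline m\in M$; by monotonicity of $\succeq_n$ this gives $u(\ul m)\leq u(\mu_{\succeq_n}(x_n))=u_n(x_n)\leq u(\overline m)$, so after a further refinement $u_n(x_n)\to\ell$ for some $\ell\in[u(\ul m),u(\overline m)]\subseteq u(M)$, and then $\mu_{\succeq_n}(x_n)=(u|_M)^{-1}(u_n(x_n))\to m^*:=(u|_M)^{-1}(\ell)\in M$ because $u|_M$ is a homeomorphism. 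Since $x_n\sim_n\mu_{\succeq_n}(x_n)$, the pairs $(\mu_{\succeq_n}(x_n),x_n)$ and $(x_n,\mu_{\succeq_n}(x_n))$ lie in $\succeq_n$ and converge to $(m^*,x^*)$ and $(x^*,m^*)$ respectively; Lemma~\ref{lem:tcclim}(2) then forces $m^*\sim x^*$ under $\succeq=\Phi(u)$, i.e.\ $u(x^*)=u(m^*)=\ell=\lim u_n(x_n)$, contradicting $\abs{u_n(x_n)-u(x_n)}\geq\ep$ together with $u(x_n)\to u(x^*)$.

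I expect the main obstacle to be precisely this last step: pinning down the limit of $\mu_{\succeq_n}(x_n)$ tightly enough to invoke closed convergence. That is exactly where all three structural properties of $M$ enter (order-boundedness to keep the certainty equivalents in a bounded range, the ``interval-within-$M$'' property to make $u|_M$ a homeomorphism so that convergence of the utility values transfers back to convergence in $M$), and it is the analogue of Mas-Colell's continuity of the certainty-equivalent representation. The remaining ingredients---metrizability, the sequential reformulation of openness, and the verification that $u_n\in\U$ with $\Phi(u_n)=\succeq_n$---are routine and I would treat them briefly.
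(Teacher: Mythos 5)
Your proposal is correct and follows essentially the same route as the paper's proof: reduce openness to lifting a convergent sequence $\succeq_n\rightarrow\Phi(u)$ to a convergent sequence of representations, and construct the lift as the $M$-certainty-equivalent utilities $u_n=u\circ\mu_{\succeq_n}$, using the structural properties of $M$ and closed convergence to show the certainty equivalents converge to the right limit. The only differences are organizational (you obtain continuity of $\mu_{\succeq}$ via the homeomorphism $u|_M$ and prove uniform convergence on compacta by a subsequence/contradiction argument, whereas the paper proves continuity of $u_{\succeq}$ directly from the interval property of $M$ and invokes Dugundji's continuous-convergence criterion).
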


\begin{theorem}\label{thm:border}(\cite{border1994dynamic} Thm 8)
Let $(X,d)$ be a locally compact and separable metric space and $\R$
be the space of continuous preference relations on $X$, endowed with the
topology of closed convergence. If $\succeq_u = \Phi(u)$ is locally
strict, then $\Phi$ is continuous at $u$. If $M$ has no isolated
points, and $\Phi$ is continuous at $u$, then $\succeq_u$ is locally
strict.
\end{theorem}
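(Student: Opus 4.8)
Since the statement is Theorem~8 of \citet{border1994dynamic}, the plan is to adapt that argument to the present framework, treating the two implications separately and using throughout the characterization of closed convergence in Lemma~\ref{lem:tcclim} together with the fact that $\mathcal{U}$ carries the topology of uniform convergence on compacta.

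For the first implication, suppose $\succeq_u$ is locally strict and let $u_n\to u$ uniformly on compacta; write $\succeq_n=\Phi(u_n)$ and $\succeq=\Phi(u)$. I would verify the two clauses of Lemma~\ref{lem:tcclim}. Clause~(2) in fact holds for \emph{any} sequence $u_n\to u$: if the neighborhood/index hypothesis there produces points $(x_k,y_k)\to(x,y)$ with $u_{n_k}(x_k)\geq u_{n_k}(y_k)$ for some $n_k\to\infty$, then $\{x_k,y_k\}_k\cup\{x,y\}$ is compact, so uniform convergence on this set together with continuity of $u$ forces $u_{n_k}(x_k)\to u(x)$ and $u_{n_k}(y_k)\to u(y)$, whence $u(x)\geq u(y)$, i.e. $x\succeq y$. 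For clause~(1), fix $x\succeq y$ and a neighborhood $V$ of $(x,y)$; local strictness of $\succeq_u$ yields $(x',y')\in V$ with $u(x')>u(y')$, and since $u_n(x')\to u(x')$ and $u_n(y')\to u(y')$ we get $u_n(x')>u_n(y')$ for all large $n$, so $(x',y')\in\succeq_n\cap V$. Hence $\succeq_n\to\succeq$ and $\Phi$ is continuous at $u$. This direction uses neither local compactness of $X$ nor the set $M$.

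For the converse I would argue by contraposition: assuming $M$ has no isolated points and $\succeq_u$ is \emph{not} locally strict, I would construct a sequence $u_n\to u$ in $\mathcal{U}$ with $\Phi(u_n)\not\to\Phi(u)$. Failure of local strictness supplies $x\succeq_u y$ and an open neighborhood $V_1\times V_2$ of $(x,y)$ containing no pair $(x',y')$ with $x'\succ_u y'$; after shrinking I may assume $\overline{V}_1,\overline{V}_2$ are compact and, when $x\neq y$, disjoint. Evaluating the ``no strict pair'' property at $(x',y)$ and at $(x,y')$ and combining with $u(x)\geq u(y)$ forces $u\leq u(x)$ on $V_1$, $u\geq u(y)$ on $V_2$, and $u(x)=u(y)=:\alpha$; thus $u$ is maximized over $V_1$ at $x$ and minimized over $V_2$ at $y$. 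The goal is then to perturb $u$ to $u_n$ so that, on a fixed smaller neighborhood $W_1\times W_2$ of $(x,y)$ with $\overline{W}_i\subseteq V_i$, one has $u_n(x')<u_n(y')$ for \emph{every} $(x',y')\in W_1\times W_2$: this makes $\Phi(u_n)\cap(W_1\times W_2)=\varnothing$, whereas $(x,y)\in\succeq_u\cap(W_1\times W_2)$, so by Lemma~\ref{lem:tcclim}(1) convergence $\Phi(u_n)\to\Phi(u)$ is impossible. Such a $u_n$ should lower $u$ on a neighborhood of $x$ and raise it on a neighborhood of $y$, by amounts tending to $0$ so that $u_n\to u$ uniformly on compacta. (If $x=y$, the ``no strict pair'' property instead forces $u$ to be constant on a neighborhood of $x$, which is then a $\leq$-antichain; any continuous perturbation supported there is automatically monotone, so this case is easier.)

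The routine parts are the first implication and the verification of clause~(2). I expect the main obstacle to be the perturbation in the converse: a naive bump equal to $1$ near $x$ and $0$ outside $V_1$ cannot be weakly order-decreasing (typically there are points below $\overline{W}_1$ lying outside $V_1$), so one cannot simply subtract such a bump from $u$ without destroying strict monotonicity. Reconciling (i)~strict separation of $x$ from $y$ near the witnessing pair, (ii)~strict $\leq$-monotonicity of each $u_n$ on all of $X$, and (iii)~uniform convergence on compacta is exactly where the hypotheses on $M$ enter: its total order, the way it threads through the topology (the spanning and boundedness properties), and the absence of isolated points together provide the order-theoretic room around $x$ and $y$ needed to shape the perturbation monotonically. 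This is the delicate core of Border's construction, which I would reproduce in full.
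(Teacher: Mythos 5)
The paper contains no proof of this statement to compare against: it is imported, in adapted form, as Theorem 8 of \citet{border1994dynamic} (see also the literature review, where the authors say they take the notion of local strictness and the continuity result from Border and Segal). Judged as a self-contained argument, your first implication is correct and is the standard one: clause (2) of Lemma~\ref{lem:tcclim} follows for any $u_n\to u$ by uniform convergence on the compact set $\{x_k\}_k\cup\{y_k\}_k\cup\{x,y\}$, and clause (1) follows from a locally strict witness $(x',y')$ plus pointwise convergence at $x',y'$; you are also right that this direction uses neither local compactness nor $M$.

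The converse, however, is a genuine gap rather than a routine omission. Your reduction is fine up to the point where you need, for each $n$, a function $u_n\in\mathcal{U}$ with $u_n\to u$ uniformly on compacta and $u_n(x')<u_n(y')$ for all $(x',y')\in W_1\times W_2$; but this is exactly the step you do not carry out, deferring to ``Border's construction.'' The difficulty is real and is specific to this paper's setting: here $\Phi$ is defined on $\mathcal{U}$, the \emph{strictly increasing} continuous utilities, so continuity of $\Phi$ at $u$ is continuity along monotone perturbations only, and the contrapositive therefore requires a perturbation that simultaneously (i) reverses the ranking on all of $W_1\times W_2$, (ii) remains strictly monotone with respect to $\leq$ on all of $X$, and (iii) vanishes in the limit uniformly on compacta. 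You correctly identify that a naive bump fails (ii), and you assert that the structure of $M$ and the hypothesis that $M$ has no isolated points supply the needed room, but you never show how; note also that the statement here phrases the no-isolated-points condition in terms of $M$, a set introduced by this paper, so the adaptation of Border and Segal's argument to this monotone domain is precisely the content that must be written down. As it stands, the second implication is a plan, not a proof: either cite the result as the authors do, or complete the monotone perturbation construction explicitly (including the case $x=y$, where you still must check that the perturbed function is strictly increasing globally, not merely on the ``antichain'' neighborhood).
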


Define an equivalence relation $\simeq$ on $\mathcal{U}$ by $u\simeq v$ if there exists $\varphi:\Re\rightarrow\Re$ strictly increasing for which $u=\varphi \circ v$.  Then let $\mathcal{U}/\simeq$ denote the set of equivalence classes of $\mathcal{U}$ under $\simeq$ endowed with the quotient topology; the equivalence class of $u\in\mathcal{U}$ is written $[u]$. The map $\hat{\Phi}:\mathcal{U}/\simeq\rightarrow \mathcal{R}^{\mbox{mon}}$ is defined in the natural way, via $\hat{\Phi}([u])=\Phi(u)$.\footnote{Observe that this function is well-defined.  If $v\in [u]$, then there is strictly increasing $\varphi$ for which $v = \varphi \circ u$, hence $v$ and $u$ represent the same preference.}

\begin{theorem}\label{cor:homeomorphism}
	$\hat{\Phi}$ is a homeomorphism.
\end{theorem}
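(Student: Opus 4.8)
The plan is to prove that $\hat\Phi$ is a continuous open bijection; since any such map is a homeomorphism, this suffices. The argument is essentially an assembly of Theorem~\ref{thm:openmap} and Theorem~\ref{thm:border} together with the universal property of the quotient topology, so the bulk of the work lies in checking that the pieces fit.

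\textbf{Bijectivity.} I would first check that $\Phi$ maps $\U$ \emph{onto} $\Rmon$. For $u\in\U$ the set $\{(x,y):u(x)\geq u(y)\}$ is closed, so $\Phi(u)$ is continuous, and strict monotonicity of $\Phi(u)$ is immediate from $u$ being strictly increasing. Conversely, given $\succeq\in\Rmon$, the assumed connectedness of $X$ guarantees (as noted in the text, following \citet{debreu}) a continuous representation $u$ of $\succeq$; strict monotonicity of $\succeq$ forces $u$ to be strictly increasing, so $u\in\U$ and $\Phi(u)=\succeq$. Hence $\hat\Phi$ is onto. For injectivity, suppose $\Phi(u)=\Phi(v)=\succeq$. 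Then $u(x)\geq u(y)\iff v(x)\geq v(y)$ for all $x,y$, so the rule $\psi(v(x)):=u(x)$ is a well-defined strictly increasing function on the interval $v(X)$ with $u=\psi\circ v$; extending $\psi$ to a strictly increasing $\varphi:\Re\rightarrow\Re$ yields $u\simeq v$, i.e.\ $[u]=[v]$. (Well-definedness of $\hat\Phi$ is already recorded in the text.)

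\textbf{Continuity.} Let $q:\U\rightarrow\U/\simeq$ be the quotient map, so that $\hat\Phi\circ q=\Phi$. By Theorem~\ref{thm:border}, $\Phi$ is continuous at every $u$ such that $\Phi(u)$ is locally strict. Since each $u\in\U$ is strictly increasing, $\Phi(u)$ is a strictly monotone continuous preference, and under the maintained hypotheses on $X$ and $M$ every such preference is locally strict---this is the one structural fact to verify, and it follows from strict monotonicity together with the second bullet in the assumptions on $M$, which supplies strictly larger and strictly smaller alternatives arbitrarily close to any given point. Hence $\Phi$ is continuous on all of $\U$, and since $q$ is a quotient map, the induced map $\hat\Phi$ is continuous.

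\textbf{Openness and conclusion.} Let $W\subseteq\U/\simeq$ be open. Then $q^{-1}(W)$ is open in $\U$ by definition of the quotient topology, and $\Phi(q^{-1}(W))$ is open in $\Rmon$ by Theorem~\ref{thm:openmap}. Because $q$ is surjective, $\Phi(q^{-1}(W))=\hat\Phi(q(q^{-1}(W)))=\hat\Phi(W)$, so $\hat\Phi$ is an open map. A continuous open bijection is a homeomorphism, which completes the proof. The only genuine obstacle is the appeal to Theorem~\ref{thm:border}, which delivers continuity of $\Phi$ only at preferences that are locally strict; establishing that \emph{all} strictly monotone continuous preferences on $X$ are locally strict---so that $\Phi$ is continuous globally---is the hinge of the argument, while everything else is bookkeeping with the quotient construction and the cited open-mapping theorem.
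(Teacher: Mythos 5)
Your overall architecture is exactly the route the paper intends: Theorem~\ref{cor:homeomorphism} carries no separate proof in the paper and is meant to be read off from Theorem~\ref{thm:openmap} (openness), Theorem~\ref{thm:border} (continuity at locally strict preferences), Debreu's representation theorem on the connected space $X$ (surjectivity), and the quotient bookkeeping you carry out. The bijectivity, continuity-via-quotient, and openness steps are fine as written (one small caveat: in the injectivity step, a strictly increasing $\psi$ on $v(X)$ need not extend to a strictly increasing $\varphi$ on all of $\Re$ when, say, $u$ is unbounded while $v$ is bounded; the intended reading of $\simeq$ is ordinal equivalence, with $\varphi$ only needed on the range of $v$, so this is cosmetic rather than substantive).

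The genuine gap is in the step you yourself identify as the hinge: your justification that every $\succeq\in\Rmon$ is locally strict. You appeal to ``the second bullet in the assumptions on $M$, which supplies strictly larger and strictly smaller alternatives arbitrarily close to any given point,'' but that bullet only provides order-comparable points in small neighborhoods of points $m\in M$; it says nothing about arbitrary $x\in X$, and the third bullet only gives global (not local) order bounds. Consequently, local strictness of all strictly monotone continuous preferences does not follow from the stated assumptions on $(X,\leq,M)$ alone: one can build a connected, compact $X$ with an $M$ satisfying all three bullets that contains an open region forming an antichain, and a continuous, strictly increasing utility that is constant on that region, so the represented preference fails local strictness (and, by the converse direction of Theorem~\ref{thm:border}, $\Phi$ fails continuity there). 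What makes the argument work in the cases the theorem is meant to cover ($\Re^n_+$, $\Delta([a,b])^\W$, dated rewards) is a direct property of those domains: for any pair $x\succeq y$, either $y$ admits strictly dominated points in every neighborhood, or $y$ is the global order-minimum, in which case either $x>y$ (so $x\succ y$ already) or $x=y$ and one perturbs $x$ upward. So you should either verify local strictness domain-by-domain along these lines, or state it as an explicit hypothesis (as the paper does, for instance, in condition (3) of Theorem~\ref{thm:weakrat}); it cannot be derived from the $M$-bullets as your write-up claims.
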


Given the discussion of Example~\ref{example:weaktopology2}, Theorem~\ref{cor:homeomorphism} generalizes Theorem~\ref{thm:homeomorphism-v1}. The role of $M$ in the case of $\Re^n_+$ is played by the equal-coordinates ray. It is also straightforward to apply Theorem~\ref{cor:homeomorphism} to intertemporal choice by way of the model of dated rewards (see Section~\ref{sec:datedrewards}, by letting $M$ be the line $\{(x,0):x\geq 0\}$. 

\section{Non-monotone preferences and local strictness}\label{sec:expectedutility}

When the set of utility functions is compact, we can obtain a particularly strong result that does not rely on monotonicity, or the existence of a preference relation generating the choices. Instead, the generating preference is obtained endogenously as the limit of rationalizing preferences.\footnote{In particular, these results should be contrasted with the example in Section~\ref{sec:notclosed}.} 

Let $\calv$ be a compact set of continuous functions in the topology of compact convergence, and let $\Phi(\calv)$ denote the image of $\calv$ under $\Phi$, so that $\Phi(u)$ is the preference represented by $u$.

\begin{theorem}\label{thm:expectedutility}Suppose $\calv$ is compact, and that all $\succeq\in\Phi(\calv)$ are locally strict.  Let $c$ be a choice sequence, and let $\succeq_k\in\calv$ weakly rationalize $c^k$. Then, there exists $\succeq^*\in\Phi(\calv)$ such that $\succeq_k \rightarrow \succeq^*$ in the closed convergence topology.  Furthermore, if $\succeq'_k$ also weakly rationalizes $c^k$, then $\succeq'_k\rightarrow\succeq^*$. \end{theorem}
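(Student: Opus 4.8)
The plan is to prove that $(\succeq_k)$ has a unique accumulation point in the space $\R$ of continuous preference relations on $X$, which is compact and metrizable by Lemma~\ref{lem:closedconvergencecompact}, that this point lies in $\Phi(\calv)$, and that it is also the unique accumulation point of every other sequence in $\Phi(\calv)$ weakly rationalizing the $c^k$; convergence to a common limit then follows from metrizability. Write $\succeq_k=\Phi(u_k)$ with $u_k\in\calv$ (this is how I read ``$\succeq_k\in\calv$''). First I would record two facts. (i)~Since every $\succeq\in\Phi(\calv)$ is locally strict, Theorem~\ref{thm:border} makes $\Phi$ continuous at each point of $\calv$, so $\Phi|_\calv$ is continuous and $\Phi(\calv)$ is a compact (hence closed, sequentially compact) subset of $\R$; thus every subsequence of $(\succeq_k)$ has a further subsequence converging inside $\Phi(\calv)$. (ii)~If $\succeq_{k_j}\to\succeq^\circ$, then for each budget $B_i=\{x_i,y_i\}$ and all $k_j\ge i$ the inclusion $c(B_i)\subseteq c_{\succeq_{k_j}}(B_i)$ says that $(x_i,y_i)\in{\succeq_{k_j}}$ and/or $(y_i,x_i)\in{\succeq_{k_j}}$; by part~(2) of Lemma~\ref{lem:tcclim} this passes to the limit, so $c(B_i)\subseteq c_{\succeq^\circ}(B_i)$, i.e.\ $\succeq^\circ$ weakly rationalizes $c$. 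Both facts apply verbatim to any sequence in $\Phi(\calv)$ that weakly rationalizes the $c^k$.

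Next I would prove the following comparison lemma, which is the heart of the matter: \emph{if $\succeq^\circ$ is a continuous, complete, locally strict relation that weakly rationalizes $c$, if $(\sigma_k)$ is any sequence of continuous preferences with $\sigma_k$ weakly rationalizing $c^k$, and if $\succeq^\bullet$ is an accumulation point of $(\sigma_k)$, then $\succeq^\circ\subseteq\succeq^\bullet$.} Fix $x\succeq^\circ y$. By completeness the strict part of $\succeq^\circ$ is $\{(a,b):\neg(b\succeq^\circ a)\}$, which is open since $\succeq^\circ$ is closed; local strictness of $\succeq^\circ$ together with the density of $B$ then yields points $(x^n,y^n)\in B\times B$ with $(x^n,y^n)\to(x,y)$ and $x^n\succ^\circ y^n$. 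By the standing assumption on $\Sigma_\infty$, $\{x^n,y^n\}=B_{i_n}$ for some $i_n$; since $x^n\succ^\circ y^n$ forces $c_{\succeq^\circ}(B_{i_n})=\{x^n\}$ and $\succeq^\circ$ weakly rationalizes $c$, we get $c(B_{i_n})=\{x^n\}$. Hence for all $k\ge i_n$ weak rationalization of $c^k$ by $\sigma_k$ gives $x^n\,\sigma_k\,y^n$; applying Lemma~\ref{lem:tcclim}(2) along a subsequence of $(\sigma_k)$ converging to $\succeq^\bullet$ gives $x^n\,\succeq^\bullet\,y^n$ for each $n$, and then closedness of $\succeq^\bullet$ with $(x^n,y^n)\to(x,y)$ gives $x\,\succeq^\bullet\,y$.

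To conclude, let $\succeq^\circ$ and $\succeq^\bullet$ be any two accumulation points of $(\succeq_k)$. By fact~(i) both lie in $\Phi(\calv)$, so both are continuous, complete, locally strict, and by fact~(ii) both weakly rationalize $c$; the comparison lemma applied with the two in either role gives $\succeq^\circ\subseteq\succeq^\bullet\subseteq\succeq^\circ$, hence $\succeq^\circ=\succeq^\bullet$. So $(\succeq_k)$ has a unique accumulation point $\succeq^*\in\Phi(\calv)$, and by metrizability $\succeq_k\to\succeq^*$. If $\succeq'_k$ is another sequence in $\Phi(\calv)$ with $\succeq'_k$ weakly rationalizing $c^k$, then every accumulation point $\succeq^\bullet$ of $(\succeq'_k)$ lies in $\Phi(\calv)$ and weakly rationalizes $c$; the comparison lemma with $(\succeq^*,\succeq^\bullet)$ gives $\succeq^*\subseteq\succeq^\bullet$, and with $\succeq^\bullet$ (which is locally strict) in the first slot and $(\succeq_k)$ as $(\sigma_k)$ gives $\succeq^\bullet\subseteq\succeq^*$; hence $\succeq^\bullet=\succeq^*$ and $\succeq'_k\to\succeq^*$. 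The one genuinely delicate step is the comparison lemma: local strictness is what lets us trade a weak comparison $x\succeq^\circ y$ for strict comparisons $x^n\succ^\circ y^n$ at sampled points of $B$, and those strict comparisons pin the data $c$ down on the corresponding budgets tightly enough that \emph{any} weakly rationalizing $\sigma_k$, and therefore its limit, must respect them; everything else is soft (compactness and metrizability of the spaces, continuity of $\Phi$ via Border's theorem, and the fact that accumulation points never leave the closed set $\Phi(\calv)$, which is exactly what makes the lemma usable in both directions).
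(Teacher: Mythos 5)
Your proposal is correct and follows essentially the same route as the paper: compactness of $\Phi(\calv)$ via Theorem~\ref{thm:border}, the observation that accumulation points still weakly rationalize the data, and uniqueness of the limit from local strictness plus the density of $B$ and the richness of $\Sigma_\infty$. Your ``comparison lemma'' is in substance a re-derivation of the paper's Lemma~\ref{lem:partialobservability} (the paper routes the uniqueness step through a revealed relation $\succeq'$ on $B\times B$ and that lemma), and your handling of subsequences, metrizability, and the implicit requirement that $\succeq'_k$ also come from $\calv$ is, if anything, more explicit than the paper's own proof.
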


Observe that knowledge of a generating preference $\succeq^*$ is not required; but the hypothesis that there is a weak rationalization $\succeq_k$ for every $c^k$ suggests the possibility.

Theorem~\ref{thm:expectedutility} implies that one can sometimes obtain asymptotically utility rationalizations drawn from $\calv$. In particular, when $\calv$ is compact, $\Phi(\calv)$ consists of locally strict preferences, and  $\Phi$ is a homeomorphism then $\Phi^{-1}(\succeq_k)\in\calv$ converges to a utility for $\succeq^*$ in $\calv$. One application of this kind is in Example~\ref{ex:euexample}.

\begin{example}\label{ex:euexample}Let $X$ be a finite set, and let $\Delta(X)$ be the lotteries on $X$ (topologized as elements of Euclidean space).  Consider the set of nonconstant expected utility preferences.  Then the hypotheses of Theorem~\ref{thm:expectedutility} hold here.  To see this, observe that the set of nonconstant von Neumann-Morgenstern utility indices is homeomorphic to the set \[S=\{u\in \Re^X:\sum_x u_x = 0, \|u\|=1\}.\]  It is straightforward to see that the map $\phi:S\rightarrow C(\Delta(X))$ given by $\phi(u)(p)=\sum_x u_x p(x)$ is continuous.  So, let $\calv=\phi(S)$ which is compact; then the set $\Phi(\calv)$ is the set of nontrivial expected utility preferences.  Finally, observe that each nonconstant expected utility preference is locally strict.  For, if $\succeq$ is nonconstant, then there are $p,q\in\Delta(X)$ for which $p\succ q$.  Then for any $r\succeq s$, for any $\alpha> 0$, $\alpha p + (1-\alpha)r \succ \alpha q +(1-\alpha)s$.  Choose $\alpha$ small to be within any neighborhood of $(r,s)$.
\end{example}

Next, Example~\ref{ex:lipschitzex} allows for an infinite set of prices, but restricts von Neumann-Morgenstern utilities to have lower and upper Lipschitz bounds.
\begin{example}\label{ex:lipschitzex}We can consider $\Re_+^n$, and a class of utility functions $\mathcal{U}_a^b$, where $a,b\in\Re$ with $0<a<b$.  \[\mathcal{U}_a^b=\{u\in C(\Re_+^n):\forall i\wedge \forall (x_i < y_i), a(y_i-x_i)\leq u(y_i,x_{-i})- u(x_i,x_{-i})\leq b(y_i-x_i)\}.\]

Observe that $\mathcal{U}_a^b\subseteq \mathcal{U}$, and consists of those members satisfying a certain Lipschitz property (namely, Lipschitz boundedness above and below).  By the Arzela-Ascoli Theorem (see \citet{Dugundji}, Theorem 6.4), $\mathcal{U}_a^b$ is compact.  Furthermore, each $\succeq\in\Phi(\mathcal{U}_a^b)$ is locally strict, as it is strictly monotonic.\end{example}

\section{Infinite and Countable Data}\label{sec:infinitedata}

In this section, we propose two sufficient conditions that enable the recovery of the subject's preference from its restriction to a countable set of data points.

We first show below that if we can observe a subrelation of a locally strict and continuous binary relation on a dense set, then we can infer the entire binary relation.

\begin{theorem}\label{cor:lsequal} Suppose that $\succeq$ and $\succeq'$ are two complete, continuous, and locally strict binary relations. Let $B\subseteq X$ be dense.  If $\succeq|_{B\times B}\subseteq\succeq'|_{B\times B}$, then $\succeq = \succeq'$.
\end{theorem}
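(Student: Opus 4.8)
The plan is to show mutual inclusion of the two relations on all of $X\times X$, using density of $B$ together with continuity and local strictness. First I would establish $\succeq\ \subseteq\ \succeq'$. Take $(x,y)$ with $x\succeq y$. I want to approximate $(x,y)$ by a point $(x',y')\in B\times B$ at which the \emph{strict} relation $\succ$ holds; local strictness of $\succeq$ gives such a strict pair in every neighborhood of $(x,y)$, and density of $B$ (hence of $B\times B$ in $X\times X$) lets me take that pair inside $B\times B$. Since $x'\succ y'$ implies $x'\succeq y'$, the hypothesis $\succeq|_{B\times B}\subseteq\succeq'|_{B\times B}$ gives $x'\succeq' y'$. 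Now let $(x',y')\to(x,y)$ along such a net (or sequence, since $X\times X$ is metrizable): continuity of $\succeq'$, i.e. closedness of $\succeq'$ as a subset of $X\times X$, yields $x\succeq' y$. Hence $\succeq\ \subseteq\ \succeq'$.

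The reverse inclusion $\succeq'\ \subseteq\ \succeq$ uses completeness to convert the first inclusion into a statement about strict relations. From $\succeq\ \subseteq\ \succeq'$ one gets, contrapositively via completeness, that $x\succ' y$ implies $x\succ y$: indeed if $x\succ' y$ then not $y\succeq' x$, so (by the inclusion) not $y\succeq x$, so by completeness $x\succ y$. Now suppose $x\succeq' y$; I must show $x\succeq y$. Apply local strictness of $\succeq'$ at $(x,y)$ and density of $B\times B$ to obtain pairs $(x',y')\in B\times B$, arbitrarily close to $(x,y)$, with $x'\succ' y'$, hence $x'\succ y'$ by the implication just derived, hence in particular $x'\succeq y'$. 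Letting $(x',y')\to (x,y)$ and invoking closedness of $\succeq$ gives $x\succeq y$. Therefore $\succeq'\ \subseteq\ \succeq$, and combined with the first step, $\succeq=\succeq'$.

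The only delicate point is the interplay of local strictness and density: I need that the strict-comparison points furnished by local strictness can be taken with \emph{both} coordinates in $B$, not merely near $(x,y)$ in $X\times X$. This is immediate because $B$ dense in $X$ implies $B\times B$ dense in $X\times X$ (product topology), and the definition of local strictness quantifies over \emph{every} neighborhood $V$ of $(x,y)$; intersecting the strict pair's neighborhood with a basic product neighborhood contained in $V$ and choosing a point of $B\times B$ there does the job. A minor bookkeeping remark: since $X$ is a (locally compact, Polish) metric space, $X\times X$ is metrizable, so I may work with sequences rather than nets throughout, and ``continuous'' for a binary relation means exactly that it is a closed subset of $X\times X$, which is the form in which I use it. No compactness or monotonicity is needed here; the argument is purely topological and rests on density plus the two defining properties of the relations.
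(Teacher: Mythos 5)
Your proof is correct and rests on the same essential ingredients as the paper's argument---local strictness to produce strict comparisons arbitrarily close to any weakly ranked pair, density of $B\times B$ to place them in the observed set, closedness of the relations to pass to the limit, and completeness to reverse the inclusion---the paper merely packages this as a corollary of a slightly more general three-relation lemma (Lemma~\ref{lem:partialobservability}), proved by contradiction so it can be reused in the weak-rationalization results. The one step you leave implicit is that the strict part of a complete continuous relation is open (the paper's Lemma~\ref{lem:cont}); this is exactly what licenses passing from ``a strict pair in every neighborhood of $(x,y)$'' to ``a strict pair in $B\times B$ in every neighborhood of $(x,y)$,'' and since it follows in one line from completeness plus closedness of $\succeq$, you should state it explicitly rather than bury it in the phrase ``the strict pair's neighborhood.''
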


We then make no restriction on the preferences other than continuity, but requires the underlying space of alternatives to be connected.

\begin{theorem}\label{thm:connected}
	Suppose that $\succeq$ and $\succeq'$ are two continuous preference relations. %complete and continuous binary relations.
 Suppose $X$ is connected, and let $B\subseteq X$ be dense.  If $\succeq|_{B\times B} = \succeq'|_{B\times B}$, then $\succeq = \succeq'$.
\end{theorem}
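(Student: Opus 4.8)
The plan is to prove directly that $x \succeq y$ implies $x \succeq' y$ for all $x,y \in X$. Since the hypothesis $\succeq|_{B\times B} = \succeq'|_{B\times B}$ is symmetric in the two relations, the reverse implication follows by the identical argument, and then completeness of $\succeq,\succeq'$ yields $\succeq = \succeq'$. Note that one cannot reduce to Theorem~\ref{cor:lsequal}, since a continuous weak order (e.g.\ total indifference) need not be locally strict; connectedness of $X$ must be used instead.

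Two auxiliary facts do the work. The first is an \emph{intermediate-point lemma}: if $\succeq$ is any continuous complete preference on the connected space $X$ and $x \succ y$, then $\{w : x \succ w\}\cap\{w : w \succ y\}$ is nonempty, so by density of $B$ there is $b\in B$ with $x \succ b \succ y$. Indeed, both sets are open (completeness rewrites $w\succ y$ as $\neg(y\succeq w)$, and $\{w : y\succeq w\}$ is closed by continuity), both are nonempty (they contain $x$ and $y$ respectively), and their union is all of $X$ (a point $w$ in neither would satisfy $y\succeq w$ and $w\succeq x$, hence $y\succeq x$, contradicting $x\succ y$); connectedness then forces the two open sets to intersect, and the intersection, being open and nonempty, meets the dense set $B$. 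The second is an \emph{approximation lemma}: given two continuous complete preferences agreeing on $B\times B$, if $b\in B$, $x\in X$, and $x\succ b$ under one of them, then $x\succeq b$ under the other; and symmetrically with $b\succ x$. To see the first form, $\{w : w\succ b\}$ is open and contains $x$, so every open neighborhood $N$ of $x$ yields a point $b^*\in B\cap N$ with $b^*\succ b$; agreement on $B\times B$ upgrades this to $b^*\succ b$ under the other preference; hence $x$ lies in the closure of $\{w : w\succ b\}$ for the other preference, which is contained in the closed set $\{w : w\succeq b\}$, giving $x\succeq b$ there.

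With these in hand, suppose $x\succeq y$ but, for contradiction, $y\succ' x$. The intermediate-point lemma applied to $\succeq'$ gives $b\in B$ with $y\succ' b\succ' x$. From $y\succ' b$ the approximation lemma yields $y\succeq b$, so $x\succeq y\succeq b$ gives $x\succeq b$. Applying the intermediate-point lemma to $\succeq'$ again, using $b\succ' x$, produces $b'\in B$ with $b\succ' b'\succ' x$; since $b,b'\in B$, agreement on $B\times B$ gives $b\succ b'$, and then $x\succeq b\succ b'$ forces $x\succ b'$ (if instead $b'\succeq x$, then $b'\succeq x\succeq b$, contradicting $b\succ b'$). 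But now the approximation lemma, applied to $x\succ b'$ with $b'\in B$, yields $x\succeq' b'$, contradicting $b'\succ' x$. Hence $x\succeq y$ implies $x\succeq' y$, completing the proof.

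I expect the main obstacle to be the final paragraph: because agreement is assumed only on pairs from $B$, one can never compare $x$ or $y$ directly to anything, so every step must be routed through a point of $B$ squeezed \emph{strictly} between the relevant alternatives under $\succeq'$ — which is precisely why the intermediate-point lemma (hence connectedness) is indispensable. A secondary pitfall to avoid is attempting to build an order isomorphism between the utility ranges of $\succeq$ and $\succeq'$: these ranges may differ in shape (one may have a best element while the other does not), and the contour-set argument above sidesteps that difficulty entirely.
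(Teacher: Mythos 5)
Your proof is correct, and it takes essentially the same route as the paper's: connectedness together with density of $B$ is used to squeeze points of $B$ strictly between the relevant alternatives, and continuity plus density is used to transfer strict comparisons involving $B$-points into weak comparisons under the other relation, with agreement on $B\times B$ delivering the final contradiction. The differences are organizational only --- your one-coordinate ``approximation lemma'' plays the role of the paper's product-neighborhood argument for showing that strict preference under one relation implies weak preference under the other, and you run a single contradiction chain interpolating under $\succeq'$ where the paper splits into a strict-versus-strict case and a strict-versus-indifference case interpolating under $\succeq$ --- so the mathematical content coincides.
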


Note that Theorem~\ref{thm:connected} generalizes Theorem~\ref{thm:connected-v1}. Without connectedness, this result can fail.  A preference $\succeq$ can be increasing on $(0,1)\cup (2,3)$, but there are two possible ways to extend it to $[0,1]\cup [2,3]$; either by setting $1 \sim 2$, or $2\succ 1$.

A classical procedure, attributed to Allais (see \citealp{allais}) allows one to elicit multiple choices with one suitably randomized choice.  Roughly, one uses a randomization device whose outcome is a choice set, and asks a subject to announce what she would choose \emph{ex-ante} from each of the sets in the support in the distribution.  A decision maker who respects basic monotonicity postulates (see \citealp{azrieli}) correctly announces each of their choices.

If we can uncover an entire preference from each of these choices, then we are able to elicit an entire preference using one suitably chosen random device.  Here, we do not investigate this theory in its full generality.  But if there is a countable dense subset of alternatives, and a continuous preference can be inferred from its behavior on a countable dense subset, then we can utilize the Allais mechanism to uncover an entire preference with a single randomized choice.  For example, we would enumerate the pairs of elements from the countable dense subset, say $B_1,B_2,\ldots$, and randomize so that each one realizes with probability $2^{-k}$.

\section{On the meaning of $\succeq^*$}\label{sec:discusspreference}

Some economists are comfortable with the idea that an agent ``has'' a data-generating preference $\succeq^*$, and some are not. The former assume that the preference is something intrinsic to the agent, and that when presented with a choice situation the agent can access his preference and choose accordingly. Under this interpretation, our paper gives conditions under which a finite experiment can approximate, to an arbitrary degree of precision, the underlying preference that the agent uses to make choices.

Other economists argue that preferences are just choices. For those in this position, it is meaningless to speak of a preference over pairs of alternatives from which the agent never chooses. We are sympathetic to this view, and our paper also contributes to this interpretation. Under proper conditions---conditions that we provide in our paper---continuity ``defines'' preferences over $X$ given choices over a countable subset. This is important because estimated preferences provide a guide for making normative recommendations and out of sample predictions. An economist may want to estimate $\succeq^*$ so as to make policy recommendations that are in the agent's interest; in fact, this is a common use of estimated preferences in applied work. Similarly, the economist may want to use $\succeq^*$ as an input in a structural economic model, and thereby make predictions for different configurations of the model. The existence and meaning of $\succeq^*$ is then provided for by the continuity assumption.

Moreover, viewed from this angle, Theorem~\ref{thm:diameter} allows us to say that the set of rationalizations can be made arbitrarily small as more and more data are observed.\footnote{This is true in spite of the claim we make in Section~\ref{sec:notclosed}.  It is true that the set of rationalizations may ``shrink'' to something which is not transitive, but this set is shrinking nonetheless and always contains preference relations (except in the limit).}  In this manner, one can bound errors in welfare statements or out of sample predictions to an arbitrary degree of precision.

We conclude this section with two examples that illustrate the importance of postulating existence of an agent's preference: without the postulate, the inferred preference may otherwise fail to converge.

\subsection{The set of weakly monotone preference relations is not closed}\label{sec:notclosed}

Suppose we are interested in rationality in the form of a strictly monotonic continuous preference relation.  Observe that Theorems~\ref{thm:weakrat} and \ref{thm:weaklymon} hypothesize the existence of $\succeq^*$.  If $\succeq^*\in\Rmon$, for example, then we know that, in the limit, rationalizing relations will be transitive if every $\succeq_k$ is.  Unfortunately, we show in this section, if we do not know that $\succeq^*$ is transitive, we cannot ensure that it is, even if each $\succeq_k$ is.  That is, we demonstrate a sequence $\succeq_k$ of strictly monotone preferences, where $\succeq_k\rightarrow\succeq^*$ in the closed convergence topology, but $\succeq^*$ is not transitive.

The data are rationalizable, but the rationalization requires intransitive indifference.  So the properties of the rationalizations of order $k$ cannot be preserved.

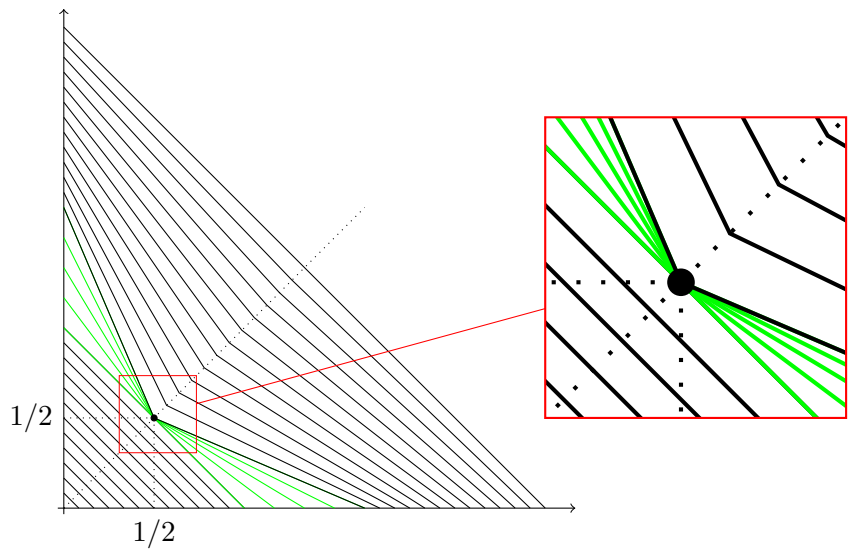
\begin{figure}
\begin{center}
\begin{tikzpicture}[scale=.8, spy using outlines={magnification=3.9, size=4cm, connect spies}]
\draw[->] (0,-.1) -- (0,8.3) node[anchor=west] {}; % {$x_{s_2}$};
\draw[->] (-.1,0) -- (8.5,0)  node[anchor=north] {};% {$x_{s_1}$};
\draw[dotted] (0,0) -- (5,5);

%\draw[-] (6,0) -- (2,2);
%\draw[-] (8,0) -- (4,4);

\foreach \i in {0,.25,...,3}
{
\draw[-,thin] (\i,0) -- (0,\i);
}

\foreach \i in {3,3.5,...,5}
{
\draw[-,thin,green] (\i,0) -- (1.5, 1.5) -- (0,\i);
}

\foreach \i in {5,5.25,...,8}
{
\draw[-,thin] (\i,0) -- (-8/3 + 2.5*\i/3,-8/3 + 2.5*\i/3) -- (0, \i);
}

\draw [fill] (1.5,1.5) circle (.05cm);

\draw [thin, dotted] (0,1.5) node[anchor=east] {\small $1/2$}-- (1.5,1.5);
\draw [thin, dotted] (1.5,0) node[anchor=north] {\small $1/2$} -- (1.5,1.5);

  \spy [red] on (1.25,1.25)
             in node [right] at (8,4);

\end{tikzpicture}\end{center}
\caption{A non-transitive preference}\label{fig:nontrans}
\end{figure}

Figure~\ref{fig:nontrans} exhibits a non-transitive relation. The
example is taken from \cite{grodal1974note}. The
lines depict indifference curves, but all the green indifference
curves intersect at one point: $(1/2,1/2)$. This makes the preference
non-transitive; specifically the indifference part of the preference
would be intransitive here.

Now imagine a collection of binary comparisons that do not include
$(1/2,1/2)$. Suppose that this collection is the limit of a finite
number of binary comparisons, making it at most countable. There must
exist a ball around $(1/2,1/2)$ that does not include any of the
comparisons. Consider the diagram in Figure~\ref{fig:nontransball}.
The preferences have been modified close to $(1/2,1/2)$ so that
transitivity holds.

\begin{figure}
\begin{center}
\begin{tikzpicture}[scale=.8, spy using outlines={magnification=3.9, size=4cm, connect spies}]
\draw[->] (0,-.1) -- (0,8.3) node[anchor=west] {}; %{$x_{s_2}$};
\draw[->] (-.1,0) -- (8.5,0)  node[anchor=north] {};% {$x_{s_1}$};
\draw[dotted] (0,0) -- (5,5);

%\draw[-] (6,0) -- (2,2);
%\draw[-] (8,0) -- (4,4);

\draw[-,green,name path=m1] (5,0) -- (1.5, 1.5) ;
\draw[-,green,name path=m2] (1.5, 1.5) -- (0,5);
\draw[-,green,name path=n1] (3.8,0) -- (1.5, 1.5) ;
\draw[-,green,name path=n2] (1.5, 1.5) -- (0,3.8);
\draw [name path=arcone] (1.8,1.5) arc (0:180:.3cm);
\draw [name path=arctwo] (1.8,1.5) arc (360:180:.3cm);
\draw [fill, white] (1.5,1.5) circle (.3cm);

\foreach \i in {0,.25,...,3}
{
\draw[-,thin] (\i,0) -- (0,\i);
}

\foreach \i in {5.25,5.5,...,8}
{
\draw[-,thin] (\i,0) -- (-8/3 + 2.5*\i/3,-8/3 + 2.5*\i/3) -- (0, \i);
}

\draw [blue,thin,fill, fill opacity=.2]  (1.5,1.5) circle (.3cm);

\path [name intersections={of=n1 and arctwo,by=l1}];
\path [name intersections={of=n2 and arcone,by=l2}];

\path [name intersections={of=m1 and arctwo,by=o1}];
\path [name intersections={of=m2 and arcone,by=o2}];

\draw [green] (l1) -- (l2);\draw [green] (o1) -- (o2);

\draw [fill] (1.5,1.5) circle (.05cm);

  \spy [red] on (1.25,1.25)
             in node [right] at (8,4);
\end{tikzpicture}\end{center}
\caption{A transitive preference}\label{fig:nontransball}
\end{figure}
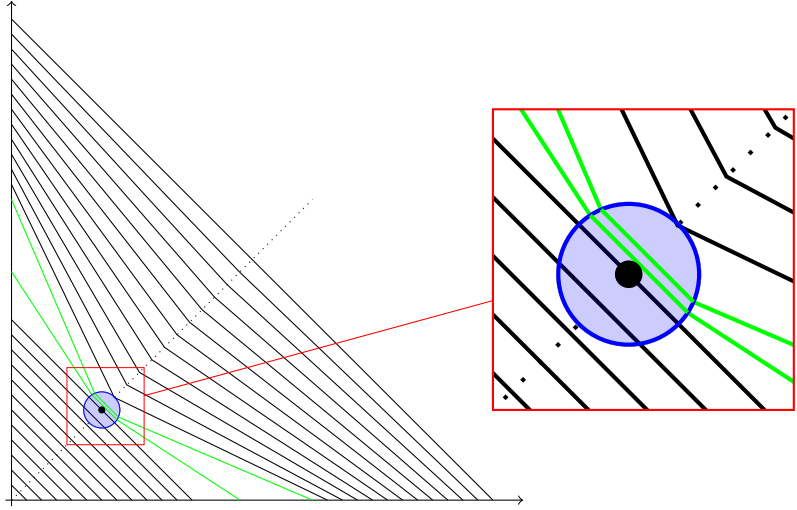

This example is not particularly troubling, however.  First, with finite experimentation, the violation of transitivity will never be ``reached.''  Second, the violation here is not particularly egregious.  Only transitivity of indifference is violated.  This holds quite generally. It can be shown that any limit point of a sequence of preference relations must be quasitransitive, so that whenever $x\succ y$ and $y \succ z$, it follows that $x\succ z$.\footnote{The argument is in \citet{grodal1974note}, but to see this suppose that $\succeq^n\rightarrow \succeq$, where each $\succeq^n$ is a preference relation.  It can be shown that $\succeq$ is complete, so suppose by means of contradiction that there are $x,y,z\in X$ for which $x\succ y$, $y \succ z$, and $z \succeq x$.  So, there are $x^n$, $z^n$ for which $z^n \succeq^n x^n$, $x^n\rightarrow x$, and $z^n\rightarrow z$.  For each $n$, either $z^n \succeq^n y$ or $y \succeq^n x$, so that without loss, there is a sequence for which $z^n \succeq^n y$, \emph{i.e.} $z \succeq y$, a contradiction.}  Quasitransitive relations enjoy many of the useful properties of preferences.  For example, continuous quasitransitive relations possess maxima on compact sets, see \emph{e.g.} \citet{bergstrom1975}.

\subsection{The set of locally strict relations is not closed}
Finally we present an example to show that the set of locally strict preference relations is not closed.
Let $X = [-3,-1] \cup [1,3]$.  For each $n$, let $u_n(x)=-(x+2)^2 + \frac{1}{n}$ on $[-3,-1]$ and $u_n(x)=(x-2)^2 -\frac{1}{n}$ on $[1,3]$. See Figure~\ref{fig:LSnotclosed}. The function $u_n$ represents a locally strict relation $\succeq_n$.

Let $u^*(x)$ be the pointwise limit of $u_n$; i.e. $u^*(x)=-(x+2)^2$ on $[-3,-1]$ and $u^*(x) = (x-2)^2$ on $[1,3]$. The function $u^*$ represents $\succeq^*$ which is \emph{not} locally strict.  Observe that $-2 \succeq^* 2$, but for small neighborhoods there is no strict preference.

However, it is also straightforward by checking cases to show that $\succeq_n\rightarrow \succeq^*$.

\begin{figure}\begin{center}
   \begin{tikzpicture}
    \def\efleft{\x,{ -(\x+2)*(\x+2)}}
    \def\efright{\x,{(\x-2)*(\x-2) }}
    \def\efmiddle{\x,{ \x}}
  \draw[color=blue,domain=-3:-1] plot (\efleft) node[right] {};
  \draw[color=blue,domain=1:3] plot (\efright) node[right] {};
  \draw[color=blue,domain=-1:1] plot (\efmiddle) node[right] {};

    \def\efleft{\x,{ -(\x+2)*(\x+2)+1/9}}
    \def\efright{\x,{(\x-2)*(\x-2) - 1/9}}
    \def\efmiddle{\x,{ (1-1/9)*\x}}
  \draw[color=green,domain=-3:-1] plot (\efleft) node[right] {};
  \draw[color=green,domain=1:3] plot (\efright) node[right] {};
  \draw[color=green,domain=-1:1] plot (\efmiddle) node[right] {};

    \def\efleft{\x,{ -(\x+2)*(\x+2)+1/3}}
    \def\efright{\x,{(\x-2)*(\x-2) - 1/3}}
    \def\efmiddle{\x,{ (1-1/3)*\x}}
  \draw[color=red,domain=-3:-1] plot (\efleft) node[right] {};
  \draw[color=red,domain=1:3] plot (\efright) node[right] {};
  \draw[color=red,domain=-1:1] plot (\efmiddle) node[right] {};

% Optional: Add axis labels
%    \draw (-.2,2.5) node[left] {$f_Y(u)$};
%%    \draw (-3.1,4) node[below] {$x$};
%    \draw[color=blue] (9,6) node[below] {$x_t$};
% Optional: Add axes
    \draw[->] (-3.04,0) -- (3.2,0) node[right] {};
    \draw[->] (0,-2.04) -- (0,3) node[above] {};
\end{tikzpicture}\end{center}   \caption{The set of locally strict preferences is not closed.}
  \label{fig:LSnotclosed}
\end{figure}
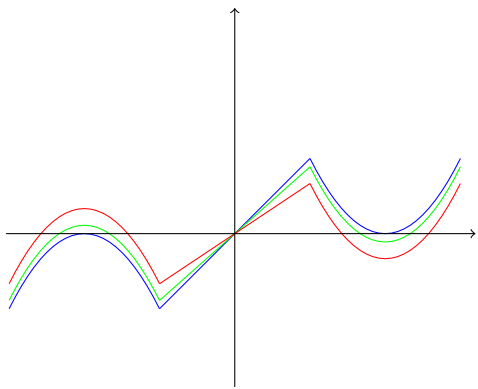

\newpage

\bibliographystyle{ecta}
\bibliography{identification}

\newpage

\appendix

%\section{Ancillary Definitions and Results}

\section{About Closed Convergence}\label{sec:closedconvergence}

%\subsection{Closed convergence}
We recall below the formal definition of closed convergence, used throughout the results of this paper. Let $\mathcal{F}=\{F^n\}_n$ be a sequence of closed sets in $X \times X$. We define $\Li(\mathcal{F})$ and $\Ls(\mathcal{F})$ to be closed subsets of $X \times X$ as follows:
\begin{itemize}
	\item $(x,y) \in \Li(\mathcal{F})$ if and only if, for all neighborhood $V$ of $(x,y)$, there exists $N \in \Na$ such that $F^n \cap V \neq \os$ for all $n \geq N$.
	\item $(x,y) \in \Ls(\mathcal{F})$ if and only if, for all neighborhood $V$ of $(x,y)$, and all $N \in \Na$, there is  $n \geq N$ such that $F^n \cap V \neq \os$.
\end{itemize}
Observe that $\Li(\mathcal{F})\subseteq \Ls(\mathcal{F})$. The definition of closed convergence is as follows.
\begin{definition}
	$F^n$ converges to $F$ in the \df{topology of closed convergence} if $\Li(\mathcal{F})=F=\Ls(\mathcal{F})$.
\end{definition}

\section{Proof of Proposition~\ref{prop:indiff}}\label{app:proof:proofofpropindiff}

Denote by $(a',b')$ the open interval $\{z\in\Re^n:a'\ll z\ll b'\}$.  For each $k$, let $u_k:\cup_{l=1}^k B_l\rightarrow [0,1]$ be a utility representation of $\succeq^*$ on $\cup_{l=1}^k B_l$.

For each $k$, let $\{[a_i,  b_i]\}_{i=1}^{n_k}$ be a sequence of intervals in $\Re^n$ with the properties that a) $[a,b]\subseteq \cup _{i=1}^{n_k} [a_i,  b_i]$, b) $(a_i,  b_i)\cap (a_j,b_j) = \os$ for $i\neq j$, c) each element of $\cup_{l=1}^k B_l$ is contained in a set $(a_i,  b_i)$, and no two elements of $\cup_{l=1}^k B_l$ are contained in the same, and d) $[a_i,b_i]$ is contained in some ball of radius $(2k)^{-1}$.\footnote{It is obvious that such a sequence exists. First, it is immediate that it exists for $n=1$. For $n>1$ project each $B_k$ onto each of its coordinate and carry out the one-dimensional construction (choosing a sufficiently small radius for the  balls covering each interval). Then take the cartesian product of each one-dimensional interval.}

For each interval  $[a_i,  b_i]$ there is a continuous function  $f_i$ such that $f(x)=0$ for all $x\in [a_i,  b_i]\setminus (a_i,  b_i)$, $f(x) = u_k(x)$ if $x\in (a_i,b_i)\cap \cup_{l=1}^k B_l$, $\sup \{f(x):x\in [a_i,b_i]\} = 2$ and  $\inf \{f(x):x\in [a_i,b_i]\} = -2$. Let $u^*_k:[a,b]\rightarrow \Re$ be the function that coincides with $f_i$ on each $[a_i,b_i]$. Let $\succeq_k$ be the preference relation represented by $u^*_k$, and note that $\succeq_k$ strongly rationalizes the choice function of order $k$ generated by $\succeq^*$,  and is continuous.

Let $x,y\in X$. For each $k$, suppose that $x\in [a_i,b_i]$ for the $k$th sequence of subintervals.
Let $x_k\in [a_i,b_i]$ be such that $u^*_k(x_k)=2$. Note that $\norm{x-x_k}<1/k$. Similarly, suppose that $y\in [a_j,b_j]$ for the $k$th sequence of subintervals and let $y_k\in [a_j,b_j]$ be such that $u^*_k(y_k)=-2$. Then $x_k\succ_k y_k$. Since $(x_k,y_k)\rightarrow (x,y)$ and $x,y\in X$ were arbitrary this means that $\succeq_k\rightarrow X\times X$.

\section{Proof of Proposition~\ref{prop:AAandRn}}\label{app:proof:prop:AAandRn}

The proof is implied by the following lemmas.

\begin{lemma}\label{lem:supsinfsRn}
	Let $X\subseteq\Re^n$. If $\{x'_n\}$ is an increasing sequence in $X$, and $\{x''_n\}$ is a decreasing sequence, such that $\sup\{x'_n:n\geq 1\} = x^* = \inf\{x''_n:n\geq 1\}$. Then \[\lim_{n\rightarrow \infty} x'_n = x^* = \lim_{n\rightarrow \infty} x''_n.\]
\end{lemma}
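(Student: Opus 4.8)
The plan is to reduce the statement to the one-dimensional case, coordinate by coordinate, using the order-completeness of $\Re$. For a fixed coordinate $i\in[n]$, the sequence $\{(x'_n)_i\}_n$ is increasing and bounded above by $x^*_i$ (since $x'_n\leq x^*$ for all $n$ by the supremum hypothesis), hence it converges in $\Re$ to some limit $\ell_i$; monotone convergence in $\Re$ gives $\ell_i=\sup_n (x'_n)_i$. Here I would note that $\sup_n (x'_n)_i = (\sup_n x'_n)_i = x^*_i$: the supremum of a set in $\Re^n$ under the product order, when it exists, is taken coordinatewise — this is the one small point that needs a line of justification, namely that $x^*$ being an upper bound for all $x'_n$ forces $x^*_i\geq (x'_n)_i$, and any coordinatewise upper bound is an upper bound in the product order, so minimality transfers. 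Thus $(x'_n)_i\to x^*_i$ for each $i$, and since convergence in $\Re^n$ is equivalent to coordinatewise convergence, $x'_n\to x^*$. The argument for the decreasing sequence $\{x''_n\}$ is symmetric: each coordinate sequence $\{(x''_n)_i\}_n$ is decreasing and bounded below by $x^*_i$, hence converges to $\inf_n (x''_n)_i = x^*_i$.

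The key steps in order: (1) fix a coordinate $i$ and observe monotonicity of the coordinate projections of $\{x'_n\}$ and $\{x''_n\}$; (2) observe boundedness of each coordinate sequence by $x^*_i$, obtained from the hypothesis that $x^*$ is the supremum (resp.\ infimum) in the product order together with the fact that suprema/infima in the product order, when they exist, are coordinatewise; (3) apply the monotone convergence theorem in $\Re$ to conclude each coordinate sequence converges to $x^*_i$; (4) assemble the coordinatewise limits into convergence in $\Re^n$.

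I do not expect any serious obstacle here — the lemma is essentially the observation that in $\Re^n$ with the product order, a monotone sequence whose supremum (infimum) exists actually converges to it, which fails in general partially ordered topological spaces but holds in $\Re^n$ precisely because the order is the product of complete orders and the topology is the product topology. The only point requiring care, and the one I would state explicitly, is the interchange ``$(\sup_n x'_n)_i = \sup_n (x'_n)_i$'', i.e.\ that the least upper bound in the product order is computed coordinatewise; this is where the structure of $\Re^n$ (as opposed to an abstract lattice) is used, and it is what makes the topological conclusion available from the purely order-theoretic hypothesis.
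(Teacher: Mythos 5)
Your proposal is correct and takes essentially the same route as the paper: the paper's proof simply notes that the claim is obvious in dimension one and that convergence, suprema, and infima in $\Re^n$ are all computed component-by-component. Your write-up just makes explicit the two points the paper leaves implicit, namely that the supremum in the product order is the coordinatewise supremum and that monotone bounded coordinate sequences converge to it.
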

\begin{proof}
	This is obvously true for $n=1$. For $n>1$, convergence and sups and infs are obtained component-by-component, so the result follows.
\end{proof}

\begin{lemma} \label{lem:boundsequRn}
	Let $X\subseteq\Re^n$. Let $\{x_n\}$ be a convergent sequence in $X$, with $x_n\rightarrow x^*$. Then there is an increasing sequence $\{x'_n\}$ and an a decreasing sequence $\{x''_n\}$ such that $x'_n\leq x_n\leq x''_n$, and $\lim_{n\rightarrow \infty} x'_n = x^* = \lim_{n\rightarrow \infty} x''_n$.
\end{lemma}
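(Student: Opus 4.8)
The plan is to reduce everything to a coordinatewise construction and then invoke Lemma~\ref{lem:supsinfsRn}. Fix the convergent sequence $\{x_n\}$ in $X\subseteq\Re^n$ with $x_n\to x^*$. Since a convergent sequence in $\Re^n$ is bounded, for every coordinate $i$ the real sequence $\{(x_m)_i\}_m$ is bounded, so the quantities $(x'_n)_i:=\inf_{m\geq n}(x_m)_i$ and $(x''_n)_i:=\sup_{m\geq n}(x_m)_i$ are finite. I would define $x'_n$ and $x''_n$ in $\Re^n$ coordinatewise by these formulas.

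Next I would verify the three required properties. First, $x'_n\leq x_n\leq x''_n$ coordinatewise --- hence in the product order --- because $n$ is among the indices $m\geq n$ over which the infimum and supremum are taken. Second, $\{x'_n\}$ is increasing and $\{x''_n\}$ is decreasing, since passing from $n$ to $n+1$ only removes terms from each index set, so the infimum can only rise and the supremum can only fall. Third, in each coordinate $\lim_n (x'_n)_i=\liminf_m (x_m)_i$ and $\lim_n (x''_n)_i=\limsup_m (x_m)_i$; since $x_m\to x^*$, both equal $(x^*)_i$. As $\{(x'_n)_i\}$ is monotone increasing with limit $(x^*)_i$, its supremum is $(x^*)_i$, and symmetrically $\inf_n (x''_n)_i=(x^*)_i$; hence $\sup\{x'_n:n\geq 1\}=x^*=\inf\{x''_n:n\geq 1\}$ in the product order.

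Finally I would apply Lemma~\ref{lem:supsinfsRn} to these two monotone sequences to conclude $\lim_{n\to\infty} x'_n = x^* = \lim_{n\to\infty} x''_n$, which is exactly the claim. There is no genuine obstacle here: the only points requiring a word of care are that the infima and suprema are finite (which follows from boundedness of a convergent sequence) and that the monotone coordinate sequences really do have $(x^*)_i$ as their supremum and infimum (which is where the hypothesis $x_m\to x^*$ enters, via $\liminf_m (x_m)_i=\limsup_m (x_m)_i=(x^*)_i$). Note also that we make no claim that $x'_n$ and $x''_n$ lie in $X$; the statement only asks for monotone sequences squeezing $\{x_n\}$, so no issue of remaining inside $X$ arises.
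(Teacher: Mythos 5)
Your proposal is correct and follows essentially the same route as the paper: both define $x'_n$ and $x''_n$ as the tail infima and suprema of $\{x_m\}_{m\ge n}$, note the sandwich and monotonicity properties, and conclude via Lemma~\ref{lem:supsinfsRn} after identifying $\sup_n x'_n = x^* = \inf_n x''_n$. Your write-up merely makes explicit (via $\liminf=\limsup=x^*$ coordinatewise and boundedness of a convergent sequence) some steps the paper leaves implicit.
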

\begin{proof}
	Suppose that $x_n\rightarrow x^*$. Define $x'_n$ and $x''_n$ by
	\[x'_n  = \inf \{x_m : n\leq m \} \text{ and } x''_n = \sup \{ x_m : n\leq m \}\]
	Then it is clear that  $x'_n\leq x_n\leq x''_n$, that $x'_n$ is
	increasing, and that $x''_n$ is decreasing. Moreover,
	\begin{align*}
		\lim_{n\rightarrow \infty} x'_n & =  \sup \{ \inf \{x_m : n\leq m \} : n\geq 1\} \\
		&  = x^* \\ & = \inf \{ \sup \{x_m
		: n\leq m \} : n\geq 1\} = \lim_{n\rightarrow \infty} x''_n
	\end{align*}  by
	Lemma~\ref{lem:supsinfsRn}.
\end{proof}

\begin{lemma}
	Let $X = \Delta([a,b])$. Let $\{x_n\}$ be a convergent sequence in $X$, with $x_n\rightarrow x^*$. Then there is an increasing sequence $\{x'_n\}$ and an a decreasing sequence $\{x''_n\}$ such that $x'_n\leq x_n\leq x''_n$, and $\lim_{n\rightarrow \infty} x'_n = x^* = \lim_{n\rightarrow \infty} x''_n$.
\end{lemma}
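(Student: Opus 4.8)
The plan is to construct, for a convergent sequence $\{x_n\}$ in $\Delta([a,b])$ with $x_n \to x^*$ in the weak* topology, an increasing sequence $\{x'_n\}$ and a decreasing sequence $\{x''_n\}$ (in the first-order stochastic dominance order) that sandwich $x_n$ and both converge to $x^*$. The natural idea, mimicking the proof of Lemma~\ref{lem:boundsequRn} for $\Re^n$, is to work with cumulative distribution functions: writing $F^x$ for the cdf of $x\in\Delta([a,b])$, the FOSD order $x\leq y$ is equivalent to $F^x \geq F^y$ pointwise. So I would set $F'_n(s) = \sup_{m\geq n} F^{x_m}(s)$ and $F''_n(s) = \inf_{m\geq n} F^{x_m}(s)$ for each $s$, which gives pointwise-monotone families of cdfs, hence an increasing sequence $\{x'_n\}$ and a decreasing sequence $\{x''_n\}$ with $x'_n \leq x_n \leq x''_n$.

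The key steps would be: (i) check that $F'_n$ and $F''_n$ are genuine cdfs (right-continuous, nondecreasing in $s$, with the correct limits at the endpoints of $[a,b]$) — right-continuity is the only slightly delicate point and follows because a supremum/infimum over a fixed countable index set of right-continuous nondecreasing functions is again right-continuous (for the sup one uses that each $F^{x_m}$ is right-continuous and the functions are uniformly bounded; a short $\varepsilon$-argument suffices); (ii) observe that $x'_n$ is FOSD-increasing and $x''_n$ FOSD-decreasing by construction, and that $F''_n \leq F^{x_n} \leq F'_n$, i.e. $x'_n \leq x_n \leq x''_n$; (iii) identify the pointwise limits $\lim_n F'_n(s) = \inf_n \sup_{m\geq n} F^{x_m}(s) = \limsup_m F^{x_m}(s)$ and $\lim_n F''_n(s) = \liminf_m F^{x_m}(s)$, and show these both equal $F^{x^*}(s)$ at every continuity point $s$ of $F^{x^*}$; (iv) conclude that $x'_n \to x^*$ and $x''_n \to x^*$ weak*, since weak* convergence of probability measures on $[a,b]$ is equivalent to convergence of cdfs at all continuity points of the limit cdf (the Portmanteau/Helly characterization).

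For step (iii), the point is that weak* convergence $x_m \to x^*$ already gives $F^{x_m}(s) \to F^{x^*}(s)$ at every continuity point $s$ of $F^{x^*}$; at such a point the full sequence converges, so its $\limsup$ and $\liminf$ both equal $F^{x^*}(s)$, which is exactly what is needed. Since $F^{x^*}$ has at most countably many discontinuities, convergence of cdfs on the continuity set is enough to pin down weak* convergence of $x'_n$ and $x''_n$ to $x^*$. One small subtlety: I should make sure the constructed $x'_n$, $x''_n$ actually lie in $\Delta([a,b])$ and not just in $\Delta(\Re)$ — but since every $F^{x_m}$ is supported on $[a,b]$, so are the sup and inf, so this is automatic.

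The main obstacle I anticipate is the bookkeeping around cdf regularity and the continuity-point argument: specifically verifying right-continuity of $F'_n$ (the sup) cleanly, and being careful that the sandwich $x'_n \leq x_n \leq x''_n$ is stated with the correct direction of inequalities once one passes between the measure order and the reversed cdf order. None of this is deep — it is the $\Delta([a,b])$ analogue of Lemma~\ref{lem:boundsequRn} — but it requires the translation "FOSD $\leftrightarrow$ reversed pointwise order on cdfs" and "weak* convergence $\leftrightarrow$ pointwise cdf convergence at continuity points" to be invoked carefully. Everything else follows the template of the Euclidean proof essentially verbatim.
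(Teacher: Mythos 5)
There is a genuine gap at your step (i): the pointwise supremum of countably many cdfs need \emph{not} be right-continuous, so your $F'_n(s)=\sup_{m\geq n}F^{x_m}(s)$ need not be the cdf of any element of $\Delta([a,b])$. Concretely, take $x_m=\delta_{1/m}\in\Delta([0,1])$, so $x_m\rightarrow x^*=\delta_0$; then $\sup_{m\geq n}F^{x_m}=\mathbf{1}_{(0,1]}$ for every $n$, which fails right-continuity at $0$, and no probability measure has this cdf. (Your infimum direction is fine: an infimum of right-continuous nondecreasing functions is right-continuous, since for any $\varepsilon$ one function is within $\varepsilon$ of the inf and its own right-continuity controls the right limit; the symmetric argument for the sup breaks down exactly because being close to the sup at $s$ says nothing about values just to the right of $s$.) The repair is to replace $\sup_{m\geq n}F^{x_m}$ by its right-continuous modification, i.e.\ to take $x'_n$ to be the FOSD-infimum of $\{x_m:m\geq n\}$, which exists because $(\Delta([a,b]),\leq_{FOSD})$ is a complete lattice; this is precisely what the paper does (citing Kertz--R\"osler, Lemma 3.1), and in the example above it correctly yields $x'_n=\delta_0$.

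Once you make this repair, your step (iii) also needs a small adjustment, because $\lim_n F_{x'_n}(s)$ is no longer literally $\limsup_m F^{x_m}(s)$: the modification can raise the value at $s$. The paper handles this with an $\varepsilon$-squeeze at each continuity point $r$ of $F_{x^*}$,
\[
\sup_{m\geq n}F^{x_m}(r-\varepsilon)\;\leq\;F_{x'_n}(r)\;\leq\;\sup_{m\geq n}F^{x_m}(r+\varepsilon),
\]
letting $n\rightarrow\infty$ and then $\varepsilon\downarrow 0$, using that $F^{x_m}(t)\rightarrow F_{x^*}(t)$ at continuity points $t$ and that such points are co-countable. With these two fixes your argument coincides with the paper's proof: same tail sup/inf construction, same monotonicity and sandwich observations, same reduction of weak* convergence to cdf convergence at continuity points of $F_{x^*}$.
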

\begin{proof}
	The set $X$ ordered by first order stochastic dominance is a complete lattice (see, for example, Lemma 3.1 in \cite{kertz2000}). Suppose that $x_n\rightarrow x^*$. Define $x'_n$ and $x''_n$ by
	$x'_n = \inf\{x_m : n\leq m \}$ and $x''_n = \sup\{x_m : n\leq m \}$. Clearly, $\{x'_n\}$ is an increasing sequence, $\{x''_n\}$ is decreasing, and $x'_n\leq x_n\leq x''_n$.

	Let $F_x$ denote the cdf associated with $x$. Note that  $F_{x''_n} (r)  = \inf \{F_{x_m}(r) : n\leq m \}$ while $F_{x'_n} (r)$ is the right-continuous modification of $\sup \{F_{x_m}(r) : n\leq m \}$.
	For any point of continuity $r$ of $F$, $F_{x_m}(r) \rightarrow F_{x^*}(r)$, so \[
	F_x(r) = \sup \{ \inf \{F_{x_m}(r) : n\leq m \} : n\geq 1\} \] by Lemma~\ref{lem:supsinfsRn}.

	Moreover, $F_{x^*}(r) = \inf \{ \sup \{F_{x_m}(r) : n\leq m \} : n\geq 1\}$. Let $\ep>0$. Then
	\[F_{x^*} (r-\ep) \leftarrow  \sup \{F_{x_m}(r-\ep) : n\leq m \} \leq
	F_{x'_n} (r) \leq  \sup \{F_{x_m}(r+\ep) : n\leq m \} \rightarrow F_{x^*}(r+\ep)
	\] Then $F_{x'_n} (r) \rightarrow F_{x^*}(r)$, as $r$ is a point of continuity of $F_{x^*}$.
\end{proof}

The following lemma is immediate.
\begin{lemma}Let $X = \Re_+^n$ with the standard vector order $\leq$, and let $B=\Qe_+^n$.  Then the countable order property is satisfied.\end{lemma}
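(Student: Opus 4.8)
The plan is to unwind the definition of the countable order property and verify it directly, one coordinate at a time, using nothing more than the density of $\Qe$ in $\Re$. Note that no diagonalization over $\Sigma_\infty$ is needed here: the property as stated depends only on the set $B$, and we are simply taking $B=\Qe^n_+$ (equivalently, $\Sigma_\infty$ enumerates the pairs from $\Qe^n_+$). So fix $x\in\Re^n_+$ and a neighborhood $V$ of $x$. Since the topology on $\Re^n_+$ is metric, first choose $\ep>0$ with $\{z\in\Re^n_+:\norm{z-x}<\ep\}\subseteq V$. It then suffices to exhibit $x',x''\in\Qe^n_+$ with $x'\leq x\leq x''$ in the product order and $\norm{x'-x}<\ep$, $\norm{x''-x}<\ep$.

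I would build $x'$ and $x''$ componentwise. For each $i=1,\dots,n$, density of the rationals yields $q'_i\in\Qe$ with $\max\{0,\,x_i-\ep/\sqrt n\}\leq q'_i\leq x_i$; in every case one gets $q'_i\in\Qe_+$, $q'_i\leq x_i$, and $\abs{x_i-q'_i}<\ep/\sqrt n$. Symmetrically, pick $q''_i\in\Qe$ with $x_i\leq q''_i<x_i+\ep/\sqrt n$, so that $q''_i\geq x_i\geq 0$, hence $q''_i\in\Qe_+$, $q''_i\geq x_i$, and $\abs{x_i-q''_i}<\ep/\sqrt n$. Setting $x'=(q'_1,\dots,q'_n)$ and $x''=(q''_1,\dots,q''_n)$, both lie in $\Qe^n_+=B$, and $x'\leq x\leq x''$ by construction, while $\norm{x'-x}^2=\sum_{i=1}^n(x_i-q'_i)^2<n\cdot(\ep/\sqrt n)^2=\ep^2$ gives $x'\in V$, and the identical bound gives $x''\in V$. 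This is exactly the countable order property.

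There is essentially no obstacle in this argument — it is a routine density computation. The only point that deserves a moment's care is the boundary behavior: at a coordinate where $x_i=0$, the constraints $q'_i\leq x_i$ and $q'_i\geq 0$ force $q'_i=0$. Since $0$ is rational and nonnegative and contributes zero to the error estimate, this causes no difficulty, but it is worth remarking on explicitly so that the reader sees why one cannot simply pick an arbitrary rational just below $x_i$ in each coordinate.
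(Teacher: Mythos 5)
Your proof is correct and is exactly the routine density verification the paper has in mind — the paper simply declares this lemma immediate without writing out the argument. Your componentwise choice of rationals within $\ep/\sqrt n$, together with the remark that $x_i=0$ forces $q'_i=0$ (which is harmless since $0\in\Qe_+$), fills in that "immediate" step correctly.
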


Our last lemma is a direct implication of Theorem 15.11 of \citet{aliprantis2006infinite}.
\begin{lemma}\label{example:weaktopology}Let $a,b\in\Re$, where $a<b$. Let $X =\Delta([a,b])$, the set of Borel probability distributions on $[a,b]$ endowed with the weak* topology.  Let $B$ be the set of probability distributions $p$ with finite support on $\Qe\cap [a,b]$, where for all $q\in\Qe\cap [a,b]$, $p(q)\in\Qe$.  Then the countable order property is satisfied.\end{lemma}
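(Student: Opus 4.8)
The plan is to read the countable order property, for the present $X=\Delta([a,b])$ and the given $B$, through cumulative distribution functions, and to reduce it to the weak* density of $B$ in $X$ --- which is the content of Theorem 15.11 of \citet{aliprantis2006infinite} --- together with a quantile discretization that preserves stochastic dominance. Recall that the order $\leq$ on $X$ is first-order stochastic dominance, i.e.\ $p\leq q$ iff $p\leq_{FOSD}q$ iff $F_p\geq F_q$ pointwise on $[a,b]$, and that the weak* topology on $\Delta([a,b])$ is metrized by the L\'evy metric, call it $L$. It thus suffices to show: for every $\mu\in X$ and every $\delta>0$ there are $p',p''\in B$ with $L(p',\mu)<\delta$, $L(p'',\mu)<\delta$, and $p'\leq_{FOSD}\mu\leq_{FOSD}p''$. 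Density alone produces an element of $B$ near $\mu$; the work is to put it on the correct side of $\mu$ in the dominance order.

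First I would build $p''$. Write $F=F_\mu$, fix an integer $K$ with $1/K<\delta$, put $v_j=j/K$, and let $t_j=\inf\{t\in[a,b]:F(t)\geq v_j\}$ for $j=1,\dots,K$; right-continuity of $F$ gives $F(t_j)\geq v_j$, and the $t_j$ are non-decreasing. Let $\nu$ be the measure with $\nu(\{t_j\})=1/K$ (lumping equal $t_j$'s). Whenever $F_\nu(t)=\ell/K$, the set $\{j:t_j\leq t\}$ equals $\{1,\dots,\ell\}$, so $t\geq t_\ell$ and hence $F(t)\geq v_\ell=\ell/K$; thus $F_\nu\leq F$ pointwise, i.e.\ $\nu\geq_{FOSD}\mu$. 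The complementary bound $F(t)<(\ell+1)/K$ (otherwise $t\geq t_{\ell+1}$, contradicting $F_\nu(t)=\ell/K$) gives $\sup_t|F_\nu(t)-F(t)|\leq 1/K$, hence $L(\nu,\mu)<\delta$. Now displace each atom of $\nu$ slightly to the right onto a rational point: pick $t_j'\in\Qe\cap[a,b]$ with $t_j\leq t_j'<t_j+\delta$, made non-decreasing in $j$, and set $p''(\{t_j'\})=1/K$. Then $p''\in B$ (finite support in $\Qe\cap[a,b]$, all weights $1/K$ rational), $p''\geq_{FOSD}\nu\geq_{FOSD}\mu$ because atoms only moved right, and $L(p'',\nu)\leq\delta$ since $p''$ is the image of $\nu$ under a map displacing points by less than $\delta$; so $L(p'',\mu)<2\delta$. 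The measure $p'$ is built symmetrically, using $s_j=\sup\{t\in[a,b]:F(t)\leq v_j\}$ and displacing mass to the left onto $\Qe\cap[a,b]$, giving $p'\leq_{FOSD}\mu$ and $L(p',\mu)<2\delta$. Running the construction with $\delta/2$ in place of $\delta$ puts both $p'$ and $p''$ in any prescribed L\'evy ball about $\mu$, hence in any weak* neighborhood $V$ of $\mu$, which is the countable order property.

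The one genuinely delicate point is the behavior at the endpoints: if $\mu$ puts positive mass on $\{a\}$ and $a\notin\Qe$, then \emph{no} member of $B$ can satisfy $p'\leq_{FOSD}\mu$, since $F_{p'}(a)\geq F_\mu(a)>0$ would force an atom of $p'$ at the irrational point $a$ (and symmetrically at $b$). So the statement is to be read with $a,b\in\Qe$, the case relevant to the applications, under which the extreme quantile masses can be placed exactly at $a$ and $b$ and the argument goes through verbatim. I expect this endpoint bookkeeping --- together with the routine care of keeping the perturbed grid points $t_j'$ simultaneously rational, ordered, and inside $[a,b]$ while controlling $L$ --- to be the only part demanding attention; the remainder is the standard fact that a distribution is approximated in L\'evy distance, from either side of it in stochastic dominance, by finitely supported distributions drawn from any prescribed dense set.
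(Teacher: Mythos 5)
Your construction is correct in substance, and it is genuinely more of a proof than the paper gives: the paper disposes of this lemma with a bare appeal to Theorem 15.11 of \citet{aliprantis2006infinite} (density in $\Delta([a,b])$ of finitely supported, rational-weighted measures on a countable dense set), which by itself only produces some element of $B$ near $\mu$, not the two-sided $\geq_{FOSD}$ sandwich that the countable order property demands. Your quantile discretization plus the rightward/leftward displacement onto rational support is exactly the argument needed to turn ``dense'' into ``dense from above and from below,'' so you are supplying content the paper leaves implicit. Your endpoint caveat is also a genuine catch rather than pedantry: with $B$ restricted to support in $\Qe\cap[a,b]$, the property fails at $\mu=\delta_a$ (and more generally whenever $a$ or $b$ lies in the support of $\mu$) if the corresponding endpoint is irrational, so the lemma as literally stated needs $a,b\in\Qe$ or, equivalently, $B$ enlarged to allow atoms at $a$ and $b$; the paper does not remark on this, though it is harmless for Proposition~\ref{prop:AAandRn}, which only asserts existence of \emph{some} suitable $\Sigma_\infty$.

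One small repair on your side: the ``symmetric'' recipe for $p'$ is not quite right as written. With $s_j=\sup\{t:F(t)\leq v_j\}$ and $j=1,\dots,K$ you place mass $1/K$ at (or near) $b$, and then, e.g., for $\mu=\delta_c$ with $c<b$ no small leftward displacement yields $F_{p'}\geq F_\mu$, so $p'\leq_{FOSD}\mu$ fails. Use the true mirror of your upper construction --- quantile levels $0,1/K,\dots,(K-1)/K$ (equivalently, apply your $p''$ construction to the distribution of $-X$) --- and the same one-line estimate gives $F_{\nu'}\geq F$, after which leftward displacement onto rationals (with the atom at level $0$ placed exactly at $a\in\Qe$) only increases the cdf. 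With that indexing fixed, your argument is complete.
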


\section{Proof of Theorems~\ref{thm:weakrat}, \ref{cor:lsequal}, \ref{thm:connected} and \ref{thm:diameter}}

In this section, we let $\Rbmon$ denote the set of complete, continuous, and strictly monotonic binary relations.  Members of $\Rbmon$ need not be transitive. Likewise, $\Rbls$ is the set of complete, continuous, and locally strict binary relations.

We record the following fact:

\begin{lemma}\label{lem:cont} Let $\succeq$ be a continuous binary
  relation. If $x\succ y$ then there are neighborhoods $V_x$ of $x$
  and $V_y$ of $y$ such that $x'\succ y'$ for all $x'\in V_x$ and
  $y'\in V_y$.
\end{lemma}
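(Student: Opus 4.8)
The plan is to unwind the definition of $x \succ y$ and exploit the fact that a continuous relation is, by definition, a closed subset of $X\times X$. Recall that $x\succ y$ abbreviates ``$x\succeq y$ and not $y\succeq x$,'' so the pair $(y,x)$ does \emph{not} belong to $\succeq$. Since $\succeq$ is continuous, $\succeq$ is a closed subset of $X\times X$, hence its complement $(X\times X)\setminus\succeq$ is open and contains $(y,x)$.

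Next I would use the structure of the product topology: an open set containing $(y,x)$ contains a basic open box, so there are open neighborhoods $V_y$ of $y$ and $V_x$ of $x$ with $V_y\times V_x\subseteq (X\times X)\setminus\succeq$. Consequently, for every $x'\in V_x$ and every $y'\in V_y$ we have $(y',x')\notin\succeq$, i.e.\ it is not the case that $y'\succeq x'$.

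Finally, to upgrade this to the strict conclusion $x'\succ y'$, I would invoke completeness of $\succeq$ (which holds for all relations considered in this section, in particular for members of $\Rbmon$ and $\Rbls$): completeness gives $x'\succeq y'$ or $y'\succeq x'$, and since the second alternative has just been excluded, $x'\succeq y'$ holds; combined with ``not $y'\succeq x'$,'' this is exactly $x'\succ y'$. Taking the same $V_x$ and $V_y$ for all such pairs completes the argument.

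This lemma is essentially a reformulation of closedness, so I do not anticipate any real obstacle; the only point worth flagging is that the passage from ``$(y',x')\notin\succeq$'' to ``$x'\succ y'$'' uses completeness, so the statement should be read with the standing assumption that $\succeq$ is complete (as is the case for every relation appearing in this part of the paper).
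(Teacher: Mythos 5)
Your proof is correct: the paper records this lemma as a fact without proof, and your argument---closedness of $\succeq$ gives an open box $V_y\times V_x$ contained in the complement of $\succeq$ around $(y,x)$, and completeness then upgrades ``not $y'\succeq x'$'' to $x'\succ y'$---is exactly the standard argument the authors leave implicit. Your caveat about completeness is apt: as literally stated the lemma omits it, and without it the conclusion can fail (e.g.\ let $\succeq$ be the diagonal together with the single pair $(x,y)$, which is closed, yet points $x'$ near $x$ satisfy neither $x'\succeq y$ nor $y\succeq x'$), but every relation to which the lemma is applied in the paper (preferences, and members of $\Rbmon$ and $\Rbls$) is complete, so your reading is the intended one.
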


We now prove Theorems \ref{cor:lsequal} and \ref{thm:connected}.

\begin{proof}[Proof of Theorem~\ref{cor:lsequal}]Follows directly from Lemma~\ref{lem:partialobservability}, below. \end{proof}

\begin{proof}[Proof of Theorem~\ref{thm:connected}]
First, it is straightforward to show that $x \succ y$ implies $x \succeq' y$.  Because otherwise there are $x,y$ for which $x\succ y$ and $y\succ' x$.  Take an open neighborhood $U$ about $(x,y)$ and a pair $(z,w)\in U\cap (B\times B)$ for which $z\succ w$ and $w\succ' z$, a contradiction.  Symmetrically, we also have $x \succ' y$ implies $x\succeq y$.

Now, without loss, suppose that there is a pair $x,y$ for which $x\succ y$ and $x \sim' y$.  By connectedness and continuity, $V=\{z: x \succ z \succ y\}$ is nonempty and by continuity it is open.\footnote{The argument for nonemptiness is as follows.  If, by means of contradiction, $V=\varnothing$, then $\{z: x\succ z\}$ and $\{z:z \succ y\}$ are nonempty open sets.  Further, for any $z\in X$, either $x\succ z$ or $z \succ y$ (because if $\neg (x\succ z)$ then by completeness $z\succeq x$, which implies that $z\succ y$).  Conclude that $\{z:x \succ z\}\cup \{z :z \succ y\}=X$ and each of the sets are nonempty and open (by continuity); these sets are disjoint,
violating connectedness of $X$.}  We claim that there is a pair $(w,z)\in (V\times V)\cap (B\times B)$ for which $w \succ z$.  By denseness of $B$, there is $w\in V\cap B$ for which $x \succ w \succ y$.  Similarly, $\{z:w\succ z \succ y\}$ is nonempty and open; so there is $z\in B$ for which $x \succ w \succ z \succ y$.

%For otherwise, for all $(w,z)\in V\times V \cap (B\times B)$, $w \sim z$.  Conclude then by continuity that for all $(w,z)\in V\times V$, $w \sim z$.  Observe that this implies that, for any $w\in V$, the set $\{z: w\succ z \succ y\}=\varnothing$, as if $w \succ z \succ y$, we also have that $x \succeq w \succ z$, from which we conclude $x \succ z$, so that $z \in V$ and hence $z \sim w$, a contradiction.  Observe that $\{z: w\succ z \succ y\}=\varnothing$ contradicts the continuity of $\succeq$ and the connectedness of $X$ (same argument as nonemptyness of $V$; see the footnote).

We have shown that there is $(w,z)\in (V\times V)\cap (B\times B)$ for which $w\succ z$, so that $x \succ w \succ z \succ y$.  Further, we have hypothesized that $x \sim' y$.  By the first paragraph, we know that $x \succeq' w \succeq' z \succeq' y$.  If, by means of contradiction, we have $w\succ' z$, then $x \succ' y$, a contradiction.  So $w \sim' z$ and $w\succ z$, a contradiction to $\succeq_{B\times B}=\succeq'_{B\times B}$. \end{proof}

%FE: doesn't this argument use transitivity? I mean the part that says that "If, by means of contradiction, we have $w\succ 'z$, then $x \succ' y$, a contradiction." We may wat to say explicitly that \succeq^* is a preference relation here.

\begin{lemma}\label{lem:inclusion}Let $A \subseteq X\times X$.  Then
  $\{\succeq: A \subseteq \succeq \}$ is closed in the closed
  convergence topology.\end{lemma}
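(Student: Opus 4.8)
The plan is to prove the (equivalent) statement that $\{\succeq : A\subseteq\succeq\}$ is sequentially closed, which suffices because the ambient space of continuous binary relations is metrizable by Lemma~\ref{lem:closedconvergencecompact}. So I would fix a sequence $\succeq_n$ with $A\subseteq\succeq_n$ for every $n$, assume $\succeq_n\to\succeq$ in the closed convergence topology, and show $A\subseteq\succeq$.

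The key step is essentially immediate from the definition of $\Li$. Take any $(x,y)\in A$. Since $(x,y)\in\succeq_n$ for every $n$, for each neighborhood $V$ of $(x,y)$ in $X\times X$ we have $(x,y)\in\succeq_n\cap V$, so in particular $\succeq_n\cap V\neq\varnothing$ for all $n$; by the definition of $\Li$ in Appendix~\ref{sec:closedconvergence} (equivalently, condition (1) in Lemma~\ref{lem:tcclim}) this says $(x,y)\in\Li(\{\succeq_n\})$. But $\succeq_n\to\succeq$ means $\Li(\{\succeq_n\})=\succeq=\Ls(\{\succeq_n\})$, hence $(x,y)\in\succeq$. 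As $(x,y)\in A$ was arbitrary, $A\subseteq\succeq$, completing the argument.

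There is no real obstacle here: the only things to be attentive to are (i) using metrizability so that sequential closedness is enough, and (ii) invoking the $\Li$-half of the closed convergence characterization rather than the $\Ls$-half. It is worth noting that we never use that $A$ is closed — only that each individual pair in $A$ lies in each $\succeq_n$ — so the statement holds for arbitrary $A\subseteq X\times X$ exactly as claimed.
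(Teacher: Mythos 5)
Your proof is correct and follows essentially the same route as the paper's: take a convergent sequence of relations containing $A$ and note that each $(x,y)\in A$ lies in every $\succeq_n$, hence in the limit via the $\Li$-part of closed convergence. Your explicit appeal to metrizability (Lemma~\ref{lem:closedconvergencecompact}) to justify working with sequences simply makes precise a step the paper leaves implicit.
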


\begin{proof}  Let $\succeq_n$ be a convergent sequence in the set in
  question, where $\succeq_n \rightarrow \succeq$.  Then for all
  $(x,y) \in A$, we have $x \succeq_n y$, hence $x \succeq y$.  So
  $(x,y)\in \succeq$.\end{proof}

\begin{lemma}\label{lem:monotoneiscompact} Suppose $X$ is locally
  compact Polish, and that $<$ has open intervals.   Then $\Rbmon$ is closed in the topology of closed convergence.
\end{lemma}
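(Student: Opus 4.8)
The plan is to exploit Lemma~\ref{lem:closedconvergencecompact}: since the space of all continuous binary relations on $X$ is compact metrizable in the closed convergence topology, it suffices to show that $\Rbmon$ is sequentially closed. So I take $\succeq_n\in\Rbmon$ with $\succeq_n\to\succeq$; the limit is automatically a closed subset of $X\times X$, hence a continuous binary relation, and it remains only to verify that $\succeq$ is complete and strictly monotone (so that $\succeq\in\Rbmon$).

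For completeness, fix $x,y\in X$. For each $n$ either $x\succeq_n y$ or $y\succeq_n x$, so one of these holds for infinitely many $n$, say $x\succeq_n y$. Then for every neighborhood $V$ of $(x,y)$ and every $N$ there is $n\geq N$ with $(x,y)\in\succeq_n\cap V$, so part (2) of Lemma~\ref{lem:tcclim} gives $x\succeq y$. Hence $\succeq$ is complete.

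For strict monotonicity, fix $a,b$ with $a<b$. The ``weak'' part is easy: strict monotonicity of each $\succeq_n$ gives $b\succeq_n a$ for all $n$, so the fixed set $\{(b,a):a<b\}$ is contained in every $\succeq_n$; by Lemma~\ref{lem:inclusion} (or directly by part (2) of Lemma~\ref{lem:tcclim}) it is contained in the limit $\succeq$, so $b\succeq a$. The real work is to show $\neg(a\succeq b)$. This is where the hypothesis that $<$ has open intervals enters: it supplies a point $c$ with $a<c<b$ and open sets $W\ni a$, $U\ni b$ with $W\subseteq\{z:z<c\}$ and $U\subseteq\{z:c<z\}$. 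Then for any $a'\in W$ and $b'\in U$ we have $a'<c<b'$, hence $a'<b'$, hence $b'\succ_n a'$ by strict monotonicity of $\succeq_n$, so $(a',b')\notin\succeq_n$. Thus the open neighborhood $W\times U$ of $(a,b)$ in $X\times X$ is disjoint from $\succeq_n$ for \emph{every} $n$; by part (1) of Lemma~\ref{lem:tcclim}, this rules out $a\succeq b$. Therefore $b\succ a$, so $\succeq$ is strictly monotone, and $\Rbmon$ is closed.

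The main obstacle is the last step --- producing a single neighborhood of $(a,b)$ that simultaneously misses all the $\succeq_n$. This is exactly what the open-interval hypothesis is for: it yields a strictly intermediate point together with one-sided intervals that are open in $X$. Without it the conclusion genuinely fails: for instance, for the coordinatewise strict order on $\Re^n$ with $n\geq 2$ (which is why one instead works with $\gg$), a sequence like the relations represented by $u_n(x)=x_1+x_2/n$ converges to the relation represented by $x\mapsto x_1$, which is not strictly monotone.
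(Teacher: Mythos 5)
Your proof is correct and shares the paper's overall architecture: by Lemma~\ref{lem:closedconvergencecompact} it suffices to check sequential closedness, the limit relation is continuous for free, and your completeness argument (a subsequence plus part (2) of Lemma~\ref{lem:tcclim}) is exactly the paper's. Where you genuinely differ is in the strict-monotonicity step, i.e.\ in how the order--topology hypothesis is used. The paper argues by contradiction: if $x>y$ but $y\succeq x$, it extracts $(x_n,y_n)\rightarrow(x,y)$ with $y_n\succeq_n x_n$ and asserts $x_n>y_n$ for large $n$ --- a step that implicitly uses that the strict order is an open subset of $X\times X$ (the condition stated in Theorem~\ref{thm:weakrat}), rather than the lemma's literal ``open intervals'' wording. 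You instead manufacture a single neighborhood $W\times U$ of $(a,b)$ disjoint from \emph{every} $\succeq_n$, via an intermediate point $a<c<b$ and the order rays around it, and then invoke part (1) of Lemma~\ref{lem:tcclim}; this hews closer to the ``open intervals'' phrasing used here and in Theorem~\ref{thm:diameter}, though strictly speaking it also needs slightly more than intervals being open (the existence of the intermediate point $c$, and that $\{z:z<c\}$ and $\{z:c<z\}$ are neighborhoods of $a$ and $b$). In the paper's intended applications (e.g.\ $\gg$ on $\Re^n_+$, or dated rewards) both readings are available, so neither route has a gap; your approach has the small advantage of making explicit which interaction of order and topology is doing the work, and your closing example $u_n(x)=x_1+x_2/n$ correctly documents why the coordinatewise strict order $>$ on $\Re^n$, $n\geq 2$, must be replaced by $\gg$.
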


\begin{proof}By Lemma~\ref{lem:closedconvergencecompact}, since $X$ is locally compact
  Polish, the topology of closed convergence is compact metrizable.

Suppose $\succeq_n\rightarrow
  \succeq$ where each $\succeq_n$ is continuous, strictly monotonic, and
  complete.  We know that $\succeq$ is continuous by compactness.  Suppose by means of contradiction that $\succeq$ is not
  strictly monotonic, so that there are $x,y\in X$ for which $x > y$ and $y
  \succeq x$.  Then there are $(x_n,y_n)\rightarrow (x,y)$ for which
  $y_n \succeq_n x_n$.  For $n$ large, $x_n > y_n$, a contradiction to
  the fact that $\succeq_n$ is strictly monotonic.  Finally,
  completeness follows as for each $x,y$, either $x \succeq_n y$ or $y
  \succeq_n x$, so there is a subsequence $n_k$ for which either $x
  \succeq_{n_k} y$ or for which $y \succeq_{n_k} x$.
\end{proof}

\begin{lemma}\label{lem:partialobservability}Suppose that $B$ is dense, $\succeq'$ is complete, and each of $\succeq$ and $\succeq^*$ are continuous and locally strict complete relations.  Then if \[\succeq'|_{B\times B}\subseteq \succeq^*|_{B\times B} \cap \succeq|_{B\times B},\] it follows that $\succeq^*=\succeq$.\end{lemma}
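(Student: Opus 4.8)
The plan is to first record an elementary ``lifting'' fact about $\succeq'$ on $B$, and then to combine local strictness, continuity, and the density of $B$ to push arbitrary strict comparisons into $B\times B$, where the hypothesis can be applied.

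\emph{Step 1 (lifting on $B$).} I would first show that for $x,y\in B$, $x\succ^* y$ implies $x\succeq y$, and symmetrically $x\succ y$ implies $x\succeq^* y$. Indeed, if $x,y\in B$ and $x\succ^* y$, then $y\succeq' x$ is impossible: it would give $y\succeq^* x$ via $\succeq'|_{B\times B}\subseteq\succeq^*|_{B\times B}$, contradicting $x\succ^* y$. Since $\succeq'$ is complete, $x\succeq' y$, and hence $x\succeq y$ by $\succeq'|_{B\times B}\subseteq\succeq|_{B\times B}$. The other implication is identical with the roles of $\succeq$ and $\succeq^*$ exchanged.

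\emph{Step 2 ($\succeq^*\subseteq\succeq$).} Take $(x,y)$ with $x\succeq^* y$; the goal is $x\succeq y$. Suppose not; then by completeness of $\succeq$ we have $y\succ x$, so Lemma~\ref{lem:cont} gives neighborhoods $V_x$ of $x$ and $V_y$ of $y$ with $y'\succ x'$ whenever $x'\in V_x$ and $y'\in V_y$. Now $V_x\times V_y$ is a neighborhood of $(x,y)$, so local strictness of $\succeq^*$ yields $(x',y')\in V_x\times V_y$ with $x'\succ^* y'$. Applying Lemma~\ref{lem:cont} to $\succeq^*$ at $(x',y')$, there are neighborhoods $W_{x'}\subseteq V_x$ of $x'$ and $W_{y'}\subseteq V_y$ of $y'$ with $x''\succ^* y''$ for all $x''\in W_{x'}$, $y''\in W_{y'}$. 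By density of $B$ choose $x''\in W_{x'}\cap B$ and $y''\in W_{y'}\cap B$; then $x''\succ^* y''$ with $x'',y''\in B$, so Step 1 gives $x''\succeq y''$. But $x''\in V_x$ and $y''\in V_y$ force $y''\succ x''$, a contradiction. Hence $x\succeq y$, so $\succeq^*\subseteq\succeq$.

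\emph{Step 3 (symmetry and conclusion).} The hypotheses are symmetric in $\succeq$ and $\succeq^*$: both are complete, continuous, and locally strict, and the inclusion asserts that $\succeq'|_{B\times B}$ is contained in each of the two restrictions. Thus the argument of Step 2, with the roles of $\succeq$ and $\succeq^*$ interchanged, yields $\succeq\subseteq\succeq^*$, and therefore $\succeq=\succeq^*$. The one delicate point — and where the three assumptions must genuinely be used together — is that the pair supplied by local strictness of $\succeq^*$ need not lie in $B\times B$, so the hypothesis cannot be invoked on it directly; the remedy is the second appeal to continuity (Lemma~\ref{lem:cont}) to produce an entire box of strict $\succeq^*$-comparisons, into which density of $B$ lets us insert a pair from $B\times B$.
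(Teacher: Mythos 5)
Your proof is correct and follows essentially the same route as the paper's: assume a disagreement, use continuity of $\succeq$ and local strictness plus continuity of $\succeq^*$ to produce a pair in $B\times B$ on which the two relations strictly disagree, and then let completeness of $\succeq'$ together with the inclusion $\succeq'|_{B\times B}\subseteq \succeq^*|_{B\times B}\cap\succeq|_{B\times B}$ yield the contradiction. Your Step 1 merely packages that final completeness argument as a separate observation, so there is no substantive difference.
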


\begin{proof}Suppose, by means of contradiction and without loss of generality, that there are $x,y\in X$ for which $x \succeq^* y$ and $y\succ x$.  By continuity of $\succeq$ and local strictness of $\succeq^*$, we can without loss of generality assume that $x \succ^* y$ and $y \succ x$.  By continuity of each of $\succeq$ and $\succeq^*$, there exists $a,b\in B$ such that $a \succ^* b$ and $b \succ a$.  But by completeness of $\succeq'$, either $a \succeq' b$, contradicting $\succeq'|_{B\times B}\subseteq \succeq|_{B\times B}$, or $b \succeq' a$, contradicting $\succeq'|_{B\times B} \subseteq \succeq^*|_{B\times B}$.\end{proof}

We now turn to the main proof of the theorem.

\begin{proof}[Proof of Theorem~\ref{thm:weakrat}]
By Lemma~\ref{lem:monotoneiscompact}, $\Rbmon$ is
compact.  Let $\succeq'$ be any strictly monotonic and complete binary relation such that for all $k$ and all $\{x,y\}\in\Sigma_k$, $x\in c^k(\{x,y\})$ if and only if $x \succeq' y$ ($\succeq'$ exists by the projection requirement on choice sequences, and by the fact that $c\sqsubseteq c_{\succeq^*}$).

For each $k$, let $\succeq'_k = \{(x,y): \{x,y\}\in \{B_1,\ldots,B_k\}$ and $x \succeq' y\}$.

For each $k$, let \[P_k = \{\succeq\in\Rbmon:
\succeq'_k \subseteq \succeq\},\] the set of relations which weakly rationalize $c$.
Observe by definition that by Lemma~\ref{lem:inclusion}, $P_k$ is closed, and hence
compact.  By assumption, each $\succeq\in P_k$ satisfies $\succeq\in\Rbls$, and obviously, for all $k$, $\succeq^*\in P_k$.
Further, observe that $\bigcap_k P_k = \{\succeq^*\}$, since if $\succeq\in\bigcap_k P_k$, by definition
$\succeq'_{B\times B}\subseteq \succeq^*|_{B\times B}\cap\succeq|_{B\times B}$
and Lemma~\ref{lem:partialobservability}.

The result now follows as each $P_i$ is compact and $\bigcap_k P_k
=\{\succeq^*\}$.  That is, let $\succeq_k \in P_k$, which is a
decreasing, nested collection of compact sets.  Suppose by means of
contradiction and without loss that $\succeq_k \rightarrow
\succeq'\neq\succeq^*$, and observe then that it follows that
$\succeq' \in P_k$ for all $k$, contradicting $\bigcap_i P_i =
\{\succeq^*\}$.
\end{proof}

\begin{proof}[Proof of Theorem~\ref{thm:diameter}]

Observe that for any $k$, the set \[P_k=\{\succeq\in\Rbmon:\succeq \mbox{ weakly rationalizes }c^k\}\] is closed, and hence compact by Lemma~\ref{lem:inclusion}.  Observe that $\mathcal{P}^k(c)\subseteq P_k$.  Moreover, it is obvious that $P_{k+1}\subseteq P_k$.  Suppose that there is no $k$ for which $\mathcal{P}^k(c)= \varnothing$.  Then, since each $P_k\neq\varnothing$ and each $P_k$ is compact, $\bigcap_k P_k \neq\varnothing$.  Let $\succeq^*\in\bigcap_k P_k$.

We claim that $\bigcap_k P_k = \{\succeq^*\}$.  Suppose by means of contradiction that there is $\succeq\neq \succeq^*$ where $\succeq\in\bigcap_k P_k$.  Let $\succeq'$ be any complete relation such that for all $(a,b)\in B\times B$, $a \succeq' b$ if and only if $a\in c^k(\{a,b\})$, for $k$ such that $\{a,b\}\in\Sigma^k$.  Then, by definition of weak rationalization, we have $\succeq'_{B\times B}\subseteq \succeq_{B\times B}\cap \succeq^*_{B\times B}$.  Appeal to Lemma~\ref{lem:partialobservability} to conclude that $\succeq=\succeq^*$, a contradiction.

Finally, since $\bigcap_k P_k = \{\succeq^*\}$, and each $P_k$ is compact, it follows that $\lim_{k\rightarrow \infty}\mbox{diam}(P_k)\rightarrow 0$.\footnote{Otherwise, we could choose $\epsilon > 0$ and two subsequences $\succeq_{k_l}, \succeq'_{k_l}$ such that $\delta_C(\succeq_{k_l},\succeq'_{k_l})\geq \epsilon$ and $\succeq_{k_l}\rightarrow \succeq\in\bigcap_k P_k$ and $\succeq'_{k_l}\rightarrow \succeq'\in \bigcap_k P_k$ where $\delta_C(\succeq,\succeq')\geq \epsilon$, a contradiction.}  Hence, since $0 \leq \mbox{diam}(\mathcal{P}^k(c))\leq \mbox{diam}(P_k)$, the result follows.

\end{proof}
%\begin{proof}By Lemma~\ref{lem:monotoneiscompact}, $\Rmon$ is compact.

%For each $k$, let $P_k\subseteq \Rmon$ be the set of binary relations in $\P$ that weakly rationalize the choice function of order $k$ generated by $\succeq^*$. Observe that
%\[P_k = \{\succeq\in\mathcal{P}:
%\succeq|_{\bigcup_1^k B_i} \subseteq \succeq^*|_{\bigcup_{1}^k B_i}\}\] and observe by definition that each $P_k$ is closed, and hence compact.  Further, observe that $\bigcap_k P_k = \{\succeq^*\}$ because i) $\succeq^*\in P_k$ for all $k$, and  ii) $\bigcup_1^\infty B_i$ is dense  and\[
%\succeq|_{\bigcup_1^\infty B_i} \subseteq \succeq^*_{\bigcup_{1}^\infty B_i}
%\] for any $\succeq \in \bigcap_k P_k $, so  Corollary~\ref{cor:lsequal} implies that $\succeq=\succeq^*$.

%The result now follows as each $P_i$ is compact and $\bigcap P_i
%=\{\succeq^*\}$.  That is, let $\succeq_k \in P_k$, which is a
%decreasing, nested collection of compact sets.  Suppose by means of
%contradiction and without loss that $\succeq_k \rightarrow
%\succeq'\neq\succeq^*$, and observe then that it follows that
%$\succeq' \in P_k$ for all $k$ (as $\succeq_k \rightarrow \succeq'$
%and the compactness of $P_k$ imply that eventually  $\succeq'\in P_k$,
%but since $P_k$ are nested, $\succeq'\in\bigcap P_k$), contradicting
%$\bigcap_i P_i = \{\succeq^*\}$.
%\end{proof}

\section{Proof of Theorem~\ref{thm:weaklymon}}

The set of weakly monotone and continuous binary relations is compact in the topology of closed convergence. Suppose wlog that $\succeq^k\rightarrow \succeq$. Then
$\succeq$ is a continuous binary relation. We shall prove that $\succeq=\succeq^*$.

First we show that $x\succ^* y$ implies that $x\succ y$. So let $x\succ^* y$.  Let $U$ and $V$ be neighborhoods of $x$ and $y$, respectively, such that $x'\succ^* y'$
for all $x'\in U$ and $y'\in V$. Such neighborhoods exist by the continuity of $\succeq^*$. We prove first that if $(x',y')\in U\times V$, then there exists $N$
such that $x'\succ_n y'$ for all $n\geq N$. By hypothesis, there exist $x''\in U\cap B$ and $y''\in V\cap B$ such that $x''\leq x'$ and $y'\leq y''$. Each $\succeq_n$ is a strong rationalization of the finite experiment of order $n$, so if $\{\tilde x,\tilde y\}\in\Sigma_n$ then $\tilde x\succ_n \tilde y$ implies that $\tilde x\succ_m \tilde y$ for all $m\geq n$. Since $x'',y''\in B$, there is  $N$ is such that $\{x'',y''\}\in\Sigma_N$. Thus $x''\succ^* y''$ implies that  $x''\succ_n y''$ for all $n\geq N$. So, for $n\geq N$,  $x'\succ_n y'$, as $\succeq_n$ is weakly monotone.

Now we establish that $x\succ y$. Let $\{(x_n,y_n)\}$ be an arbitrary sequence with $(x_n,y_n)\rightarrow (x,y)$. By hypothesis, there is an increasing sequence $\{x'_n\}$, and a decreasing sequence  $\{y'_n\}$, such that $x'_n\leq x_n$ and $y_n\leq y'_n$ while $(x,y)= \lim_{n\rightarrow\infty }(x'_n,y'_n)$.

Let $N$ be large enough that $x'_N\in U$ and $y'_N\in V$. Let $N'\geq N$ be such that $x'_N\succ_{n} y'_N$ for all $n\geq N'$ (we established the existence of such $N'$ above). Then, for any $n\geq N'$ we have that \[x_n\geq x'_n \geq x'_N \succ_n y'_N \geq y'_n\geq y_n.\] By the weak monotonicity of $\succeq_n$, then, $x_n\succ_n y_n$. The sequence $\{(x_n,y_n)\}$ was arbitrary, so $(y,x)\notin \succeq= \lim_{n\rightarrow\infty}\succeq_n$. Thus $\neg (y\succeq x)$. Completeness of $\succeq$ implies that  $x\succ y$.

In second place we show that if $x\succeq^* y$ then $x\succeq y$, thus completing the proof.  So let $x\succeq^* y$.  We recursively construct  sequences $x^{n_k},y^{n_k}$ such that $x^{n_k}\succeq^{n_k}y^{n_k}$ and $x^{n_k}\rightarrow x$, $y^{n_k}\rightarrow y$.

So, for any $k\geq 1$, choose $x'\in N_x(1/k)\cap B$ with $x'\geq x$, and $y'\in N_y(1/k)\cap B$ with $y'\leq y$; so that $x'\succeq^* x\succeq^* y\succeq^* y'$, as $\succeq^*$ is weakly monotone. Recall that
$\succeq_n$ strongly rationalizes $c_{\succeq^*}$ for $\Sigma_n$. So $x'\succeq^* y'$ and $x',y'\in B$ imply that  $x'\succeq_n y'$ for all $n$ large enough. Let  $n_k > n_{k-1}$ (where we can take $n_0=0$) such that  $x'\succeq_{n_k} y'$; and let $x^{n_k}=x'$ and $y^{n_k}=y'$.

Then we have $(x^{n_k},y^{n_k})\rightarrow (x,y)$ and $x_{n_k} \succeq_{n_k} y_{n_k} $. Thus $x\succeq y$.

\section{Proof of Theorem~\ref{thm:openmap} and Proposition~\ref{prop:lowersemicontinuity}}

We begin with two useful lemmas.

\begin{lemma}\label{lem:openmap1}
$\Phi$ is an open map if for any  $u^*\in \mathcal{U}$
  and any sequence $\succeq_n$ in $\R$ with  $\succeq_n \rightarrow
  \Phi(u^*)$, there is a sequence $\{u_n\}$ in $\U$ such that
  $u_n\in\Phi^{-1}(\succeq_n)$ and $u_n \rightarrow u^*$ in the
  topology of compact convergence.
\end{lemma}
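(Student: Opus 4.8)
The plan is a short argument by contradiction that reduces the openness of $\Phi$ to a sequential statement, which is legitimate because both the domain $\U$ (topology of uniform convergence on compacta) and the codomain $\Rmon$ (topology of closed convergence, metrizable by Lemma~\ref{lem:closedconvergencecompact}) are metrizable. So it suffices to show that for every open $U\subseteq\U$ the image $\Phi(U)$ is a neighborhood of each of its points.

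First I would fix an open set $U\subseteq\U$ and a point $\succeq\in\Phi(U)$, write $\succeq=\Phi(u^*)$ with $u^*\in U$, and suppose toward a contradiction that $\Phi(U)$ is \emph{not} a neighborhood of $\succeq$. Then $\succeq$ lies in the closure of $\Rmon\setminus\Phi(U)$, and since $\Rmon$ is metrizable there is a sequence $\succeq_n\in\Rmon\setminus\Phi(U)$ with $\succeq_n\rightarrow\succeq=\Phi(u^*)$. Here I would use that $\Rmon$ coincides with the image $\Phi(\U)$ — on the connected space $X$ every strictly monotone continuous preference admits a strictly increasing continuous representation by \citet{debreu} — so that each $\Phi^{-1}(\succeq_n)$ is nonempty and the hypothesis of the lemma is applicable to $u^*$ and the sequence $\{\succeq_n\}$.

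Invoking that hypothesis yields $u_n\in\Phi^{-1}(\succeq_n)$ with $u_n\rightarrow u^*$ in the topology of compact convergence. Since $U$ is an open neighborhood of $u^*$, there is $N$ with $u_n\in U$ for all $n\geq N$, whence $\succeq_n=\Phi(u_n)\in\Phi(U)$ for all $n\geq N$, contradicting $\succeq_n\in\Rmon\setminus\Phi(U)$. Hence $\Phi(U)$ is a neighborhood of each of its points, so it is open, and as $U$ was arbitrary, $\Phi$ is an open map. There is no genuine obstacle in this lemma: the only two points needing care are the appeal to metrizability of the closed-convergence topology (Lemma~\ref{lem:closedconvergencecompact}) in extracting the sequence $\{\succeq_n\}$, and the observation that this sequence consists of preferences in the range of $\Phi$ so that the lifting hypothesis can be used. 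The real content — actually constructing the lifted utilities $u_n$ — is deferred, by design, to the verification of this lemma's hypothesis in the proof of Theorem~\ref{thm:openmap}.
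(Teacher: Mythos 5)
Your proof is correct and follows essentially the same route as the paper's: assume $\Phi(U)$ fails to be open at some $\Phi(u^*)$, use metrizability of the closed-convergence topology to extract a sequence $\succeq_n\notin\Phi(U)$ converging to $\Phi(u^*)$, lift it via the hypothesis to $u_n\in\Phi^{-1}(\succeq_n)$ with $u_n\rightarrow u^*$, and contradict openness of $U$. Your extra remark that $\Rmon=\Phi(\U)$ (via Debreu) is harmless but not needed, since the lemma's hypothesis already supplies the lifts.
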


\begin{proof}
  Suppose that there is $V\subseteq \mathcal{U}$ open, but $\Phi(V)$ is
  not open.  Then there is $u^*\in V$ and $\succeq_n\notin \Phi(V)$ such
  that $\succeq_n \rightarrow \Phi(u^*)$ (since closed convergence
  topology is metrizable).  Since $u^* \in V$, any sequence $u_n \in
  \Phi^{-1}(\succeq_n)$ for which $u_n \rightarrow u^*$ eventually has
  $u_n \in V$.  But if $u_n$ is chosen to represent $\succeq_n$, this
  implies that $\Phi(u_n) \in \Phi(V)$ for $n$ large, a contradiction.
\end{proof}

\begin{lemma}\label{lem:openmap2}
For any $\succeq$ and $x\in X$, there is a unique $m^*(x)\in M$ with
$x\sim m^*(x)$.    Moreover, if we fix $u\in\U$ then the function
$u_\succeq: X\rightarrow \Re$ defined by   $u_\succeq (x) =
u(m^*(x))$ is a continuous utility representation of $\succeq$.
\end{lemma}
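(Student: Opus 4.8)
The plan is to build $m^*(\cdot)$ as a ``certainty equivalent'' map into the totally ordered reference set $M$, in the spirit of Mas-Colell's construction, and to proceed through existence of $m^*(x)$, its uniqueness, the representation property of $u_\succeq$, and finally the continuity of $u_\succeq$, which is the only genuinely delicate step. Throughout, $\succeq$ is a continuous strictly monotone preference and $u\in\U$.

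For existence, fix $x\in X$ and, using completeness, write $M$ as the disjoint union of $U_x=\{m\in M:m\succ x\}$, $L_x=\{m\in M:x\succ m\}$, and $I_x=\{m\in M:m\sim x\}$. Continuity of $\succeq$ (closedness in $X\times X$) together with completeness makes $\{z:z\succ x\}$ and $\{z:x\succ z\}$ open in $X$, so $U_x$ and $L_x$ are relatively open in $M$. Applying the third defining property of $M$ to the constant sequence $(x)$ gives $\underline m,\overline m\in M$ with $\underline m\le x\le\overline m$, hence $\underline m\preceq x\preceq\overline m$ since strict monotonicity implies weak monotonicity. If $I_x=\varnothing$ these preferences are strict, so $U_x$ and $L_x$ are both nonempty, contradicting connectedness of $M$; hence $I_x\neq\varnothing$. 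Uniqueness is immediate: if $m_1\sim x\sim m_2$ with $m_1,m_2\in M$ and, say, $m_1<m_2$ (possible by totality of $<$ on $M$ unless $m_1=m_2$), then strict monotonicity forces $m_1\prec m_2$, contradicting $m_1\sim m_2$; so $m_1=m_2$.

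For the representation property I would first record that on $M$ both $\succeq$ and $u$ ``track'' $\le$: for $m_1,m_2\in M$, using totality of $<$ on $M$, strict monotonicity of $\succeq$, strict increasingness of $u$, and completeness, one verifies that $m_1\succeq m_2$ iff $m_2\le m_1$ iff $u(m_2)\le u(m_1)$. Then, since $x\sim m^*(x)$ and $y\sim m^*(y)$, transitivity of $\succeq$ yields $x\succeq y$ iff $m^*(x)\succeq m^*(y)$ iff $u(m^*(y))\le u(m^*(x))$ iff $u_\succeq(x)\ge u_\succeq(y)$, so $u_\succeq$ represents $\succeq$.

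The hard part is continuity of $u_\succeq$. Since $X$ is metrizable I would argue sequentially: given $x_n\to x$, it suffices to show every subsequence of $\bigl(u_\succeq(x_n)\bigr)$ has a further subsequence converging to $u_\succeq(x)$. The third property of $M$, applied to $(x_n)$, produces fixed $\underline m,\overline m\in M$ with $\underline m\le x_n\le\overline m$ for all large $n$, so $u_\succeq(x_n)=u(m^*(x_n))\in[u(\underline m),u(\overline m)]$ is bounded; pass to a subsequence with $u(m^*(x_{n_j}))\to t$. Since $u(M)$ is connected in $\Re$ (continuous image of the connected $M$) and contains $u(\underline m)$ and $u(\overline m)$, there is $m_0\in M$ with $u(m_0)=t$. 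I then claim $m_j:=m^*(x_{n_j})\to m_0$: given a neighborhood $U$ of $m_0$, the second property of $M$ supplies $p,q\in M$ with $m_0\in[p,q]\subseteq U$, where $p<m_0$ unless $m_0$ is the least element of $M$ and $m_0<q$ unless $m_0$ is the greatest; on whichever side the inequality is strict we get $u(p)<t$ (resp.\ $t<u(q)$), and since $u(m_j)\to t$ with $u$ strictly increasing on the totally ordered $M$, eventually $p\le m_j\le q$ (on a non-strict side, say $q=m_0$, one uses $m_j\le m_0$ automatically, and symmetrically when $p=m_0$), so $m_j\in U$. Finally $x_{n_j}\to x$, $m_j\to m_0$, and $x_{n_j}\sim m_j$, so closedness of $\succeq$ gives $x\succeq m_0$ and $m_0\succeq x$, i.e.\ $x\sim m_0$; by uniqueness $m_0=m^*(x)$, whence $t=u(m_0)=u_\succeq(x)$. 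This establishes $u_\succeq(x_n)\to u_\succeq(x)$. The points requiring care are exactly these endpoint cases (when $m_0$ is an extreme element of $M$) and a one-line check that ``bounded'' in the third property of $M$ covers constant and convergent sequences.
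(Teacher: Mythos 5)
Your proposal is correct, and its existence, uniqueness, and representation steps are essentially the paper's: connectedness of $M$ plus the order bounds from the third property give existence, total order plus strict monotonicity give uniqueness, and transitivity plus monotonicity give the representation. Where you genuinely diverge is the continuity step. The paper argues directly at the known target $\hat m=m^*(x^*)$: it takes $\ul m<\hat m<\overline m$ in $M$ with $[\ul m,\overline m]\subseteq U$, observes that $V=\{z:\overline m\succ z\}\cap\{z:z\succ \ul m\}$ is an open neighborhood of $x^*$ (by continuity and monotonicity of $\succeq$), and traps $m^*(x_n)\in[\ul m,\overline m]$ for large $n$, handling the cases where $\hat m$ is an extreme element of $M$ separately; continuity of $u_\succeq=u\circ m^*$ then follows from continuity of $u$. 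You instead work in the value space: bound $u(m^*(x_n))$ via the third property applied to the convergent sequence, extract a convergent subsequence, use connectedness of $u(M)$ to realize the limit as $u(m_0)$, show $m^*(x_{n_j})\to m_0$ via the second property and strict increasingness of $u$ on $M$, and only then identify $m_0=m^*(x)$ through the closed graph of $\succeq$ and uniqueness. Both are valid and both must treat the endpoint cases of $M$ the same way; the paper's route is shorter and needs only the second property for this step, while yours is a closed-graph-style argument that does not presuppose which point the certainty equivalents converge to (at the cost of invoking metrizability for sequential continuity, connectedness of $u(M)$, and the boundedness property for the whole sequence, a reading of ``bounded'' you rightly flag and which the paper itself needs for existence).
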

\begin{proof}
   Consider the sets  $A=\{m\in M: m\succeq x\}$ and $B=\{m\in M: x\succeq
   m\}$. These sets are closed because $\succeq$ is continuous,
   their union is $M$ as $\succeq$ is complete, and they are nonempty
   as $\succeq$ is monotone and there exist $\ul m,\overline m\in M$ with $\ul m\leq x \leq \overline m$
   by our hypothesis on $M$.
   $M$ is connected, so $A$ and $B$  cannot be disjoint; hence there is $m\in
   M$ with $x\sim m$. This $m$ must be unique because $M$ is totally
   ordered, and $\succeq$ is strictly monotone.

  We now show that $u_\succeq$ is a continuous utility representation of
  $\succeq$. Let $x\succeq y$. Then transitivity and monotonicity of
  $\succeq$ imply that $m^*(x)\geq m^*(y)$. Thus $u_\succeq(x)=u^*(m^*(x))\geq
  u^*(m^*(y))=u_\succeq(y)$. The converse implications hold as well; thus $u_\succeq$
  represents $\succeq$.

  To prove  continuity, let $x_n\rightarrow x^*$. We shall prove that
  $m_n=m^*(x_n)\rightarrow m^*(x^*) = \hat m$. Suppose first that $\hat m$ is
  not the largest or the least element of $M$. For each neighborhood $U$ of
  $\hat m$ there exists, by our hypothesis on $M$, $\ul m,\overline m\in M$
  with $\ul m<\hat m<\overline m$ and $[\ul m,\overline m]\subseteq U$. Then \[ V=
  \{z\in X : \overline m\succ z \} \cap \{z\in X : z \succ \ul m \}
  \] is a neighborhood of $x^*$, as $x^*\sim \hat m$ and $\succeq$ is
  continuous and monotone. For large enough $n$ then $x^n\in V$, so
  $m^n\in [\ul m,\overline m]\subseteq U$.
  Suppose now that $\hat m$ is the largest element of $M$. Then,
  reasoning as above,
  $x^n\in \{z\in X : z \succ \ul m \}$ for all large enough $n$, so that
  $\ul m\leq m^n$. We have $m^n\leq \overline m$ as $\overline m$ is the largest
  element of $M$. Thus $m^n\in [\ul m,\overline m]\subseteq U$. The argument
  when $\ul m$ is the least element of $M$ is analogous.
\end{proof}

We now turn to the main proof of the theorem, which proves Proposition~\ref{prop:lowersemicontinuity}.

\begin{proof}[Proof of Theorem~\ref{thm:openmap}]
Let  $u^*\in \mathcal{U}$ and $\{\succeq_n\}$ be a sequence in $\R$
with  $\succeq_n \rightarrow \Phi(u^*)$. By Lemma~\ref{lem:openmap1}
it is enough to exhibit a sequence $u_n\in\Phi^{-1}(\succeq_n)$ and
$u_n \rightarrow u^*$ in the topology of compact convergence.

Let $u_n = u_{\succeq_n}$ as defined in Lemma~\ref{lem:openmap2} from
$u^*$. Lemma~\ref{lem:openmap2} implies that
$u_n\in\Phi^{-1}(\succeq_n)$.
By XII Theorem 7.5 p. 268 of \citet{Dugundji}, to establish compact
convergence it is enough to show that for any convergent sequence
$\{x_n\}$, with $x_n\rightarrow x^*$, $u_n(x_n)\rightarrow u^*(x^*)$.

To this end, let $x_n\rightarrow x^*$. Let $\hat m = m^*(x^*)$ and
$m_n \sim_n x_n$, using the notation in Lemma~\ref{lem:openmap2}, and  $U$ be a neighborhood
of $\hat m$. Let $\ul m,\overline m\in M$ be such that $\ul m < \hat m <
\overline m$ and $ [\ul m,\overline m]\subseteq U$. Then it must be true that $m^n\in [\ul m,\overline
m]$ for all $n$ large enough. To see this, note that if, for example,
$m^n\geq \overline m$ infinitely often then there would exist a subsequence
for which $x^n \succeq^n m^n\succeq \overline m$ (by monotonicity of
$\succeq$), which would imply that $x^*\succeq \overline m > \hat m$, as
$\succeq_n\rightarrow \succeq$. But $\hat m \sim x^*\succeq \overline m$ is
a  violation of monotonicity.

Now $m^n\in [\ul m, \overline m]\subseteq U$ for all $n$ large enough means
that $m^n\rightarrow \hat m$. Thus \[u_{\succeq_n} (x_n) = u^* (x_n)
\rightarrow u^*(x^*) = u_{\succeq} (x^*),\] as $u^*$ is continuous.
\end{proof}

\section{Proof of Theorem~\ref{thm:expectedutility}}

%We first show that each member of $\EU$ is locally strict.  Let $\succeq\in\EU$. Then there are $p,q\in\Delta(X)$ such that $p \succ q$.  If $r \succeq s$, then for any $\alpha > 0$, $\alpha p + (1-\alpha)r \succ \alpha q + (1-\alpha)s$.

%Next, we establish that $\EU$ is compact in the topology of closed convergence.  Consider the set $\EUU=\{u\in\Re^X: \sum_k u_k = 0\mbox{ and }\|u\|=1\}$.  Clearly, $\EUU$ is homeomorphic to the $|X|$ sphere and is hence compact.  Each $\succeq\in\EU$ is also clearly represented by a unique element of $\EUU$.  Finally, then, $\Phi:\EUU\rightarrow\EU$, the map which carries each expected utility function to its preference is continuous.\footnote{This follows from Theorem 8 of \citet{border1994dynamic}, but in this environment it is also straightforward to see directly.}  Hence $\EU$ is compact.

By Theorem 8 of \citet{border1994dynamic}, $\Phi(\calv)$ is compact, and therefore $\succeq_k$ possesses a limit point $\succeq^*\in\Phi(\calv)$.  By Lemma~\ref{lem:inclusion}, the set of $\succeq_k$ weakly rationalizing $c^k$ is closed, and hence compact.  Suppose by means of contradiction that there is some $\succeq'_k$ also weakly rationalizing $c^k$ which converges to $\succeq\neq\succeq^*$.  Observe that each of $\succeq^*$ and $\succeq$ weakly rationalize each $c^k$.

Finally, let $\succeq'$ be any complete relation such that for all $(a,b)\in B\times B$, $a \succeq' b$ if and only if $a\in c^k(\{a,b\})$, for $k$ such that $\{a,b\}\in\Sigma^k$.  Then, by definition of weak rationalization, we have $\succeq'_{B\times B}\subseteq \succeq_{B\times B}\cap \succeq^*_{B\times B}$.  Appeal then to Lemma~\ref{lem:partialobservability} to conclude that $\succeq=\succeq^*$, a contradiction.

%Hence $\succeq^k$ possesses a limit point in $\Phi(\calv)$.

%Let $\EU^n\subset \EU$ weakly rationalize $c^n$.  By Lemma~\ref{lem:partialobservability}, $\bigcap_n \EU^n = \succeq^*$.  By Lemma~\ref{lem:inclusion}, $\EU^n$ is compact.  Hence, any selection $\succeq^n\in\EU^n$ satisfies $\succeq^n \rightarrow \succeq^*$.

\end{document}